\documentclass[journal]{IEEEtran}
%
\usepackage[T1]{fontenc}
\usepackage{cite}
\usepackage{amsmath,amssymb,amsfonts}
\usepackage{algorithmic}
\usepackage{algorithm}
\usepackage{graphicx}
\usepackage{textcomp}
\usepackage{xcolor}
\usepackage{color}
\usepackage{amsthm}
\usepackage{cite}
\usepackage{caption}
\usepackage{booktabs}
\usepackage{makecell}
\usepackage{threeparttable}
\usepackage{subfig}
\usepackage[numbers,sort&compress]{natbib}
\begin{document}

\title{Communication-Efficient Design for Quantized Decentralized Federated Learning}

	\author{Li Chen,~\IEEEmembership{Senior Member,~IEEE}, Wei Liu, Yunfei Chen,~\IEEEmembership{Senior Member,~IEEE}, and Weidong Wang       
		
		\thanks{
			
			Li Chen, Wei Liu and Weidong Wang are with Department of Electronic Engineering and Information Science, University of Science and Technology of China, Hefei 230027, China
			(e-mail: chenli87@ustc.edu.cn, liuwei93@mail.ustc.edu.cn, wdwang@ustc.edu.cn).	
			
			Yunfei Chen is with Department of Engineering, University of Durham, South Road, Durham, DH1 3LE, UK (e-mail: yunfei.chen@durham.ac.uk).
	}}


\markboth{Journal of \LaTeX\ Class Files,~Vol.~14, No.~8, August~2021}%
{Shell \MakeLowercase{\textit{et al.}}: A Sample Article Using IEEEtran.cls for IEEE Journals}


\maketitle

\begin{abstract}
Decentralized federated learning (DFL) is a variant of federated learning, where edge nodes only communicate with their one-hop neighbors to learn the optimal model. However, as information exchange is restricted in a range of one-hop in DFL, inefficient information exchange leads to more communication rounds to reach the targeted training loss. This greatly reduces the communication efficiency. In this paper, we propose a new non-uniform quantization of model parameters to improve DFL convergence. Specifically, we apply the Lloyd-Max algorithm to DFL (LM-DFL) first to minimize the quantization distortion by adjusting the quantization levels adaptively. Convergence guarantee of LM-DFL is established without convex loss assumption. Based on LM-DFL, we then propose a new doubly-adaptive DFL, which jointly considers the ascending number of quantization levels to reduce the amount of communicated information in the training and adapts the quantization levels for non-uniform gradient distributions. Experiment results based on MNIST and CIFAR-10 datasets illustrate the superiority of LM-DFL with the optimal quantized distortion and show that doubly-adaptive DFL can greatly improve communication efficiency. 
	
\end{abstract}

\begin{IEEEkeywords}
Decentralized federated learning, doubly-adaptive quantization, Lloyd-Max quantizer.
\end{IEEEkeywords}

\section{Introduction}

Due to the explosively growing number of computational devices and the rapid development of social networking applications, the amount of terminal data generated will exceed the capacity of the Internet in the near future \cite{chiang2016fog}. This inspires data-intensive machine learning to sufficiently utilize the immense amount of data. As a distributed prototype of data-intensive machine learning, federated learning (FL) shows great promise by leveraging local resources to train statistical models directly. Recently, FL has shown significant practical values in many fields, e.g., natural language processing \cite{hard2018federated},  vehicle-to-vehicle communications \cite{samarakoon2019distributed} and computer vision \cite{liu2020federated}, etc.

    FL has two different frameworks, centralized and decentralized. Centralized FL (CFL) uses a central server to aggregate the distributed local models, while decentralized FL (DFL) exchanges models among peer nodes by model gossip.
    They are both operated by local updates and aggregations, where nodes perform stochastic gradient descent (SGD) to optimize the model with local dataset and periodical aggregations are conducted to fuse local models. CFL was first proposed in \cite{mcmahan2017communication} for communication-efficient learning and preventing privacy leakage of user data. The work in \cite{liu2020accelerating} proposed an acceleration method for CFL by momentum gradient descent.  Compared to CFL which requires a central server for data aggregation, DFL can avoid the bottleneck of the central server by utilizing inter-node communications but this increases the communication costs of peer nodes \cite{9154332}.

	In order to cope with the increasing communication costs from the large scale training model in CFL, several methods have been studied. One method for communication efficiency is lossy compression of the gradients or models. There are a lot of approaches to achieve the lossy compression from drastic compression to gentle quantization. The work in \cite{ang2020robust} studied the noise effect during model transmission in wireless communication.
    A normalized averaging method was proposed in \cite{wang2020tackling} to eliminate objective inconsistency while preserving fast error convergence. As an extreme compression method, 1bitSGD ignores all the amplitude informations and only retains the sign information of gradient \cite{seide20141}. Although empirical convergence can be achieved under different experimental conditions, its theoretical convergence guarantee has not been established. TernGrad compressed gradients into three numerical levels can encode gradients into 2 bits \cite{wen2017terngrad}. It is unable to achieve a deterministic convergence guarantee. Another way to compress the gradient is to transmit sparse gradients. Top-$k$ method was studied in \cite{stich2018sparsified} where the larger elements of gradients were retained while getting rid of the smaller elements. In \cite{vargaftik2022eden}, the authors provided a robust distributed mean estimation technique that naturally handles heterogeneous communication budgets and packet losses.
    
	Another method for communication efficiency is quantization. QSGD proposed in \cite{alistarh2017qsgd} is an uniform quantization method, which quantizes the elements of gradients with uniform level distribution and achieves the unbiasedness.
	IntSGD avoids communicating float by quantizing floats into an integer \cite{mishchenko2021intsgd}. Nonuniform quantization was studied in \cite{horvoth2022natural} by considering nonuniform distributions of gradients elements. The work in \cite{jhunjhunwala2021adaptive} studied adaptive gradient quantization by changing the sequence of quantization levels and the number of levels. Reference \cite{faghri2020adaptive}
	optimized quantization distortion using coordinate descent. A communication efficient federated learning framework based on  quantized compressed sensing was first proposed in \cite{oh2022communication}. 

    For the DFL, the bottleneck of the communication cost becomes more serious due to the limited communication resources at the clients. 
    Gradient compression and quantization play a critical role in DFL. The work in \cite{tang2018communication} studied compressed communication of DFL. Differential compression was adopted.
	The work in \cite{koloskova2019decentralized} proposed CHOCO-SGD algorithm and established convergence guarantees under strong convexity and non-convexity conditions, respectively. CHOCO-SGD based on DFL with multiple local updates and multiple communications was studied in \cite{liu2022decentralized}. The work in \cite{tang2019deepsqueeze} considered the error-compensated compression in DFL. 
    The work in \cite{lalitha2018fully} allowed the individual users only sample points from small subspaces of the input space.
    Quantized decentralized gradient descent was proposed in \cite{reisizadeh2019exact} by adopting an exponential interval for the quantization levels. Randomized compression and variance reduction were adopted in \cite{kovalev2021linearly} to achieve linear convergence.

	However, none of the aforementioned compression and quantization methods in both CFL and DFL has considered either adaptive quantizer to match the time-varying convergence rate or variable gradient distribution to minimize quantization distortion. Motivated by this observation, we propose a doubly-adaptive DFL framework where both the quantization levels and the number of quantization levels are jointly adapted. In order to optimize the quantization distortion under variable distribution of gradient, we design a DFL framework with Llyod-Max quantizer (LM-DFL) to minimize 
	quantization distortion. We derive a general convergence bound for the quantized DFL frameworks. Then we give the quantization distortion and establish the global convergence guarantee of LM-DFL without assuming any convex loss function.
	In order to reduce communicated bits for efficiency, we derive the optimal number of quantization levels, which shows DFL with the ascending number of quantization levels can converge with the minimal communicated bits. Combined adaptive quantization levels with the adaptive number of quantization levels, we propose doubly-adaptive DFL to improve communication efficiency significantly from two directions, i.e., the number of quantization levels over iterations and the distribution of quantization levels over axis. The main contributions of this work are summarized as follows:
	\begin{itemize} 
		\item \textbf{Developing LM vector quantizer for DFL:} The scalar LM quantizer cannot be applied to the DFL directly. Thus, we design the LM vector quantizer for inter-node communication of DFL, which minimize quantization distortion given the number of the quantization levels.
		
		\item \textbf{Convergence with quantization distortion:}
        We establish a general convergence bound of DFL with any quantization or compression operators without convex loss function. And the results are extended to the proposed LM-DFL to provide its convergence performance.

		
		\item \textbf{Proposed doubly-adaptive algorithm for DFL:} Based on the convergence bound of the LM-DFL, the doubly-adaptive algorithm is proposed to adapt both the quantization levels and the number of quantization levels. It optimize the convergence performance through matching both the time-varying convergence rate and the variable gradient distribution.

		
		
	\end{itemize} 

The remainder part of this paper is organized as follows. We introduce the system model of DFL in Section \uppercase\expandafter{\romannumeral2}. 
The design of LM-DFL is described in detail in Section \uppercase\expandafter{\romannumeral3} and the convergence analysis of LM-DFL is presented in Section \uppercase\expandafter{\romannumeral4}. Then doubly-adaptive DFL and its optimal number of quantization levels are discussed in Section \uppercase\expandafter{\romannumeral5}. Simulation and discussion are presented in Section \uppercase\expandafter{\romannumeral6}. The conclusion is given in Section \uppercase\expandafter{\romannumeral7}.

\subsection{Notations}
In this paper, we use $\textbf{1}$ to denote vector $\left[1, 1, ..., 1\right]^\top$, and define
the consensus matrix $\textbf{J}\triangleq \textbf{1}\textbf{1}^\top/(\textbf{1}^\top \textbf{1})$, which means under the topology represented by $\textbf{J}$, DFL can realize model consensus. All vectors in this paper are column vectors. We assume that the decentralized network contains $N$ nodes.
So we have $\textbf{1}\in \mathbb{R}^N$ and $\textbf{J}\in \mathbb{R}^{N\times N}$. 
We use $\left\|\cdot\right\|$, $\rm\left\|\cdot\right\|_{F}$, $\rm\left\|\cdot\right\|_{op}$ and $\|\cdot\|_2$ to denote the $\l_2$ vector norm, the Frobenius matrix norm, the operator norm and the $l_2$ matrix norm, respectively.

\section{System Model}

\subsection{Learning Problem}
Consider a general system model as illustrated in Fig. \ref{fig1}.
The system model is made of 
$N$ edge nodes.
The $N$ nodes have distributed datasets $\mathcal{D}_1,\mathcal{D}_2,...,\mathcal{D}_i,...,\mathcal{D}_N$ with $\mathcal{D}_i$ owned by node $i$. 
We use $\mathcal{D}=\{\mathcal{D}_1, \mathcal{D}_2,
..., \mathcal{D}_N\}$ to denote the global dataset of all nodes.
Assuming $\mathcal{D}_i\cap\mathcal{D}_j=\emptyset$ for $i\neq j$, we define $D\triangleq|\mathcal{D}|$ and $D_i\triangleq|\mathcal{D}_i|$ for $i=1,2,...,N$, where $|\cdot|$ denotes the size of set, and $\textbf{x}\in\mathbb{R}^{d}$
denotes the model parameter. 

\begin{figure}[!t]
	\centering
	\includegraphics[scale=0.55]{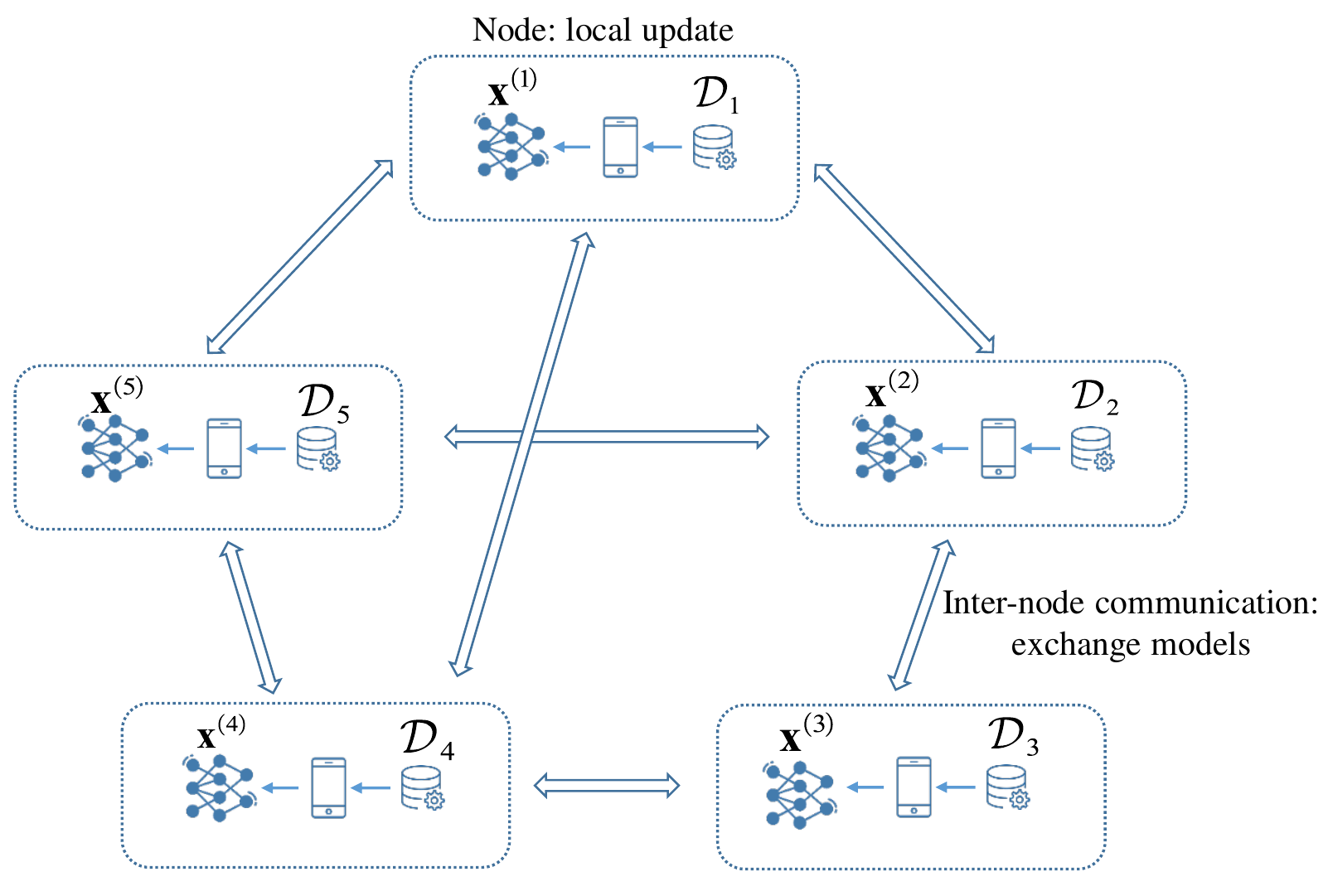}        
	\caption{An example of DFL system model with a topology of 5 nodes.}
	\label{fig1}
\end{figure}

The loss function of the decentralized network is defined as follows. Let $s_j$ denote the $j$-th data sample of dataset $\mathcal{D}_i$.
The same learning model skeletons are embedded at different nodes. 
Denote $f(\cdot)$ as the loss function for one data sample. 
Loss function for the local dataset $\mathcal{D}_i$ and the global dataset $\mathcal{D}$, denoted by $F_i(\cdot)$ and $F(\cdot)$, are
\begin{align*}
&F_i(\textbf{x})\triangleq\frac{1}{D_i}\sum_{j\in \mathcal{D}_i}f(\textbf{x},s_j),\\
&F(\textbf{x})\triangleq\frac{1}{D}\sum_{i=1}^{N}D_i F_i(\textbf{x}).
\end{align*}

In order to learn about the characteristic information hidden in the distributed datasets,
DFL needs to obtain the optimal model parameters at all nodes. 
The targeted optimization problem is the minimization of the global loss function to get the optimal parameter $\textbf{x}^*$ as follows:
\begin{align}
\label{min_problem}\textbf{x}^*=\mathop {\arg \min}_{\textbf{x}\in\mathbb{R}^d} F(\textbf{x}).
\end{align}
Because of the intrinsic complexity of most machine learning models and common training datasets, finding a closed-form solution to the optimization problem \eqref{min_problem} is usually intractable.
Thus, gradient-based methods are generally used to solve this problem.
Let $\textbf{x}_k$ denote the model parameter at the $k$-th iteration step, $\eta$ denote the learning rate, $\xi^{(i)}_{t}\subset\mathcal{D}_i$ denote the mini-batches randomly sampled on $\mathcal{D}_i$, and $$\widetilde{\nabla} f_i(\textbf{x},\xi^{(i)})\triangleq\frac{1}{|\xi^{(i)}|}\sum_{s_j\in\xi^{(i)}}\nabla f(\textbf{x},s_j)$$ be the stochastic gradient of node $i$.
The iteration of a gradient-based method can be expressed as
\begin{align}
\label{update_rule}\textbf{x}_{t+1}=\textbf{x}_{t}-\eta\left[\frac{1}{N}\sum_{i=1}^{N}\widetilde{\nabla} f_i(\textbf{x}_t,\xi^{(i)}_t)\right].	
\end{align}
We will use $\widetilde{\nabla} f_i(\textbf{x})$ to substitute $\widetilde{\nabla} f_i(\textbf{x},\xi^{(i)})$ in the remaining part of the paper for convenience.

Finally, the targeted optimization problem can be collaboratively solved by local update and model averaging. In the local update step, each node uses the gradient-based method to update its local model.
In the model averaging step, each node uses inter-node communication to communicate its local model parameters to its neighboring nodes simultaneously.

\subsection{Learning Strategy}
We introduce the DFL learning strategy in details as follows. In each iteration $k$, the learning strategy consists of two steps, local update and inter-node communication (communication with neighboring nodes). In local update stage, each node computes the current gradient of its local loss function and updates the model parameter by performing SGD multiple times in parallel.  After local update, each node performs inter-node communication when each node sends its local model parameters to the nodes connected with it, as well as receives the parameters from its neighboring nodes within one hop. 
Then a weighted model averaging of the received parameters based on contribution of connection (the contribution of the neighboring nodes to this node) is performed by the node to obtain the updated parameters. The whole learning process iterates between the two stages, as $k$ increases 1 by 1.

Define $\textbf{C}\in\mathbb{R}^{N\times N}$ as the confusion matrix which captures the network topology, and it is doubly stochastic, i.e., $\textbf{C}\textbf{1}=\textbf{1},\textbf{C}^{\top}=\textbf{C}$. Its element
$c_{ji}$ denotes the contribution of node $j$ in model averaging at node $i$. If there is no communication between two nodes $i$ and $j$, the corresponding elements $c_{ji}$ and $c_{ij}$ equal to zero.  
In each iteration $k$, DFL performs $\tau$ local updates in parallel at each node during the local update. After $\tau$ local updates, each node needs to perform inter-node communication. 
In Fig. \ref{fig2}, we illustrate the learning strategy of DFL. 
We use $\textbf{x}^{(i)}_{k}$  to denote the model parameter after the weighted averaging of inter-node communication (or the initial model parameter of the local update stage in iteration $k$), and use $\textbf{x}^{(i)}_{k,t}$ to denote the model parameter of the $t$-th local updates in iteration $k$ with $t=0,1,2,...,\tau-1$. The calculated stochastic gradient of node $i$ at the $t$-th local update in iteration $k$ is $\widetilde{\nabla} f_i(\textbf{x}^{(i)}_{k,t})$. We set $\textbf{x}^{(i)}_{k,0}=\textbf{x}^{(i)}_{k}$ as the initial value of a new local update.

The learning strategy of DFL can be described as follows.

Local update: In the $k$-th iteration, for local update index $t=0,1,2,...,\tau-1$, local updates are performed at each node in parallel. At node $i$, the updating rule can be expressed as 
\begin{align}\label{localupdate}
\textbf{x}_{k,t+1}^{(i)}=\textbf{x}_{k,t}^{(i)}-\eta \widetilde{\nabla} f_i(\textbf{x}_{k,t}^{(i)}),
\end{align}
where $\eta$ is the learning rate.
Note that each node performs SGD individually using its local dataset at the same rate.

Inter-node communication: After $\tau$ local updates are finished, each node $i$ performs inter-node communication. They 
communicate with their one-hop nodes to exchange the model parameter. After the parameters from all the connected nodes are received, a weighted model averaging
is performed by
\begin{align}\label{internodecom}
\textbf{x}_{k+1}^{(i)}=\sum_{j=1}^{N}c_{ji}\textbf{x}_{k,\tau}^{(j)}.
\end{align}

For convenience, we rewrite the learning strategy into matrix form. In iteration $k$, we use $\textbf{X}_k$ to denote \textit{the initial model parameter matrix}. In the $t$-th local update of iteration $k$, $\textbf{X}_{k,t}$ denotes \textit{the model parameter matrix of local update}. Note that $\textbf{X}_k,\textbf{X}_{k,t}\in \mathbb{R}^{d\times N}$. 
Then 
\begin{align}
\label{X_k}\textbf{X}_k&\triangleq\left[\textbf{x}^{(1)}_{k}\ \textbf{x}^{(2)}_{k} \ ...\ \textbf{x}^{(N)}_{k}\right],\\
\label{X_kt}\textbf{X}_{k,t}&\triangleq\left[\textbf{x}^{(1)}_{k,t}\ \textbf{x}^{(2)}_{k,t}\ ...\ \textbf{x}^{(N)}_{k,t}\right].
\end{align}
Each component of the difference matrix $\textbf{X}_{k,\tau}-\textbf{X}_{k}$ is $\textbf{x}_{k,\tau}^{(i)}-\textbf{x}_{k}^{(i)}$, which can be expressed as
\begin{align*}
\textbf{x}_{k,\tau}^{(i)}-\textbf{x}_{k}^{(i)}=-\eta\sum_{t=0}^{\tau-1}\widetilde{\nabla} f_i(\textbf{x}_{k,t}^{(i)})
\end{align*}
from \eqref{localupdate}. In fact, each node $i$ in DFL transmits the accumulated gradient in the local update step to its neighbors. This is similar to data-parallel SGD \cite{bekkerman2011scaling,chilimbi2014project,recht2011hogwild,chaturapruek2015asynchronous}, which is a synchronous and distributed framework. Each processor aggregates the value of a global $\textbf{x}_k$ and uses $\textbf{x}_k$
to obtain the local gradients, and communicates these updates to all peers for the next global model $\textbf{x}_{k+1}$.

\begin{figure}[!t]
	\centering
	\includegraphics[scale=0.3]{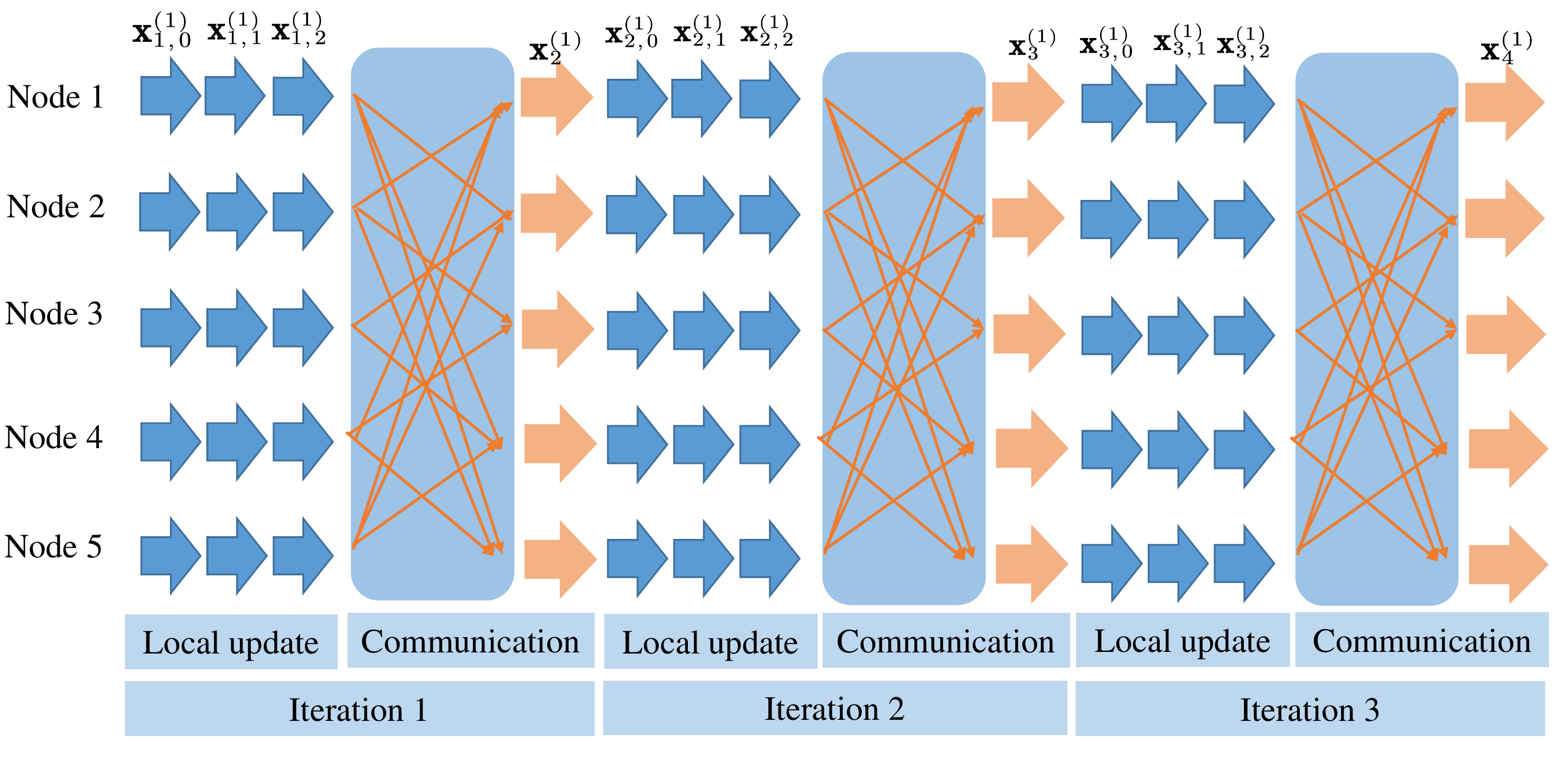}        
	\caption{An example of DFL learning strategy with $\tau=3$ based on the system model in Fig. \ref{fig1}.}
	\label{fig2}
\end{figure}

Define the stochastic gradient matrix $\textbf{G}_{k,t}\in \mathbb{R}^{d\times N}$ in $t$-th local update of iteration $k$ as 
\begin{align}
\label{G_kt}\textbf{G}_{k,t}\triangleq\left[ \widetilde{\nabla} f_1(\textbf{x}_{k,t}^{(1)})\ \widetilde{\nabla} f_2(\textbf{x}_{k,t}^{(2)})\ ... \ \widetilde{\nabla} f_N(\textbf{x}_{k,t}^{(N)}) \right].
\end{align}
The learning strategy above can be simplified, i.e.,

Local update:
\begin{align}
\label{matrixlocalupdate-DFL}\textbf{X}_{k,t+1}=\textbf{X}_{k,t}-\eta\textbf{G}_{k,t},t=0,1,...,\tau-1.
\end{align}

Inter-node communication:
\begin{align}
\label{matrixupdate}\textbf{X}_{k+1}=\textbf{X}_{k,\tau}\textbf{C}.
\end{align}


\section{LM-DFL Design}
In this section, we propose the LM-DFL design. We will first design the Lloyd-Max (LM) vector quantizer to minimize the quantization distortion for a fixed number of quantization levels. Then we will design LM-DFL to solve the learning problem expressed in \eqref{min_problem} by minimizing the quantization distortion to improve convergence performance.

\subsection{Introduction of Quantizers}
For any vector $\textbf{v}\in \mathbb{R}^d$, we use $v_i$ to denote the $i$-th element of  $\textbf{v}$, and define $r_i\triangleq\frac{|v_i|}{\|\textbf{v}\|}$, where $r_i\in[0,1]$. For any scalar $x$, the function $\textup{sign}(x)=\{-1,+1\}$ denotes its sign, with $\textup{sign}(0)=1$.  

Now we consider a general quantizer $Q(\textbf{v})$ for quantizing the vector $\textbf{v}$. The quantizer $Q(\textbf{v})$ quantizes the $i$-th element of $\textbf{v}$ into $h(v_i)$. We define the quantizer $Q(\textbf{v})$ and $h(v_i)$ as,
\begin{align}
\label{Qv}Q(\textbf{v})&\triangleq [h(v_1)\  h(v_2)\ ...\ h(v_d)]^\top,\\
\label{q}h(v_i)&\triangleq\|\textbf{v}\|\cdot \textup{sign}(v_i)\cdot q(r_i),
\end{align}
respectively. The function $q(r_i): [0,1]\to[0,1]$ in \eqref{q} denotes the scalar quantizer. Quantizer $q(r_i)$ uses $s$ adaptable quantization levels, i.e., $\boldsymbol{\ell}=[\ell_1, \ell_2,..,\ell_s]$, with $\ell_j\in[0,1], j=1,2,...,s$. 

Given the vector quantizer $Q(\textbf{v})$, we discuss the total number of bits to encode it. Obviously, one bit is needed for $\textup{sign}(v_i)$ and the scalar $\|\textbf{v}\|_2$ is encoded with full precision, with $32$ bits. We also need $\lceil \log_2 s \rceil$ bits to represent the scalar quantizer $q(r_i)$. We use $C_s$ to denote the total number of bits for encoding $Q(\textbf{v})$. Therefore, $C_s$ can be given by
\begin{align}
\label{numberofbits} C_s=d\lceil \log_2 s \rceil+d+32.
\end{align}
It means that transmitting a single gradient/weight once from one node to a neighbor needs to consume $C_s$ bits in a DFL framework.

\subsection{Existing Quantizers}

We introduce the quantizers of QSGD \cite{alistarh2017qsgd}, natural compression \cite{horvoth2022natural} and ALQ \cite{faghri2020adaptive}.

\subsubsection{QSGD}
QSGD is an uniform and unbiased quantizer. Let $Q_{s}(\cdot)$ denote the vector quantizer and $q_{s}(\cdot)$ denote the scalar quantizer. Considering a vector $\textbf{v}\in \mathbb{R}^d$, then 
\begin{align*}
Q_s(\textbf{v})&= [h(v_1)\  h(v_2)\ ...\ h(v_d)]^\top,\\
h(v_i)&=\|\textbf{v}\|\cdot \textup{sign}(v_i)\cdot q_s(r_i).
\end{align*}
The quantization levels are uniformly distributed with $\boldsymbol{\ell}=[0, 1/s, 2/s,..., (s-1)/s,1]$. When $r\in(j/s,(j+1)/s]$, the scalar quantizer $q_s(\cdot)$ is expressed by
\begin{equation*}
q_s(r)=
\begin{cases}
j/s,& \mbox{with probability}\ j+1-sr \\
(j+1)/s,&\mbox{with probability}\ sr-j
\end{cases}
\end{equation*}
which satisfies $\mathbb{E}[q_s(r)]=r$. Hence $\mathbb{E}[Q_s(\textbf{v})]=\textbf{v}$ can be obtained easily, which shows the unbiasedness of QSGD.
The distortion bound is $\mathbb{E}[\|Q_s(\textbf{v})-\textbf{v}\|^2]\leq\min(d/s^2,\sqrt{d}/s)\|\textbf{v}\|^2$ \citep[Lemma 3.1]{alistarh2017qsgd}.

\subsubsection{Natural Compression}
Natural compression is an unbiased and nonuniform quantizer. let $Q_n(\cdot)$ denote the vector quantizer and $q_{n}(\cdot)$ denote the scalar quantizer. For a vector $\textbf{v}\in \mathbb{R}^d$, 
\begin{align*}
Q_n(\textbf{v})&= [h(v_1)\  h(v_2)\ ...\ h(v_d)]^\top,\\
h(v_i)&=\|\textbf{v}\|\cdot \textup{sign}(v_i)\cdot q_n(r_i).
\end{align*}
The quantization levels are set as a binary geometric partition: $\boldsymbol{\ell}=[\ell_s=0,\ell_{s-1},...,\ell_1,\ell_0=1]=[0,2^{1-s},2^{2-s},...,2^{-1},1]$. When $r\in[\ell_{j+1},\ell_{j}]$ for $j\in\{0,1,...,s-1\}$, the scalar quantizer 
$q_n(r)$ is a random variable, presented as 
\begin{align*}
q_n(r)=
\begin{cases}
\ell_j,& \mbox{with probability}\ \frac{r-\ell_{j+1}}{\ell_j-\ell_{j+1}} \\
\ell_{j+1},&\mbox{with probability}\ \frac{\ell_{j}-r}{\ell_j-\ell_{j+1}}
\end{cases}
\end{align*}
which satisfies $\mathbb{E}[q_n(r)]=r$. Further $\mathbb{E}[Q_n(\textbf{v})]=\textbf{v}$ can be obtained easily.
The distortion bound is $\mathbb{E}[\|Q_n(\textbf{v})-\textbf{v}\|^2]\leq 1/8+\min(\sqrt{d}/2^{s-1},d/2^{2(s-1)})$, according to \citep[Theorem 7]{horvoth2022natural}.

\subsubsection{ALQ}
ALQ is an adaptive and unbiased quantizer considering gradient distribution. For a vector $\textbf{v}\in \mathbb{R}^d$, ALQ's level partition is 
$0=\ell_0<\ell_1<\cdots<\ell_s<\ell_{s+1}=1$ and $\boldsymbol{\ell}=[\ell_0,\ell_1,...,\ell_{s+1}]$ with adaptable quantization levels. The vector and scalar quantizers of ALQ are denoted as $Q_a(\cdot)$ and $q_a(\cdot)$ respectively.
For a vector $\textbf{v}\in \mathbb{R}^d$, 
\begin{align*}
Q_a(\textbf{v})&= [h(v_1)\  h(v_2)\ ...\ h(v_d)]^\top,\\
h(v_i)&=\|\textbf{v}\|\cdot \textup{sign}(v_i)\cdot q_a(r_i).
\end{align*}
When $r\in[\ell_j,\ell_{j+1}]$ for $j\in \{0,1,...,s\}$, 
\begin{align*}
q_a(r)=
\begin{cases}
\ell_j,& \mbox{with probability}\ \frac{\ell_{j+1}-r}{\ell_{j+1}-\ell_j} \\
\ell_{j+1}.&\mbox{with probability}\ \frac{r-\ell_j}{\ell_{j+1}-\ell_j}
\end{cases}
\end{align*}
Therefore, ALQ satisfies $\mathbb{E}[Q_a(\textbf{v})]=\textbf{v}$.

In order to optimize the normalized quantization distortion in \eqref{mindistortion}, ALQ uses coordinate descent over the training process by 
\begin{align*}
\ell_j\!(k\!+\!1)\!\!=\!\!\Phi^{-1}\!\!\left(\!\!\Phi(\ell_{j+1}(k))\!-\!\!\!\int_{\ell_{j\!-\!1}(k)}^{\ell_{j\!+\!1}(k)}\!\!\!\!\frac{r-\ell_{j-1}(k)}{\ell_{j+1}(k)-\ell_{j-1}(k)}\!d\Phi(r)\!\!\right)\\\quad \forall j=1,...,s
\end{align*} 
where $\Phi(\cdot)$ is the cumulative distribution function of gradient.

QSGD and natural compression quantizers cannot match a dynamic distribution of model parameters. This causes high quantization distortion. On the other hand, coordinate descent of ALQ updates quantization levels during iterations, but is only asymptotically optimal.

\subsection{LM Quantizer}
In this subsection, we design LM quantizer that dynamically changes quantization levels to minimize quantization distortion during each iteration over the training process.

We first give the definition of quantization distortion.
The mean squared error of vector quantization is used to evaluate the quantization distortion, as 
\begin{align}
\label{mse}\mathbb{E}[\|Q(\textbf{v})-\textbf{v}\|^2]\triangleq\|\textbf{v}\|^2\sum_{i=1}^{d}\mathbb{E}[(q(r_i)-r_i)^2].
\end{align}
Finding a sequence of quantization levels $\boldsymbol{\ell}_s$ to minimize the distortion can enhance the quantization precision effectively. If the normalized coordinates $r_i$'s are i.i.d. given $\|\textbf{v}\|$, the optimization problem to minimize the normalized quantization distortion can be formulated as 
\begin{align}
\label{mindistortion} \min_{\boldsymbol{\ell}_s}\frac{\mathbb{E}[\|Q(\textbf{v})-\textbf{v}\|^2]}{\|\textbf{v}\|^2} =\min_{\boldsymbol{\ell}_s}\sum_{i=1}^{d}\int_{0}^{1}(q(r_i)-r_i)^2\phi(r_i)dr_i,
\end{align}
where $\phi(\cdot)$ is the probability density function of $\textbf{v}$'s elements.
The quantization distortion minimization problem is hard to solve for an arbitrary distribution. Reference \cite{faghri2020adaptive} showed the analytical solution for a special case when optimizing a single level $\ell_j$ given $\ell_{j-1}$ and $\ell_{j+1}$.

\subsubsection{LM Quantized Method}
The works in \cite{max1960quantizing} \cite{lloyd1982least} applied LM quantized method to minimizing the quantization error in data-transmission systems, which will be used in the following.


We define a boundary sequence $\boldsymbol{b}=[b_0,b_1,...,b_{s-1},b_s]$, where $b_0=0,b_s=1,b_j\in(0,1)$, $j=1,2,...,s-1$. Thus, the sequence $\boldsymbol{b}$ has $s$ bins.
The $j$-th level should fall in the $j$-th bin, i.e., $\ell_j\in(b_{j-1},b_j]$ ($\ell_1=0$ belongs to the first bin). According to the expected vector distortion in \eqref{mse}, we consider the expected scalar distortion $\mathbb{E}[(q(r)-r)^2]$.

If $r$ belongs to the $j$-th bin, i.e., $r\in(b_{j-1},b_j]$ ($r=0$ belongs to the first bin), LM scalar quantizer quantizes $r$ into $\ell_j$, i.e., $q(r)=\ell_j$.
Let $D$ denote the expected scalar distortion $\mathbb{E}[(q(r)-r)^2]$ with
\begin{align}\label{minD}
D&=\mathbb{E}[(q(r)-r)^2]
=\sum_{j=1}^{s}\int_{b_{j-1}}^{b_j}(\ell_j-r)^2\phi(r)dr.
\end{align}

The following lemma shows a necessary condition for minimizing $D$ in \eqref{minD}.
\newtheorem{lemma}{Lemma}
\begin{lemma}[\textbf{The Optimal Solution to Minimizing Distortion}]\label{lemmasolution}
Given the probability density function $\phi(r)$ for $r\in[0,1]$ and the number of quantization levels $s$,
then necessary conditions of the optimal solution to \eqref{minD} are
	\begin{align}
	\label{DD1}b_j&=(\ell_{j}+\ell_{j+1})/2,\quad j=1,...,s-1,\\
	\label{DD2}\ell_j&=\frac{\int_{b_{j-1}}^{b_j}r\phi(r)dr}{\int_{b_{j-1}}^{b_j}\phi(r)dr}.\quad j=1,...,s,
	\end{align}
\end{lemma}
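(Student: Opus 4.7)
The plan is to treat $D$ as a function of the $2s-1$ free parameters $\{\ell_1,\ldots,\ell_s,b_1,\ldots,b_{s-1}\}$ (with $b_0=0$ and $b_s=1$ fixed) and derive \eqref{DD1} and \eqref{DD2} as the vanishing conditions of its partial derivatives. Since $D$ is smooth in each parameter, any interior minimizer must be a stationary point, so first-order optimality is indeed necessary.

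First I would differentiate $D$ with respect to each $\ell_j$. Only the $j$-th summand of \eqref{minD} depends on $\ell_j$, so
\begin{align*}
\frac{\partial D}{\partial \ell_j}=2\int_{b_{j-1}}^{b_j}(\ell_j-r)\phi(r)\,dr.
\end{align*}
Setting this to zero and solving for $\ell_j$ yields the centroid condition \eqref{DD2}, provided $\int_{b_{j-1}}^{b_j}\phi(r)\,dr>0$ (otherwise the $j$-th bin contributes nothing to $D$ and $\ell_j$ is indeterminate, which is a degenerate case that does not affect the necessary-condition statement).

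Next I would differentiate $D$ with respect to an interior boundary $b_j$ for $j=1,\ldots,s-1$, using the Leibniz rule. Only the $j$-th and $(j{+}1)$-th summands depend on $b_j$, giving
\begin{align*}
\frac{\partial D}{\partial b_j}=(\ell_j-b_j)^2\phi(b_j)-(\ell_{j+1}-b_j)^2\phi(b_j).
\end{align*}
Assuming $\phi(b_j)>0$ (again excluding degenerate bins), equating this to zero forces $(\ell_j-b_j)^2=(\ell_{j+1}-b_j)^2$. Since at an optimum one expects $\ell_j\le b_j\le \ell_{j+1}$ (otherwise reassigning a sub-interval to the closer level would strictly decrease $D$, contradicting optimality), the correct root is $b_j=(\ell_j+\ell_{j+1})/2$, which is exactly \eqref{DD1}.

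The only subtlety I anticipate is justifying the ordering $\ell_j\le b_j\le \ell_{j+1}$ that selects the right square root; this I would handle by a brief monotonicity argument, noting that a stationary point with $\ell_j>b_j$ or $\ell_{j+1}<b_j$ cannot be a minimum because swapping points across $b_j$ to the nearer centroid strictly reduces the integrand pointwise. Everything else is routine application of the Leibniz rule and first-order optimality, so the proof should be short.
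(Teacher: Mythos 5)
Your proposal is correct and takes essentially the same route as the paper: differentiate $D$ with respect to each $\ell_j$ and each interior $b_j$, set the derivatives to zero, and read off the centroid and midpoint conditions. The extra care you take with degenerate bins and with selecting the midpoint root (ruling out $\ell_j=\ell_{j+1}$) is a minor refinement the paper omits but does not change the argument.
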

\begin{proof}
	Differentiating $D$ with respect to the $b_j$'s and $\ell_j$'s and set derivatives equal to zero, one has
	\begin{align*}
	&\frac{\partial D}{\partial b_j}=(\ell_j-b_j)^2\phi(b_j)-(\ell_{j+1}-b_j)^2\phi(b_j)=0,\\ \notag&\qquad\qquad\qquad\qquad\qquad\qquad\qquad\qquad j=1,...,s-1,\\
	&\frac{\partial D}{\partial \ell_j}=2\int_{b_{j-1}}^{b_j}(\ell_j-r)\phi(r)dr=0,\ \quad\,\, j=1,...,s.
	\end{align*}
	Then, Lemma \ref{lemmasolution} can be obtained from the above.
\end{proof}

According to Lemma \ref{lemmasolution}, the boundary $b_j$ is the midpoint between $\ell_{j}$ and $\ell_{j+1}$; the quantization level $\ell_j$ is the centroid of the area of $\phi(r)$ between $b_{j-1}$ and $b_j$. In order to minimize $D$ in \eqref{minD}, one needs to find the sequences $\boldsymbol{b}$ and $\boldsymbol{\ell}$ that satisfy \eqref{DD1} and \eqref{DD2} simultaneously.

\subsubsection{LM Quantizer Design}
Based on Lemma \ref{lemmasolution}, the quantized rule of LM quantizer can be designed to satisfy \eqref{DD1} and \eqref{DD2} via iteration. We present the iterative rules of LM quantizer as follows.

\textbf{The iterative rules of LM quantizer:} Given the number of quantization levels $s$, the iterations of LM quantizer are summarized in the following steps:
\begin{enumerate}
\setlength{\itemsep}{0.5ex}
\item[{1.}]  The boundary sequence $\boldsymbol{b}_0=[b_{0,0},b_{1,0},...,b_{s-1,0},b_{s,0}]$ is initialized uniformly in $[0,1]$;
\item[{2.}]   In iteration $p$, the new sequence of quantization levels $\boldsymbol{\ell}_p=[\ell_{1,p}, \ell_{2,p},..,\ell_{s,p}]$ is calculated according to  \eqref{DD2} based on $\boldsymbol{b}_{p-1}$ in iteration $p-1$;
\item[{3.}]   In iteration $p$, the new boundary sequence $\boldsymbol{b}_p=[b_{0,p},b_{1,p},...,b_{s-1,p},b_{s,p}]$ is calculated according to  \eqref{DD1} using $\boldsymbol{\ell}_p$ in Step 2;
\item[{4.}]  Iteration $p$ is incremented and stopping rule is checked. Algorithm is terminated if true, else jumps to Step 2;
\item[{5.}]   Using the final sequences of quantization levels $\boldsymbol{\ell}^*$ and $\boldsymbol{b}^*$, LM quantizer performs: $q_L(r)=\ell_j^*$ if $r\in (b_{j-1}^*,b_j^*]$ ($r=0, q_L(r)=\ell_1^*$). In this paper, we use $q_{L}(r)$ to denote LM scalar quantizer in the remaining parts.
\end{enumerate} 
The algorithm of LM quantizer is given in Algorithm \ref{alg_1}, illustrated  in Fig. \ref{fig_3}.

The above LM quantizer cannot be used in DFL directly because it is only for scalar. The extension to the vector quantizer is as follows.
\subsubsection{Design of LM Vector Quantizer}\label{LMV}

\begin{figure}[!t]
	\centering
	\includegraphics[scale=0.45]{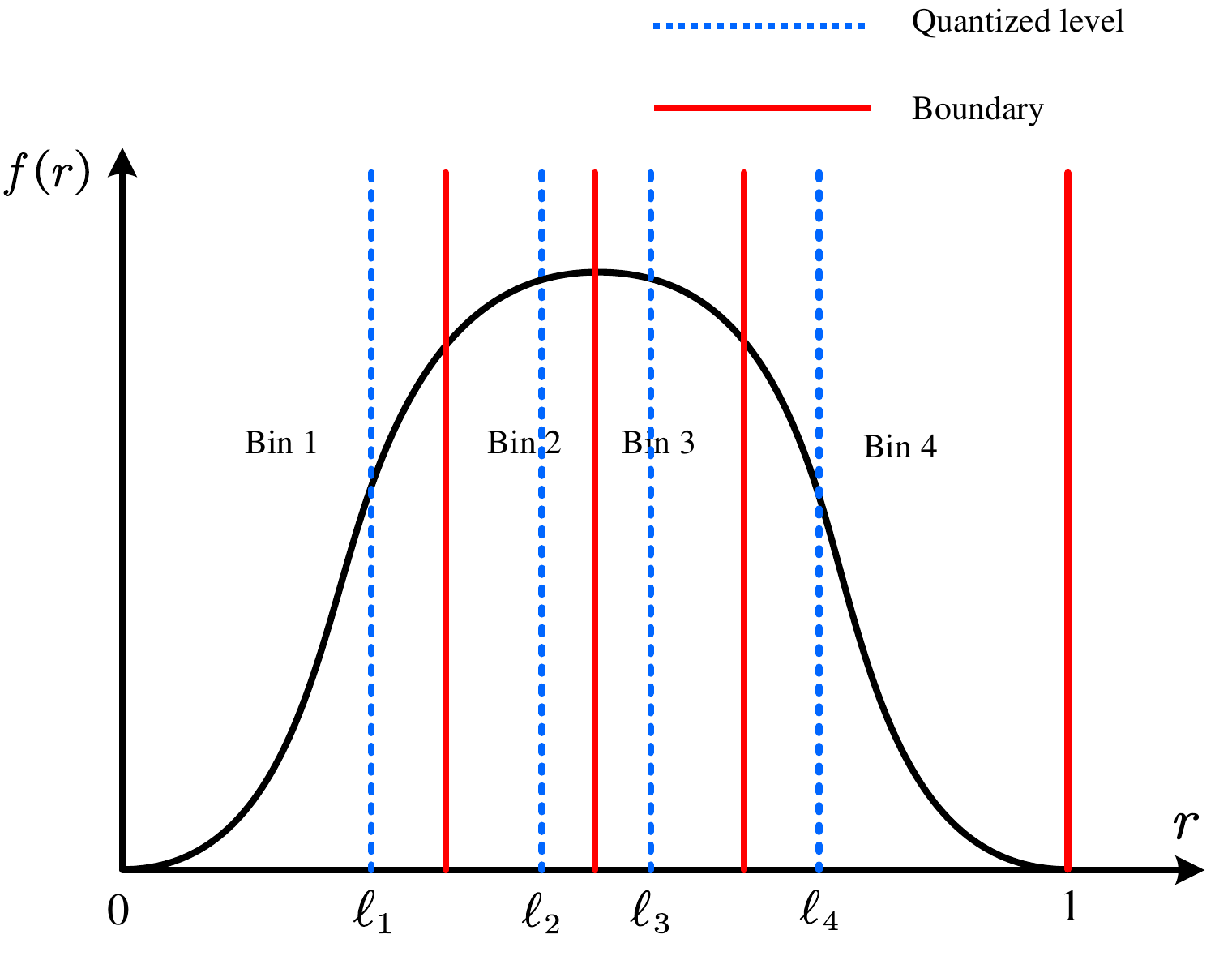}        
	\caption{LM scalar quantizer.}
	\label{fig_3}
\end{figure}

\begin{algorithm}[!t]
	\caption{\textbf{LM scalar quantizer}} 
	\label{alg_1}
	\begin{algorithmic}[1]
		\REQUIRE ~~\\ 
		The input scalar $r$ with $r\in[0,1]$ to be quantized\\The number of quantization levels $s$ \\The probability density function $\phi(r)$
		\STATE Set the initial boundary sequence $\boldsymbol{b}_0$
		\STATE $p=1$
		\WHILE{convergence condition is not satisfied}
		\STATE Calculate $\boldsymbol{\ell}_p$ based on $\boldsymbol{b}_{p-1}$ according to\\ $\quad\ell_j=\int_{b_{j-1}}^{b_j}r\phi(r)dr/\int_{b_{j-1}}^{b_j}\phi(r)dr$, with $j=1,...,s$
		\STATE Calculate $\boldsymbol{b}_p$ based on $\boldsymbol{\ell}_p$ according to\\ $\quad b_j=(\ell_{j}+\ell_{j+1})/2$, with $j=1,...,s-1$
		\STATE $p=p+1$
		\ENDWHILE
		\STATE Performing LM scalar quantization: \\$\quad\quad q_{L}(r)=\ell_j^*$ if $r\in (b_{j-1}^*,b_j^*]$, also $q_{L}(0)=\ell_1^*$.
	\end{algorithmic}
\end{algorithm}

We design the LM vector quantizer used for inter-node communication of DFL. Let $Q_{L}(\textbf{v})$ denote the LM vector quantizer. For a vector $\textbf{v}\in \mathbb{R}^d$ to be quantized, the processing of $\textbf{v}$ at $Q_{L}(\textbf{v})$ has three parts, the $l_2$ norm, the $d$ signs of $v_i$ and the normalized absolute value of $v_i$.
\begin{enumerate}
	\setlength{\itemsep}{0.5ex}
	\item[{1.}] \textbf{The $l_2$ norm:} $Q_{L}(\textbf{v})$ calculates $\|\textbf{v}\|$ and encodes this norm with a full precision of 32 bits.
	\item[{2.}] \textbf{The $d$ signs of $v_i$:} $Q_{L}(\textbf{v})$ takes out the $d$ sign($v_i$)'s and encodes each sign with one bit.
	\item[{3.}] \textbf{The normalized absolute value of $v_i$:} $Q_{L}(\textbf{v})$ quantizes $d$ normalized scalar $r_i$'s with $r_i=|v_i|/\|\textbf{v}\|$ according to the LM scalar quantizer $q_{L}(r_i)$ as presented in Algorithm \ref{alg_1} and encodes the quantization levels with $\lceil \log_2 s \rceil$ bits.
\end{enumerate}

During the inter-node communication of DFL, each node quantizes and encodes its model parameters with $Q_{L}(\textbf{x})$, then sends the encoded bit stream to the neighboring nodes. When the neighbor receives the bit stream, decoding is performed to recover the $l_2$ norm of model parameter $\|\textbf{x}\|$, the $d$ signs of $x_i$ and the normalized absolute value of $x_i$. After decoding the three parts of model parameter, the receiver can receive the model parameters of the transmitter with a certain loss of quantization. 

\subsection{LM-DFL}

Based on the LM vector quantizer, we design LM-DFL to minimize the quantization distortion. LM-DFL is a DFL framework with learning strategy in \eqref{matrixupdate}. In the inter-node communication stage, each node in LM-DFL performs LM vector quantization as discussed in Section \ref{LMV}. Thus, one has the following.

\textbf{Local update:} In the $k$-th iteration, for local update index $t=0,1,2,...,\tau-1$, local updates are performed at each node in parallel. At node $i$, the updating rule can be expressed as 
\begin{align}\label{localupdate-LMDFL}
\textbf{x}_{k,t+1}^{(i)}=\textbf{x}_{k,t}^{(i)}-\eta \widetilde{\nabla} f_i(\textbf{x}_{k,t}^{(i)}),
\end{align}
where $\eta$ is the learning rate.
Note that the local updates of LM-DFL are consistent with DFL.

\textbf{Inter-node communication:} When $\tau$ local updates are finished, each node $i$ performs inter-node communication. 
Nodes exchange their quantized differential model parameter $Q_{L}(\textbf{x}_{k,\tau}^{(i)}-\textbf{x}_{k}^{(i)})$ and $Q_L(\textbf{x}_k^{(i)}-\textbf{x}_{k-1,\tau}^{(i)})$ by LM vector quantizer with the connected nodes. 
After the parameters from all the connected nodes are received, 
Node $i$ calculate the model estimated parameters $\hat{\textbf{x}}_{k}^{(j)}$ for all connected node $j$ by
\begin{align}
	\hat{\textbf{x}}_{k}^{(j)}=\hat{\textbf{x}}_{k-1,\tau}^{(j)}+Q_{L}(\textbf{x}_{k-1,\tau}^{(j)}-\textbf{x}_{k-1}^{(j)})
	+Q_L(\textbf{x}_k^{(j)}-\textbf{x}_{k-1,\tau}^{(j)}).
\end{align}
Then a weighted model averaging
is performed by
\begin{align}\label{internodecom-LMDFL}
\textbf{x}_{k+1}^{(i)}=\sum_{j=1}^{N}c_{ji}[\textbf{x}_{k}^{(j)}+Q_{L}(\textbf{x}_{k,\tau}^{(j)}-\textbf{x}_{k}^{(j)})].
\end{align}
Note that the transmitting noise and decoding error are ignored.

LM-DFL learning strategy of matrix form can be provided. According to the definitions of the initial model parameter matrix $\textbf{X}_k$ in \eqref{X_k} and the local update model parameter matrix $\textbf{X}_{k,t}$ in \eqref{X_kt}, 
\textbf{the global learning strategy of LM-DFL} form can be rewritten as 
\begin{align}
\label{matrixupdate-LMDFL}\textbf{X}_{k+1}=[\hat{\textbf{X}}_k+Q_L(\textbf{X}_{k,\tau}-\textbf{X}_{k})]\textbf{C},
\end{align}
where $\hat{\textbf{X}}_k=[\hat{\textbf{x}}_k^{(1)}\ \hat{\textbf{x}}_k^{(2)}\ ...\ \hat{\textbf{x}}_k^{(N)}]$.
The estimated parameter matrix 
\begin{align}\label{eti1}
	\hat{\textbf{X}}_k=\hat{\textbf{X}}_{k-1}+Q_L(\textbf{X}_{k-1,\tau}-\textbf{X}_{k-1})+Q_L(\textbf{X}_k-\textbf{X}_{k-1,\tau}).
\end{align}
We present LM-DFL in Algorithm \ref{alg_2}. Because the nodes in LM-DFL are resource-constrained, we set $K$ iterations in LM-DFL algorithm. A comparison of FL methods with different quantizers is shown in Table \ref{table-comparison}. A further comparison of quantization distortion will be present in the next section.

\begin{algorithm}[!t]
	\caption{\textbf{LM-DFL}} 
	\label{alg_2}
	\begin{algorithmic}[1]
		\REQUIRE ~~\\ 
		Learning rate $\eta$\\
		Total number of iterations $K$\\
	    Number of local update $\tau$ in an iteration\\
		Confusion matrix $\textbf{C}$
		
		\STATE Set the initial value of $\textbf{X}_{1,0}$. 
		\label{ code:fram:extract1 }
		\FOR{$k=1,2,...,K$}
		\FOR{$t=0,1,...,\tau-1$} 
		\STATE Nodes calculate stochastic gradient
		\STATE Nodes perform SGD in parallel: \\ 
		$\textbf{X}_{k,t+1}=\textbf{X}_{k,t}-\eta\textbf{G}_{k,t}$ \hfill$//$\textit{local updates}
		\ENDFOR
		\STATE Each node $i$ calculates $\textbf{x}_{k,\tau}^{(i)}-\textbf{x}_{k}^{(i)}$, $\textbf{x}_{k}^{(i)}-\textbf{x}_{k-1,\tau}^{(i)}$ and computes the statistics to construct their probability density function $\phi_k^{(i)}(x)$
		\STATE Each node $i$ quantizes the differential model parameters to obtain $Q_{L}(\textbf{x}_{k,\tau}^{(i)}-\textbf{x}_{k}^{(i)})$ and
		$Q_{L}(\textbf{x}_{k}^{(i)}-\textbf{x}_{k-1,\tau}^{(i)})$ by LM vector quantizer.
		\STATE Nodes exchange the quantized differential model parameters and calculate the model estimated parameters
		$$	\hat{\textbf{X}}_k=\hat{\textbf{X}}_{k-1}+Q_L(\textbf{X}_{k-1,\tau}-\textbf{X}_{k-1})+Q_L(\textbf{X}_k-\textbf{X}_{k-1,\tau}).$$
		\STATE Then update the initial model parameter for next iteration by \\$\quad\textbf{X}_{k+1}\!=[\hat{\textbf{X}}_k\!+\!Q_L(\textbf{X}_{k,\tau}\!-\!\textbf{X}_{k})]\textbf{C}$. \hfill$//$\textit{communication}
		\ENDFOR
	\end{algorithmic}
\end{algorithm}

\newcommand{\tabincell}[2]{
	\begin{tabular}{@{}#1@{}}
		#2
	\end{tabular}
}
\begin{table*}[!t]  
	\caption{Comparison of different quantized FL methods}
	\centering
	\label{table-comparison}
	\renewcommand\arraystretch{1.5}
	\begin{threeparttable}
		\begin{tabular}{ccccc}  
			
			\toprule[1pt]   
			
			\textbf{Method} & \tabincell{c}{\textbf{Unbiasedness}} & \tabincell{c}{\textbf{Randomness}} & \tabincell{c}{\textbf{Distortion}} & \tabincell{c}{\textbf{Centralized or}\\ \textbf{decentralized structure}} \\  
			
			\midrule   
			
			QSGD \cite{alistarh2017qsgd} &  unbiased & \multicolumn{1}{c}{Random} & \multicolumn{1}{c}{$\min(d/s^2,\sqrt{d}/s)$}&Centralized\\  
			
			Natural Compression \cite{horvoth2022natural}&  unbiased  & \multicolumn{1}{c}{Random} & \multicolumn{1}{c}{$1/8+\min(\sqrt{d}/2^{s-1},d/2^{2(s-1)})$}&Centralized\\    
			
			ALQ\tnote{1} \ \cite{faghri2020adaptive}& unbiased & \multicolumn{1}{c}{Random}& \multicolumn{1}{c}{$\frac{(\ell_{j^*+1}/\ell_{j^*}-1)^2}{4(\ell_{j^*+1}/\ell_{j^*})}$}&Centralized\\
			
			LM-DFL & unbiased& \multicolumn{1}{c}{Deterministic}& \multicolumn{1}{c}{$d/(12s^2)$}&Decentralized\\
			
			\bottomrule[1pt]  
			
		\end{tabular}
		\begin{tablenotes}
			\footnotesize
			\item[1] In ALQ, the distortion is derived with a format of $\ell_{j^*+1}/\ell_{j^*}$, where $j^*=\arg\max_{1\leq j\leq s}\ell_{j+1}/\ell_j$. We present the quantization distortion of the proposed LM-DFL in Appendix D.
		\end{tablenotes}
	\end{threeparttable}
\end{table*}

\section{Performance of LM-DFL}
In this section, we will first establish a general convergence results of DFL with quantizer $Q(\cdot)$, named as QDFL. Then we will study the distortion bound of the LM vector quantizer and extend the convergence of QDFL to the proposed LM-DFL algorithm.

\subsection{Preliminaries}\label{assumption}
To facilitate the analysis, we make the following assumptions.
\newtheorem{assumption}{Assumption}
\begin{assumption}\label{ass1}
	we assume the following conditions:\\
	\textup{1)} $F_i(\textup{\textbf{x}})$ is $L$-smooth, i.e., $\|\nabla F_i(\textup{\textbf{x}})-\nabla F_i(\textup{\textbf{y}})\|\leq L\|\textup{\textbf{x}}-\textup{\textbf{y}}\|$ for some $L>0$ and any $\textup{\textbf{x}}$, $\textup{\textbf{y}},i$. Hence,  $F(\textup{\textbf{x}})$ is $L$-smooth from the triangle inequality.\\
	\textup{2)} $F(\textup{\textbf{x}})$ has a lower bound, i.e., $F(\textup{\textbf{x}})\geq F_{\textup{inf}}$ for some $F_{\textup{inf}}>0$.\\
	\textup{3)} Gradient estimation is unbiased for stochastic mini-batch sampling, i.e., $\mathbb{E}_{\xi|\textup{\textbf{x}}}[\widetilde{\nabla} f_i(\textup{\textbf{x}})]=\nabla F_i(\textup{\textbf{x}})$. The variance of gradient estimation is $\sigma_i$ for any $i$ and $\textup{\textbf{x}}$, i.e., $\mathbb{E}_{\xi|\textup{\textbf{x}}}\|\widetilde{\nabla} f_i(\textup{\textbf{x}})-\nabla F_i(\textup{\textbf{x}})\|^2=\sigma_i^2$ where $\sigma_i^2>0$. We define $\sigma^2\triangleq\frac{\sum_{i=1}^{N}\sigma_i^2}{N}$.\\
	\textup{4)} Gradient divergence $\|\nabla F_i(\textup{\textbf{x}})-\nabla F(\textup{\textbf{x}})\|$ is bounded by $\delta_i$ for any $i$ and $\textup{\textbf{x}}$, i.e., $\|\nabla F_i(\textup{\textbf{x}})-\nabla F(\textup{\textbf{x}})\|\leq\delta_i$. We define $\delta^2\triangleq\frac{\sum_{i=1}^{N}\delta_i^2}{N}$.\\
	\textup{5)} $\textup{\textbf{C}}$ is a doubly stochastic matrix, which satisfies $\textup{\textbf{C}}\textup{\textbf{1}}=\textup{\textbf{1}},
	\textup{\textbf{C}}^\top=\textup{\textbf{C}}$. The largest eigenvalue of $\textup{\textbf{C}}$ is always $1$ and the other eigenvalues are strictly less than $1$, i.e., $\max\left\{|\lambda_2(\textup{\textbf{C}})|, |\lambda_N(\textup{\textbf{C}})|\right\}<\lambda_1(\textup{\textbf{C}})=1$. For convenience, we define $\zeta\triangleq\max\left\{|\lambda_2(\textup{\textbf{C}})|, |\lambda_N(\textup{\textbf{C}})|\right\}$ to measure the confusion degree of DFL topology.\\
\end{assumption}

\newtheorem{definition}{Definition}
We present the definition of the bound of quantization distortion for any quantizer. 
\begin{definition}[\textbf{Bound of Quantization Distortion}]\label{distor}
	For any given $ \textbf{\textup{x}}\in \mathbb{R}^d$, the distortion of any quantizer $Q$: $\mathbb{R}^d\to \mathbb{R}^d$ satisfies
	\begin{align}\label{Q}
	\mathbb{E}\|Q(\textbf{\textup{x}})-\textbf{\textup{x}}\|^2\leq \omega\|\textbf{\textup{x}}\|^2,
	\end{align}
	for $\omega>0$, where $\mathbb{E}$ denotes the expectation about the randomness of quantization. 
\end{definition}

Because the global loss function $F(\textbf{x})$ can be non-convex in some learning platforms such as convolutional neural network (CNN), SGD may converge to a local minimum or a saddle point. 
We use the expectation of the gradient norm average of all iteration steps as the indicator of convergence \cite{jiang2018linear} \cite{yu2019linear}.
The algorithm is convergent if the following condition is satisfied.
\begin{definition}[\textbf{Convergence Condition}]\label{def1}
	The algorithm converges to a stationary point if 
	it achieves an $\epsilon$-suboptimal solution, i.e.,
	\begin{align}\label{defF}
	\mathbb{E}\left[\frac{1}{K}\sum_{k=1}^{K}\|\nabla F(\textup{\textbf{u}}_k)\|^2\right]\leq\epsilon.
	\end{align}
\end{definition}

\subsection{Distortion of LM-DFL}

We first establish the unbiasedness of the LM vector quantizer in the following theorem.
\newtheorem{theorem}{Theorem}
\begin{theorem}[\textbf{Unbiasedness}]
	For any given $\textup{\textbf{x}}\in\mathbb{R}^d$, the expectation of LM vector quantizer satisfies
	\begin{align*}
	\mathbb{E}[Q_L(\textup{\textbf{x}})]=\textup{\textbf{x}}
	\end{align*}
	which means the LM vector quantizer is unbiased. 
\end{theorem}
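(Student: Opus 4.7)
The plan is to reduce the vector identity to a scalar one and then invoke the centroid optimality condition \eqref{DD2} from Lemma~\ref{lemmasolution}.

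First I would decompose $Q_L(\textbf{x})$ coordinate by coordinate via \eqref{Qv}--\eqref{q}: the $i$-th entry is $h(x_i)=\|\textbf{x}\|\cdot\textup{sign}(x_i)\cdot q_L(r_i)$ with $r_i=|x_i|/\|\textbf{x}\|\in[0,1]$. Since the LM vector quantizer of Section~\ref{LMV} transmits the norm $\|\textbf{x}\|$ at full precision and the sign bit of each coordinate losslessly, these two factors are deterministic and pull out of the expectation:
$$\mathbb{E}[h(x_i)]=\|\textbf{x}\|\cdot\textup{sign}(x_i)\cdot\mathbb{E}[q_L(r_i)].$$
So it suffices to establish the scalar identity $\mathbb{E}[q_L(r)]=r$ in the relevant sense and then reassemble entrywise to get $\mathbb{E}[Q_L(\textbf{x})]=\textbf{x}$.

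The scalar identity is essentially a restatement of \eqref{DD2}. The optimality condition places $\ell_j^*$ at the centroid of $\phi$ on the $j$-th bin, namely $\ell_j^*=\mathbb{E}_\phi[\,r\mid r\in(b_{j-1}^*,b_j^*]\,]$. Because $q_L(r)=\ell_j^*$ exactly when $r$ lies in that bin, summing over bins yields
$$\mathbb{E}_\phi[q_L(r)]=\sum_{j=1}^{s}\ell_j^*\!\!\int_{b_{j-1}^*}^{b_j^*}\!\!\phi(r)\,dr=\sum_{j=1}^{s}\int_{b_{j-1}^*}^{b_j^*}\!\!r\,\phi(r)\,dr=\int_0^1 r\,\phi(r)\,dr=\mathbb{E}_\phi[r],$$
where the middle equality is \eqref{DD2} multiplied through by $\int_{b_{j-1}^*}^{b_j^*}\phi(r)\,dr$. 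Reinstalling the norm and sign factors coordinate-wise then recovers the vector statement.

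The step I expect to be the main obstacle is interpretive rather than computational: Algorithm~\ref{alg_1} specifies $q_L$ as a strictly deterministic rule, so for a truly fixed $\textbf{x}$ there is no internal randomness and $\mathbb{E}[Q_L(\textbf{x})]$ collapses to $Q_L(\textbf{x})$, which generically differs from $\textbf{x}$. The identity derived above is therefore the standard Lloyd--Max ``average-case'' unbiasedness, in which the expectation is taken over the coordinate distribution $\phi$ that was used to design the levels. My proof would adopt that reading, noting that it is the same sense in which the centroid condition \eqref{DD2} is meaningful; a genuinely pointwise version would require augmenting $q_L$ with within-bin stochastic rounding (as QSGD and natural compression do), which Algorithm~\ref{alg_1} does not include.
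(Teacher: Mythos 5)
Your proof follows essentially the same route as the paper's: pull the norm and sign factors out of the expectation coordinate-wise, then use the centroid condition \eqref{DD2} to show $\mathbb{E}[q_L(r)]=\mathbb{E}[r]$ by summing over bins. Your closing caveat about the deterministic quantizer is well taken --- the paper's own proof ends with $\mathbb{E}[Q_L(\textbf{x})]=\mathbb{E}[\textbf{x}]$ and then asserts the result ``since $\textbf{x}$ is given,'' which silently adopts exactly the average-over-$\phi$ reading you describe.
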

\begin{proof}
	Considering the LM scalar quantizer $q_L(r)$, the axis $[0,1]$ is partitioned into $s$ bins. Bin $j$ is bounded by $[b_{j-1},b_j]$ for $j=1,2,...,s$. If $r$ falls into bin $j$, $q_L(r)=\ell_j$. Therefore, the probability distribution of 
	$r$ is:  $q_L(r)=\ell_j$ with probability $p_j=\int_{b_{j-1}}^{b_j}\phi(r)dr$ for $j=1,2,...,s$. The expectation $\mathbb{E}[q_L(r)]$ satisfies
	\begin{align*}
	\mathbb{E}[q_L(r)]&=\sum_{j=1}^{s}\ell_j\int_{b_{j-1}}^{b_j}\phi(r)dr\\
	&=\sum_{j=1}^{s}\int_{b_{j-1}}^{b_j}r\phi(r)dr\\
	&=\int_{0}^{1}r\phi(r)dr\\
	&=\mathbb{E}[r],
	\end{align*}
	according to the definition of $\ell_j$ in \eqref{DD2}.
	The expectation of LM vector quantizer satisfies 
	\begin{align*}
	\mathbb{E}[Q_L(\textbf{x})]&=\mathbb{E}\{ \|\textbf{x}\|\cdot \textup{sign}(\textbf{x})\circ  q_L(\textbf{r})\}\\
	&=\|\textbf{x}\|\cdot \textup{sign}(\textbf{x})\circ\mathbb{E}[q_L(\textbf{r})]\\
	&=\|\textbf{x}\|\cdot \textup{sign}(\textbf{x})\circ\mathbb{E}[\textbf{r}]\\
	&=\mathbb{E}[\textbf{x}],
	\end{align*}
	where $r=[r_1,r_2,...,r_d]^\top$ and '$\circ$' denotes Hadamard product. 
	Since $\textbf{x}$ is given, we have $\mathbb{E}[Q_L(\textbf{x})]=\textbf{x}$.
\end{proof}

Then we give the quantization distortion of LM-DFL presented in the following theorem.
\begin{theorem}[\textbf{Quantization Distortion of LM-DFL}]
	Let $\textup{\textbf{x}}\in\mathbb{R}^d$ and the number of quantization levels is $s$. The upper bound of quantization distortion can be expressed as 
	\begin{align}
	\label{distortionbound}\mathbb{E}[\|Q_L(\textup{\textbf{x}})-\textup{\textbf{x}}\|^2]\leq\frac{d}{12s^2}\|\textup{\textbf{x}}\|^2.
	\end{align}
\end{theorem}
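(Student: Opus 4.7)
The plan is to reduce the vector distortion to a sum of scalar distortions and then exploit the optimality of the LM scalar quantizer established in Lemma~\ref{lemmasolution}.

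First, I would use the structural decomposition from~\eqref{Qv}--\eqref{q} together with identity~\eqref{mse} to write
\begin{align*}
\mathbb{E}[\|Q_L(\textbf{x})-\textbf{x}\|^2] \;=\; \|\textbf{x}\|^2 \sum_{i=1}^{d} \mathbb{E}\bigl[(q_L(r_i)-r_i)^2\bigr],
\end{align*}
so that it suffices to prove the per-coordinate bound $\mathbb{E}[(q_L(r_i)-r_i)^2] \le 1/(12s^2)$. Because $\|\textbf{x}\|$ and the signs are encoded exactly, the only loss comes from the scalar quantizer acting on the normalized magnitudes $r_i \in [0,1]$.

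Second, to control each scalar term I would invoke Lemma~\ref{lemmasolution}: the iterative rules of the LM scalar quantizer produce the sequences $(\boldsymbol{b}^*,\boldsymbol{\ell}^*)$ that are stationary for~\eqref{minD} and thereby minimize $D=\sum_{j=1}^{s}\int_{b_{j-1}}^{b_j}(\ell_j-r)^2\phi(r)\,dr$ over all $s$-level quantizers on $[0,1]$. By optimality, it is enough to exhibit one feasible $s$-level quantizer whose distortion is at most $1/(12s^2)$, and then $D_{Q_L} \le D_{\text{any feasible}}$ finishes the job. The natural candidate is the uniform midpoint quantizer $q'$ with equal bins $[(j-1)/s, j/s]$ and levels $\ell_j'=(2j-1)/(2s)$. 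Computing bin by bin under a uniform reference density yields
\begin{align*}
\sum_{j=1}^{s}\int_{(j-1)/s}^{j/s}\!\bigl(r-\tfrac{2j-1}{2s}\bigr)^2 dr \;=\; s\cdot\frac{1}{12s^3} \;=\; \frac{1}{12s^2}.
\end{align*}
Summing the per-coordinate bounds over $i=1,\dots,d$ then produces the advertised inequality~\eqref{distortionbound}.

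The main obstacle is the passage from the pointwise bound $(q'(r)-r)^2 \le 1/(4s^2)$ to the tighter constant $1/(12s^2)$: the naive pointwise estimate would only yield $d/(4s^2)\|\textbf{x}\|^2$. To recover the factor of $1/3$ improvement one really needs the integrated (variance-style) estimate on each bin, which corresponds to assuming $\phi$ is uniform (or at least locally uniform) on each bin~--- a standard high-rate quantization hypothesis. I expect the proof to either adopt the uniform-density assumption on the normalized coordinates $r_i$ (which is natural given Algorithm~\ref{alg_1}'s initialization and the continuity of the underlying gradient density), or to appeal to a Popoviciu/variance-on-a-bin argument combined with the centroid property~\eqref{DD2} that identifies $\ell_j^*$ as the mean of $\phi$ restricted to $[b_{j-1}^*,b_j^*]$. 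Once this per-bin variance bound is in hand, summing over the $s$ bins and invoking LM optimality closes the proof cleanly.
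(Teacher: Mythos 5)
Your proposal is correct in substance but follows a genuinely different route from the paper's Appendix~A. You reduce to the scalar distortion exactly as the paper does via \eqref{mse}, but then bound the LM distortion by comparison with the uniform midpoint quantizer, appealing to optimality of the Lloyd--Max solution. The paper instead runs the classical Panter--Dite high-resolution analysis: it approximates $\phi$ as constant on each bin to obtain the per-bin distortion $d_k=\tfrac{2}{3}\phi(\ell_k)\Delta\ell_k^3$, optimizes the bin half-widths under the Riemann-sum constraint $\sum_k\phi^{1/3}(\ell_k)\,2\Delta\ell_k=\int_0^1\phi^{1/3}(r)\,dr$ by Lagrange multipliers (all $\mu_k=\phi^{1/3}(\ell_k)\Delta\ell_k$ equal), arrives at $D=\tfrac{1}{12s^2}\bigl(\int_0^1\phi^{1/3}(r)\,dr\bigr)^3$, and closes with H\"older's inequality to get $D\le 1/(12s^2)$. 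Two caveats on your version. First, you correctly identify the crux: the uniform midpoint quantizer has distortion $1/(12s^2)$ only under the same locally-uniform (high-rate) approximation of $\phi$ that the paper adopts; for a general $\phi$ it can be as large as $1/(4s^2)$, and the Popoviciu-on-a-bin fallback you float would only recover the constant $1/4$, not $1/12$, so that branch of your plan would fail. Second, Lemma~\ref{lemmasolution} gives only stationarity, so ``optimality of LM'' should be justified via the monotone-descent property of the Lloyd iteration initialized at the uniform partition (each centroid/boundary update weakly decreases $D$), which yields $D_{Q_L}\le D_{\mathrm{uniform}}$; the paper has the mirror image of this gap, computing the global minimum and tacitly attributing it to the LM quantizer. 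As for what each approach buys: yours is shorter, avoids the Lagrange and H\"older steps, and delivers exactly the upper bound the theorem asserts; the paper's derivation additionally identifies the (approximate) distortion $\tfrac{1}{12s^2}(\int_0^1\phi^{1/3})^3$ actually attained by LM, which is the sharper quantity behind the comparisons in Table~\ref{table-comparison}.
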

\begin{proof}
	The detailed proof is presented in Appendix A.
\end{proof}
From $\eqref{distortionbound}$, the distortion of the LM vector quantizer increases with the dimension $d$ of $\textbf{x}$ and decreases with the number of quantization levels $s$. 

From Table \ref{table-comparison}, for the same degree of distortion, LM-DFL uses only $0.29s$ levels while QSGD uses $s$ levels. Compared with natural compression, under fine-grained quantization situation with a large magnitude of $s$ where $d/s^2\ll 1/8$, the distortion of LM-DFL is far less than that of natural compression. In Appendix D, LM-DFL's distortion is shown to be $(\frac{\ell_{j^*+1}/\ell_{j^*}-1}{\ell_{j^*+1}/\ell_{j^*}+1})^2$, smaller than that of ALQ (This is because of $(\ell_{j^*+1}/\ell_{j^*}+1)^2\geq 4(\ell_{j^*+1}/\ell_{j^*})$). The definition of $j^*$ is consistent with that of ALQ, i.e., $j^*=\arg\max_{1\leq j\leq s}\ell_{j+1}/\ell_j$. Therefore, LM-DFL achieves a quantizing operator with the minimal quantization distortion.

\subsection{Convergence of LM-DFL}
Following the same form of \eqref{matrixupdate-LMDFL}, which shows LM-DFL learning strategy, the learning strategy of QDFL can be expressed as 
\begin{align}
\label{QDFLlearningrule}\textbf{X}_{k+1}=[\hat{\textbf{X}}_k+Q(\textbf{X}_{k,\tau}-\textbf{X}_{k})]\textbf{C}.
\end{align}
For convenience, we transform the update rule of QDFL in \eqref{matrixupdate-LMDFL} into model averaging. By
multiplying $\textbf{1}/N$ on both sides of \eqref{QDFLlearningrule}, we obtain 
\begin{align}
\textbf{X}_{k+1}\frac{\textbf{1}}{N}=[\hat{\textbf{X}}_k+Q(\textbf{X}_{k,\tau}-\textbf{X}_{k})]\textbf{C}\frac{\textbf{1}}{N},
\end{align}
where $\textbf{C}$ is eliminated because of $\textbf{C} \textbf{1}=\textbf{1}$ from Section \ref{assumption}. We use $\textbf{u}_k$ to denote the average model: $\textbf{u}_k\triangleq \textbf{X}_k\textbf{1}/{N},$ and use $\hat{\textbf{u}}_k$ to denote the average estimated model: $\hat{\textbf{u}}_k\triangleq \hat{\textbf{X}}_k\textbf{1}/{N}$.
Thus, we can obtain 
\begin{align}\label{DFLQ}
\textbf{u}_{k+1}=\hat{\textbf{u}}_{k}+\frac{1}{N}\sum_{i=1}^{N}Q(\textbf{x}_{k,\tau}^{(i)}-\textbf{x}_{k}^{(i)}).
\end{align}
Using \eqref{eti1} and the unbiased quantized $Q(\cdot)$, taking an expectation for \eqref{eti1} can obtain
\begin{align}
\mathbb{E}\hat{\textbf{X}}_k-\mathbb{E}\hat{\textbf{X}}_{k-1}=\textbf{X}_k-\textbf{X}_{k-1}.
\end{align}
Therefore,  from $l=2$ to $l=k$, one has
\begin{align}\label{est2}
\mathbb{E}\hat{\textbf{X}}_k-\mathbb{E}\hat{\textbf{X}}_{1}=\textbf{X}_k-\textbf{X}_{1}.
\end{align}
In \eqref{eti1}, $k=0$ is undefined and we set the parameter matrices $\hat{\textbf{X}}=\textbf{0},\textbf{X}_0=\textbf{0},\textbf{X}_{0,\tau}=0$. So $\hat{\textbf{X}}_1=Q(\textbf{X}_1)$, and $\mathbb{E}\hat{\textbf{X}}_1=\mathbb{E}Q(\textbf{X}_1)=\textbf{X}_1$. From \eqref{est2}, we have
\begin{align}\label{est3}
	\mathbb{E}\hat{\textbf{X}}_k=\textbf{X}_k.
\end{align}
By multiplying $\textbf{1}/N$ on both sides of \eqref{est3}, we further have
\begin{align}\label{est4}
\mathbb{E}\hat{\textbf{u}}_k=\textbf{u}_k.
\end{align}

In the remaining analysis, we will focus on the convergence of the average model $\textbf{u}_k$,
commonly used to analyze convergence under stochastic sampling setting of gradient descent \cite{yu2019parallel,lian2017can,wang2021cooperative,stich2018local}.

We derive the convergence rate of QDFL based on the distortion definition \ref{distor} for any quantizer.

\begin{lemma}[\textbf{Convergence of QDFL}]\label{the1}
	Consider the average models over iteration $K$ according to the QDFL with any quantizer outlined in \eqref{DFLQ}. Suppose the conditions 1-5 in Assumption \ref{ass1} are satisfied. If the learning rate $\eta$ satisfies,
	\begin{align}
	\label{con-eta} \eta\leq \frac{\sqrt{(\omega+N)^2+4N^2(2\alpha+1)}-\omega-N}{2NL\tau(2\alpha+1)} ,
	\end{align}
	then the expectation of the gradient norm average after $K$ iterations is bounded as
	\begin{align}\label{QDFLcon}
	\notag\mathbb{E}\left[\frac{1}{K}\sum_{k=1}^{K}\|\nabla F(\textup{\textbf{u}}_k)\|^2\right]\leq\frac{2[F(\textup{\textbf{u}}_1)-F_{\textup{inf}}]}{\eta K\tau}+\frac{L\eta\tau\sigma^2(\omega+N)}{N}\\
	+\left(2\alpha +\frac{2}{3}\right)L^2\eta^2\sigma^2\tau^2+\delta^2,
	\end{align}
	where $\alpha=\frac{\zeta^2}{1-\zeta^2}+\frac{\zeta}{(1-\zeta)^2}.$
\end{lemma}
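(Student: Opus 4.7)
The plan is to establish a one-step descent-type inequality for $\mathbb{E}F(\textbf{u}_k)$ via $L$-smoothness applied to the average iterate, then telescope it over $k=1,\dots,K$. Two auxiliary quantities need to be controlled along the way: the consensus error $\frac{1}{N}\sum_{i,t}\mathbb{E}\|\textbf{x}_{k,t}^{(i)}-\textbf{u}_k\|^2$, and the extra second-moment contribution injected by the quantizer at each communication round. The step-size restriction \eqref{con-eta} is tuned precisely so that all positive $\|\nabla F(\textbf{u}_k)\|^2$ coefficients produced by these bounds can be absorbed into the leading negative $-\eta\tau\|\nabla F(\textbf{u}_k)\|^2$ term coming from the cross inner product.

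First I would apply $L$-smoothness of $F$ to write
\[
\mathbb{E}F(\textbf{u}_{k+1}) \le F(\textbf{u}_k) + \mathbb{E}\langle \nabla F(\textbf{u}_k),\,\textbf{u}_{k+1}-\textbf{u}_k\rangle + \tfrac{L}{2}\mathbb{E}\|\textbf{u}_{k+1}-\textbf{u}_k\|^2,
\]
substitute the scalar update \eqref{DFLQ}, and use unbiasedness of $Q$ together with $\mathbb{E}\hat{\textbf{u}}_k=\textbf{u}_k$ from \eqref{est4}. The linear term reduces to $-\frac{\eta}{N}\sum_{i,t}\mathbb{E}\langle \nabla F(\textbf{u}_k),\nabla F_i(\textbf{x}_{k,t}^{(i)})\rangle$, which I would split via $2\langle a,b\rangle=\|a\|^2+\|b\|^2-\|a-b\|^2$ to extract the desired $-\eta\tau\|\nabla F(\textbf{u}_k)\|^2$, a positive gradient-norm term (to be controlled later by the quadratic piece), and a discrepancy that $L$-smoothness and the gradient-divergence bound $\delta$ in Assumption~\ref{ass1} convert into a combination of consensus error and $\delta^2$.

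For the quadratic term, $\|a+b\|^2\le 2\|a\|^2+2\|b\|^2$ applied to $\textbf{u}_{k+1}-\textbf{u}_k=(\hat{\textbf{u}}_k-\textbf{u}_k)+\frac{1}{N}\sum_i Q(\textbf{x}_{k,\tau}^{(i)}-\textbf{x}_k^{(i)})$ separates the quantizer variance, which Definition~\ref{distor} bounds by $\omega$ times the norm of each local direction (so the averaged contribution is of order $\omega/N$), from the mean direction itself. Expanding $\textbf{x}_{k,\tau}^{(i)}-\textbf{x}_k^{(i)}=-\eta\sum_t\widetilde{\nabla}f_i(\textbf{x}_{k,t}^{(i)})$ and invoking the stochastic-gradient variance $\sigma^2$ from part~3 of Assumption~\ref{ass1} exposes the noise term $L\eta\tau\sigma^2(\omega+N)/N$ that appears in \eqref{QDFLcon}, together with additional gradient-norm and consensus-error terms that I would handle exactly as in the previous paragraph.

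The hard part is the consensus error $\mathbb{E}\|\textbf{X}_{k,t}-\textbf{u}_k\textbf{1}^\top\|_F^2$. Working in the matrix form \eqref{matrixupdate-LMDFL} and using $\textbf{X}_{k+1}\textbf{J}=\textbf{u}_{k+1}\textbf{1}^\top$ together with $\textbf{C}\textbf{J}=\textbf{J}$, the post-averaging deviation is obtained by right-multiplying the pre-averaging deviation by $\textbf{C}-\textbf{J}$, whose operator norm is at most $\zeta$ by part~5 of Assumption~\ref{ass1}. Unrolling this recursion across rounds and bounding each per-round perturbation by its two sources, namely the accumulated $\tau$ stochastic gradients and the quantizer variance of order $\omega$, produces two geometric sums in $\zeta$ that collapse to $\frac{\zeta^2}{1-\zeta^2}$ and $\frac{\zeta}{(1-\zeta)^2}$, exactly matching the constant $\alpha$. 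The delicate bookkeeping is keeping the fresh quantizer randomness at round $k$ independent from the correlated history inherited from earlier rounds, and invoking $\mathbb{E}\hat{\textbf{X}}_k=\textbf{X}_k$ only where conditioning makes it legitimate. Plugging this consensus bound back into the one-step inequality, using \eqref{con-eta} to ensure the net coefficient of $\|\nabla F(\textbf{u}_k)\|^2$ on the right is nonpositive (which is where the particular quadratic in $\eta$ defining the threshold arises), telescoping from $k=1$ to $K$, and dividing by $\eta\tau K/2$ yields exactly \eqref{QDFLcon}.
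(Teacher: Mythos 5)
Your plan reproduces the paper's own argument essentially step for step: the $L$-smoothness descent inequality on the averaged iterate, the polarization identity $2\langle a,b\rangle=\|a\|^2+\|b\|^2-\|a-b\|^2$ on the linear term combined with the $\sigma^2$, $\delta^2$ and $L^2\|\textbf{x}_{k,t}^{(i)}-\textbf{u}_k\|^2$ decomposition, Definition~\ref{distor} for the quantizer's second moment, the matrix recursion $\textbf{X}_k(\textbf{I}-\textbf{J})=\sum_{s}(\textbf{X}_{s,\tau}-\textbf{X}_s)(\textbf{C}^{k-s}-\textbf{J})$ with $\|\textbf{C}^j-\textbf{J}\|_{\textup{op}}=\zeta^j$ collapsing to $\alpha$, and the step-size condition absorbing the residual gradient-norm coefficients. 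The only refinement needed is in the quadratic term: the cross term between the zero-mean quantization error and the accumulated gradient direction vanishes in expectation by unbiasedness, so you should use that exact orthogonality (as the paper does) rather than $\|a+b\|^2\le 2\|a\|^2+2\|b\|^2$, which would double the $(\omega+N)/N$ constant.
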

\begin{proof}
	The specific proof the theorem is presented in Appendix B.  
\end{proof}
From Lemma \ref{the1}, the convergence rate of QDFL is $\mathcal{O}(1/K)$. The convergence is bounded by $\frac{L\eta\tau\sigma^2(\omega+N)}{N}+\left(2\alpha +\frac{2}{3}\right)L^2\eta^2\sigma^2\tau^2+\delta^2$ when $K\to\infty$ for a fixed learning rate $\eta$. This bound reflects the influence of gradient estimation variance $\sigma^2$ and gradient divergence $\delta^2$.

In the existing convergence analysis on decentralized SGD frameworks 
in \cite{koloskova2019decentralized}, the convergence bound is expressed with the order of approximation based on the assumption of strong convexity,  and considers decentralized SGD with only a single local update between two inter-node communications. We provide a deterministic convergence bound of QDFL with $\tau$ local updates between two inter-node communications, which is a common improvement for communication-efficiency in FL area \cite{wang2019adaptive,reisizadeh2020fedpaq,liu2020accelerating,wang2021cooperative}. Moreover, the convergence guarantees of the QDFL do not need the convex loss function assumption.

Based on the general result of QDFL in Lemma \ref{the1}, the convergence bound of LM-DFL is as follows.

\begin{theorem}[\textbf{Convergence of LM-DFL}]\label{LMDFLconvergence}
	Consider the average models over iteration $K$ according to the LM-DFL method outlined in Algorithm \ref{alg_2}. Suppose the conditions 1-5 in Assumption \ref{ass1} are satisfied with i.i.d data distribution ($\delta=0$). If the learning rate is $\eta=1/L\sqrt{K}$ with large $K$,
	then the expectation of the gradient norm average after $K$ iterations is bounded as
	\begin{align}\label{LMDFLcon}
	\notag\mathbb{E}\left[\frac{1}{K}\sum_{k=1}^{K}\|\nabla F(\textup{\textbf{u}}_k)\|^2\right]\leq\frac{2L[F(\textup{\textbf{u}}_1)-F_{\textup{inf}}]}{\tau\sqrt{K}}+\frac{\tau\sigma^2d}{12s^2N\sqrt{K}}\\
	+\frac{\tau\sigma^2}{\sqrt{K}}+\left(2\alpha+\frac{2}{3}\right)\frac{\sigma^2\tau^2}{K},
	\end{align}
	where $\alpha=\frac{\zeta^2}{1-\zeta^2}+\frac{\zeta}{(1-\zeta)^2}.$
\end{theorem}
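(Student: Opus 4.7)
The plan is to derive the statement by specializing the general convergence bound for QDFL (Lemma~\ref{the1}) to the LM-DFL instance, using the distortion bound already established for the LM vector quantizer and the hypotheses specific to this theorem (i.i.d.\ data and the particular learning-rate choice). Since LM-DFL is literally the QDFL update \eqref{QDFLlearningrule} with $Q = Q_L$, everything in Lemma~\ref{the1} applies verbatim, so no re-derivation of the gradient-descent analysis is needed.

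First I would invoke Lemma~\ref{the1} with the operator $Q_L$ in place of the generic $Q$. By the LM distortion theorem \eqref{distortionbound}, the distortion parameter in Definition~\ref{distor} can be taken as $\omega = d/(12s^2)$. Next, the i.i.d.\ data assumption gives $\delta^2 = 0$, killing the last term of \eqref{QDFLcon}. Substituting $\eta = 1/(L\sqrt{K})$ into the four remaining terms is a direct computation: the first term becomes $2L[F(\textbf{u}_1)-F_{\textup{inf}}]/(\tau\sqrt{K})$; the second term $L\eta\tau\sigma^2(\omega+N)/N$ expands as $\tau\sigma^2(d/(12s^2)+N)/(N\sqrt{K})$, which splits into $\tau\sigma^2 d/(12s^2 N\sqrt{K}) + \tau\sigma^2/\sqrt{K}$; the third term $(2\alpha+2/3)L^2\eta^2\sigma^2\tau^2$ collapses to $(2\alpha+2/3)\sigma^2\tau^2/K$. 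Summing these reproduces \eqref{LMDFLcon} exactly.

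Before concluding, I must check that the step size $\eta = 1/(L\sqrt{K})$ satisfies the admissibility condition \eqref{con-eta}. The right-hand side of \eqref{con-eta} is a strictly positive constant depending only on $N$, $L$, $\tau$, $\alpha$, and $\omega = d/(12s^2)$, none of which depend on $K$. Since $\eta = 1/(L\sqrt{K}) \to 0$ as $K \to \infty$, it lies below this fixed threshold for all sufficiently large $K$, which is the content of the "with large $K$" hypothesis. This justifies using Lemma~\ref{the1} at this step size.

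The only conceptually nontrivial ingredient is the distortion bound \eqref{distortionbound}, which has already been proved (Appendix~A). The main obstacle in this proof is thus essentially bookkeeping: tracking how the $\omega+N$ factor in the second term of \eqref{QDFLcon} splits into the two terms of \eqref{LMDFLcon}, and making sure the $\sqrt{K}$ versus $K$ scaling comes out correctly after substituting the chosen learning rate. No new analytical machinery is required beyond what Lemma~\ref{the1} already provides.
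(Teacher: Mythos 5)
Your proposal is correct and follows exactly the route the paper intends: the paper's own proof simply states that the result follows from Lemma~\ref{the1}, and your specialization with $\omega = d/(12s^2)$, $\delta = 0$, and $\eta = 1/(L\sqrt{K})$ (including the check that the step-size condition \eqref{con-eta} holds for sufficiently large $K$) is precisely that derivation, carried out in more detail than the paper itself provides.
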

\begin{proof}
	The result can be derived from Lemma \ref{the1} straightforwardly.
\end{proof}
According to Theorem \ref{LMDFLconvergence}, we analyze the convergence rate of LM-DFL and convergence dependence on LM-DFL topology and quantization distortion, which are presented in the following remarks respectively. 
\newtheorem{remark}{Remark}

\begin{remark}[\textbf{Influence of Quantization Distortion}]\label{re3}
	\textup{Theorem \ref{LMDFLconvergence} implies that the convergence bound is influenced by quantization distortion, which is dependent on $d,s$. The quantization distortion of LM-DFL increases with gradient dimension $d$ and decreases with the number of levels $s$. The convergence bound increases with the quantization distortion from the second and third terms in \eqref{LMDFLcon}. Therefore a larger gradient dimension $d$ will
	make convergence of LM-DFL worse, while a larger number of levels $s$ will make convergence of LM-DFL better. Increasing quantization distortion will lift convergence bound, degrading convergence performance.}
\end{remark}

\begin{remark}[\textbf{Order-wise Convergence Rate}]
	\textup{The result in Theorem \ref{LMDFLconvergence} implies the following order-wise rate
	\begin{align*}
	\notag\mathbb{E}\left[\frac{1}{K}\sum_{k=1}^{K}\|\nabla F(\textup{\textbf{u}}_k)\|^2\right]\leq\mathcal{O}\left(\frac{1}{\sqrt{K}}\right)+\mathcal{O}\left(\frac{1}{K}\right).
	\end{align*}
	When $K$ is large enough, the $1/K$ term will be dominated by the $1/\sqrt{K}$ term.  LM-DFL achieves the convergence rate of $\mathcal{O}(1/\sqrt{K})$ with i.i.d data distribution for non-convex loss functions. Thus, LM-DFL recovers the convergence rate of Cooperative SGD \textup{\cite{wang2021cooperative} \cite{lian2017can}}.}
\end{remark}

\begin{remark}[\textbf{Influence of LM-DFL Topology}]\label{topology}
	\textup{Theorem \ref{LMDFLconvergence} states that the convergence bound is dependent on the network topology of LM-DFL. We can see $\alpha$ is increasing with the second largest absolute eigenvalue $\zeta$. Then the convergence bound increases with the second largest absolute eigenvalue $\zeta$.
	Note that $\zeta$ reflects the exchange rate of all local models during a single inter-node communication step. A larger $\zeta$ means a sparser matrix, resulting to a worse model consensus. When $\zeta=1$, the confusion matrix $\textup{\textbf{C}}=\textup{\textbf{I}}$ where each node can not communicate with any other nodes in the network. When $\zeta=0$,  
	the confusion matrix $\textup{\textbf{C}}=\textup{\textbf{J}}$, which nodes communicate differential model parameters with all other nodes. 
	Therefore, we can conclude that sparser network topology leads to worse convergence.}
\end{remark}

\section{Doubly-Adaptive DFL}
In order to improve the communication efficiency of LM-DFL, we propose a doubly-adaptive DFL to adapt the number of quantization levels $s$ in the training. 

Based on LM-DFL convergence upper bound, we try to optimize the number of quantization levels $s_k$ in $k$-th iteration by minimizing the convergence bound. The result shows doubly-adaptive DFL with ascending number of quantization levels $s$ can achieve a given targeted training loss using much fewer communicated bits. In Fig. \ref{fig_4}, ascending $s$ shows the best convergence performance. 

To find out how the number of quantization levels $s$ affects the convergence bound of LM-DFL,  
we define the number of bits communicated between any node $i$ and $j$ over training, with $i,j=1,2,...,N$, $i\neq j$ and $c_{ij}\neq0$, as $B$.

\begin{figure}[!t]
	\centering
	\includegraphics[scale=0.5]{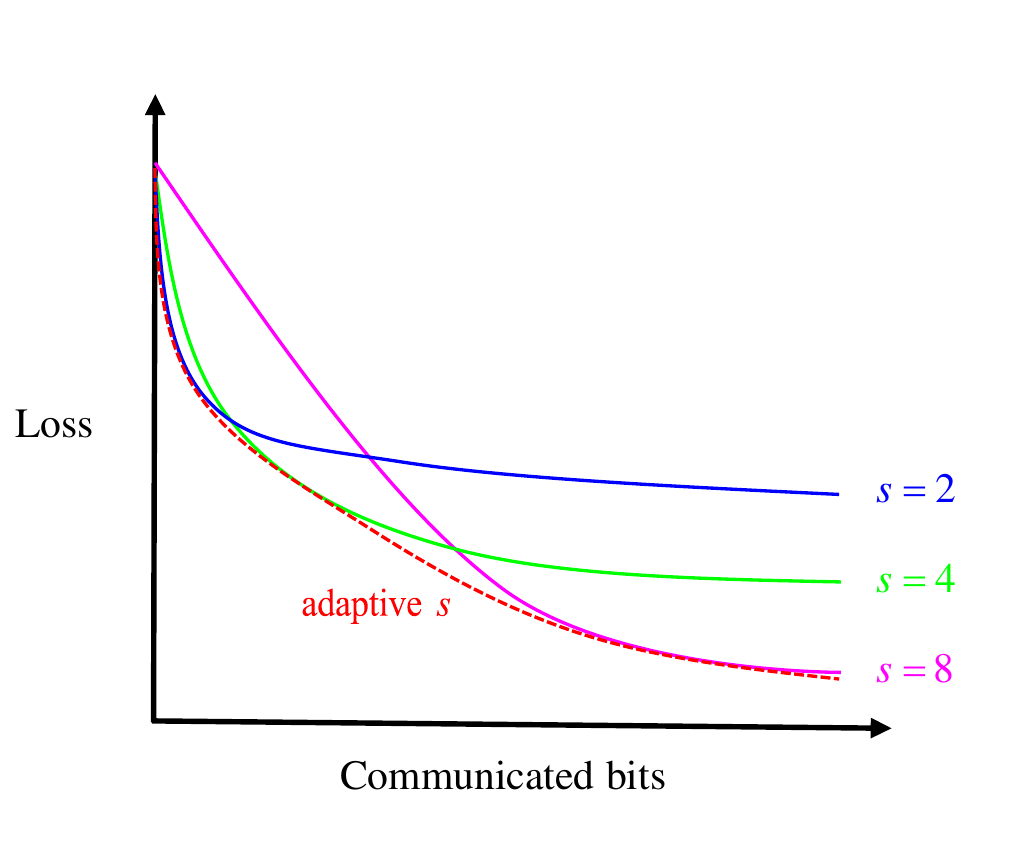}        
	\caption{Training loss versus communicated bits under adaptive $s$ and fixed $s$.}
	\label{fig_4}
\end{figure}

Because LM-DFL performs synchronized local updates and inter-node communications and each node perform LM vector quantization with the same $s$ in iteration $k$, the $B$ is the same for all nodes. Then we have $K=B/2C_s$ where $C_s$ is the number of bits communicated  by a node $i$ to node $j$ in an inter-node communication stage, defined in \eqref{numberofbits}. We can obtain the convergence bound of LM-DFL versus $s$ as follows.

\begin{theorem}[\textbf{LM-DFL Convergence versus} \textit{s}]\label{thes}
    The expectation of the gradient norm average of LM-DFL is bound by
	\begin{align}
	\notag\mathbb{E}\left[\frac{2C_s}{B}\sum_{k=1}^{B/2C_s}\|\nabla F(\textup{\textbf{u}}_k)\|^2\right]\leq
	A_1\log_2(2s)+\frac{A_2}{s^2}+A_3,
	\end{align}
	where 
	\begin{align*}
	A_1&=\frac{4[F(\textup{\textbf{u}}_1)-F_{\textup{inf}}]d}{\eta\tau B},\ A_2=\frac{L\eta\tau\sigma^2d}{12N},\\
	A_3&=\frac{A_1}{d}(d+32)+(2\alpha+\frac{2}{3})L^2\eta^2\sigma^2\tau^2+\delta^2+L\eta\tau\sigma^2.
	\end{align*}
\end{theorem}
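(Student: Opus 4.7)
The plan is to derive this bound directly from Lemma \ref{the1} (the QDFL convergence bound) specialized to the LM quantizer, and then to trade the iteration count $K$ for the communicated bit budget $B$. First, I would substitute the LM distortion constant $\omega = d/(12s^2)$ from \eqref{distortionbound} into the bound \eqref{QDFLcon}. This immediately turns the second term $L\eta\tau\sigma^2(\omega + N)/N$ into
\begin{align*}
\frac{L\eta\tau\sigma^2 d}{12 s^2 N} + L\eta\tau\sigma^2,
\end{align*}
where the first piece is exactly $A_2/s^2$ and the second piece will be absorbed into $A_3$. The other two terms, $(2\alpha + 2/3)L^2\eta^2\sigma^2\tau^2$ and $\delta^2$, are independent of $s$ and can be placed into $A_3$ as written.

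Next I would convert the first term $2[F(\textbf{u}_1)-F_{\textup{inf}}]/(\eta K \tau)$ using the relation between iteration count and bit budget. Since LM-DFL is synchronous and each pair of connected nodes exchanges two quantized differential updates per iteration (namely $Q_L(\textbf{x}_{k,\tau}^{(j)} - \textbf{x}_k^{(j)})$ and $Q_L(\textbf{x}_k^{(j)} - \textbf{x}_{k-1,\tau}^{(j)})$), the per-iteration communicated bit count between two neighbors is $2C_s$, hence $K = B/(2C_s)$. Substituting and using \eqref{numberofbits} gives
\begin{align*}
\frac{2[F(\textbf{u}_1)-F_{\textup{inf}}]}{\eta K \tau} = \frac{4 C_s [F(\textbf{u}_1)-F_{\textup{inf}}]}{\eta \tau B} = \frac{4[F(\textbf{u}_1)-F_{\textup{inf}}]\bigl(d\lceil \log_2 s\rceil + d + 32\bigr)}{\eta \tau B}.
\end{align*}
Applying the elementary estimate $\lceil \log_2 s\rceil \leq \log_2 s + 1 = \log_2(2s)$ separates this expression into an $s$-dependent part $A_1 \log_2(2s)$ and a constant part $(A_1/d)(d+32)$, where $A_1 = 4[F(\textbf{u}_1)-F_{\textup{inf}}]d/(\eta \tau B)$ as in the statement; the constant part joins $A_3$.

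Finally, collecting the three groups of terms yields the claimed decomposition $A_1 \log_2(2s) + A_2/s^2 + A_3$. The step requiring the most care is the $C_s$-substitution: I need to be precise that $B$ counts bits on a directed neighbor-to-neighbor channel over the whole training, so that both the factor of $2$ (two quantized vectors per communication round) and the identification $K\cdot 2 C_s = B$ are consistent. Beyond that, the argument is purely algebraic rearrangement of Lemma \ref{the1} together with the distortion bound \eqref{distortionbound} and the bit-count \eqref{numberofbits}; no new probabilistic estimate is needed, and the $\lceil \log_2 s\rceil \leq \log_2(2s)$ relaxation is the only inequality introduced at this stage.
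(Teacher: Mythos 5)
Your proposal is correct and follows essentially the same route as the paper's Appendix C: substitute $\omega=d/(12s^2)$ and $K=B/(2C_s)$ into the bound of Lemma \ref{the1}, relax $\lceil\log_2 s\rceil\leq\log_2(2s)$ via \eqref{numberofbits}, and regroup into $A_1\log_2(2s)+A_2/s^2+A_3$. No gaps; your extra care about the factor of $2$ in $K=B/(2C_s)$ is consistent with how the paper defines $B$ just before the theorem.
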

\begin{proof}
	The detailed proof is presented in Appendix C. 
\end{proof}

Then we can obtain the optimal $s$ to minimize the convergence bound presented in Theorem \ref{thes}. By differentiating $A_1\log_2(2s)+\frac{A_2}{s^2}+A_3$ with respect to $s$, one has 
 \begin{align*}
 s^*=\sqrt{\frac{A_4}{A_5[F(\textup{\textbf{u}}_1)-F_{\textup{inf}}]}}.
 \end{align*}
where $A_4=L\eta^2\tau^2\sigma^2B$, and $A_5=24N^2\log_2e$. 

\begin{figure}[!t]
	\centering
	\includegraphics[scale=0.5]{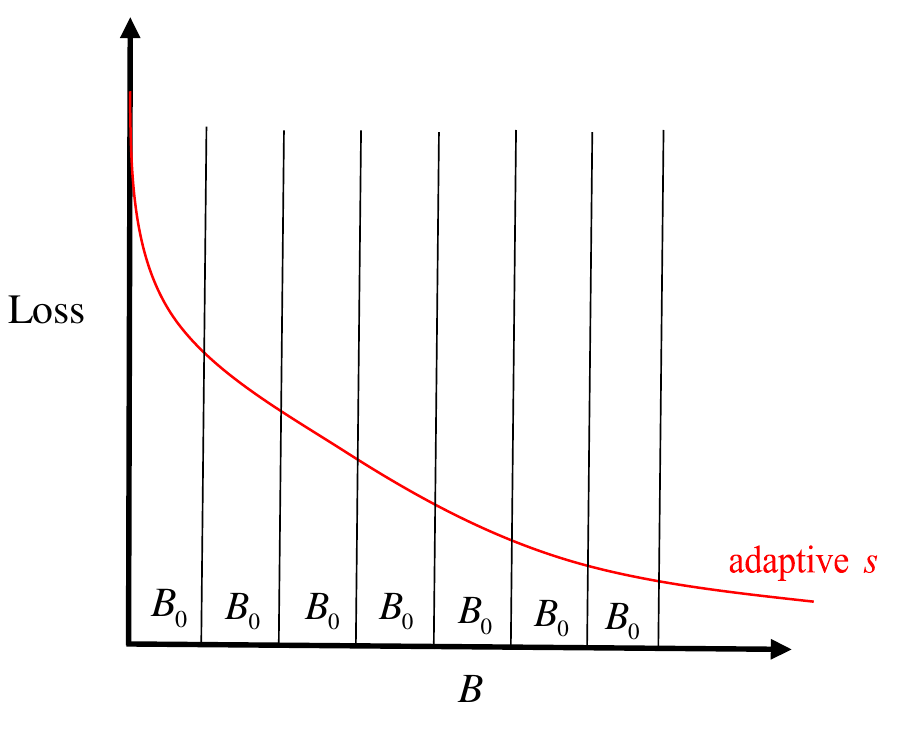}        
	\caption{The intervals of communicated bits}
	\label{fig_5}
\end{figure}

\begin{algorithm}[!t]
	\caption{\textbf{Doubly-Adaptive DFL}} 
	\label{alg_3}
	\begin{algorithmic}[1]
		\REQUIRE ~~\\ 
		Learning rate $\eta$\\
		Total number of iterations $K$\\
		Number of local update $\tau$ in an iteration\\
		Confusion matrix $\textbf{C}$
		
		\STATE Set the initial value of $\textbf{X}_{1,0}$. 
		\STATE Set the initial number of quantization levels $s_1$ for all nodes.
		
		\FOR{$k=1,2,...,K$}
		\FOR{$t=0,1,...,\tau-1$} 
		\STATE Nodes calculate stochastic gradient
		\STATE Nodes perform SGD in parallel: \\ 
		$\textbf{X}_{k,t+1}=\textbf{X}_{k,t}-\eta\textbf{G}_{k,t}$ \hfill$//$\textit{local updates}
		\ENDFOR
		\STATE Each node $i$ evaluates $s_k^{(i)}$ according to $s_k^{(i)}\!=\!\sqrt{\frac{F_i(\textbf{x}_1^{(i)})}{F_i(\textbf{x}_k^{(i)})}}$.
		\STATE Each node $i$ calculates $\textbf{x}_{k,\tau}^{(i)}-\textbf{x}_{k}^{(i)}$, $\textbf{x}_{k}^{(i)}-\textbf{x}_{k-1,\tau}^{(i)}$ and computes the statistics to construct its probability density function $\phi_k^{(i)}(x)$.
		\STATE Each node $i$ quantizes the differential model parameters to obtain $Q_{L}(\textbf{x}_{k,\tau}^{(i)}-\textbf{x}_{k}^{(i)})$ and
        $Q_{L}(\textbf{x}_{k}^{(i)}-\textbf{x}_{k-1,\tau}^{(i)})$ by LM vector quantizer.
        \STATE Nodes exchange the quantized differential model parameters and calculate the model estimated parameters
        $$	\hat{\textbf{X}}_k=\hat{\textbf{X}}_{k-1}+Q_L(\textbf{X}_{k-1,\tau}-\textbf{X}_{k-1})+Q_L(\textbf{X}_k-\textbf{X}_{k-1,\tau}).$$
        \STATE Then update the initial model parameter for next iteration by \\$\quad\textbf{X}_{k+1}\!=[\hat{\textbf{X}}_k\!+\!Q_L(\textbf{X}_{k,\tau}\!-\!\textbf{X}_{k})]\textbf{C}$. \hfill$//$\textit{communication}
		\ENDFOR
	\end{algorithmic}
\end{algorithm}
\begin{figure*}[!t]
	\centering
	\subfloat[MNIST: Training loss]{\includegraphics[scale=0.35]{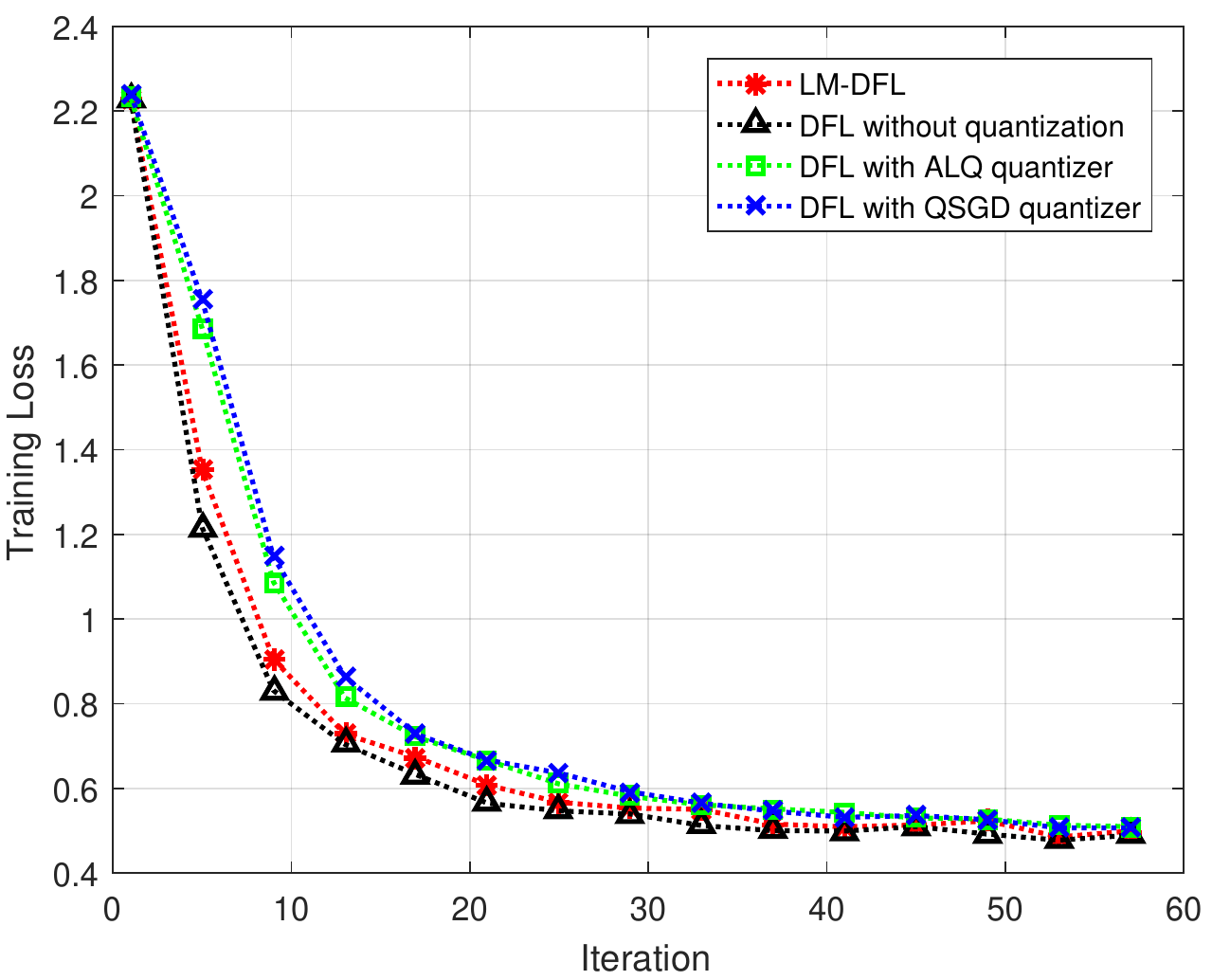}\label{loss_mnist}}
	\subfloat[MNIST: Time Progression]{\includegraphics[scale=0.35]{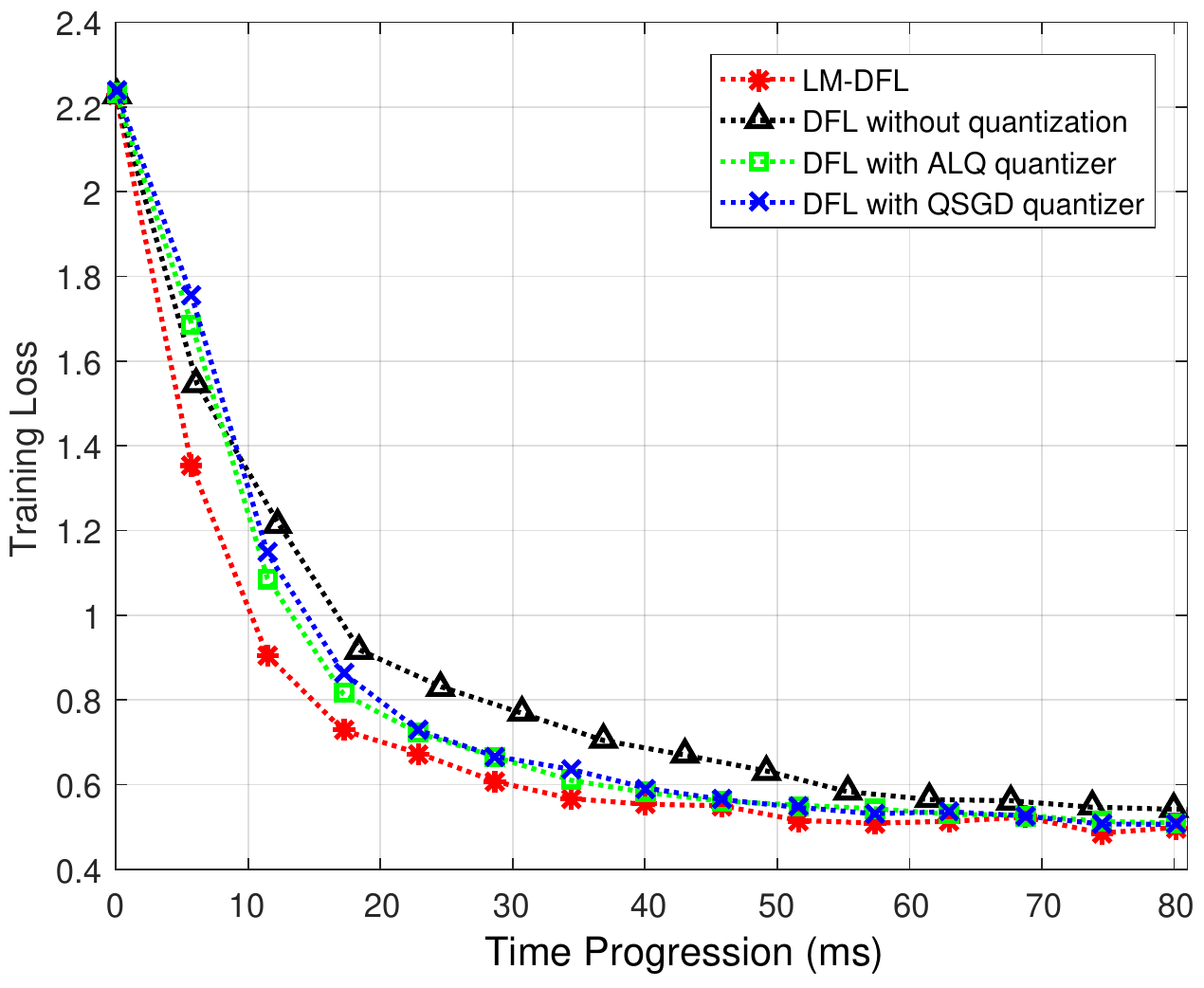}\label{bits_mnist}}
	\subfloat[MNIST: Testing accuracy]{\includegraphics[scale=0.35]{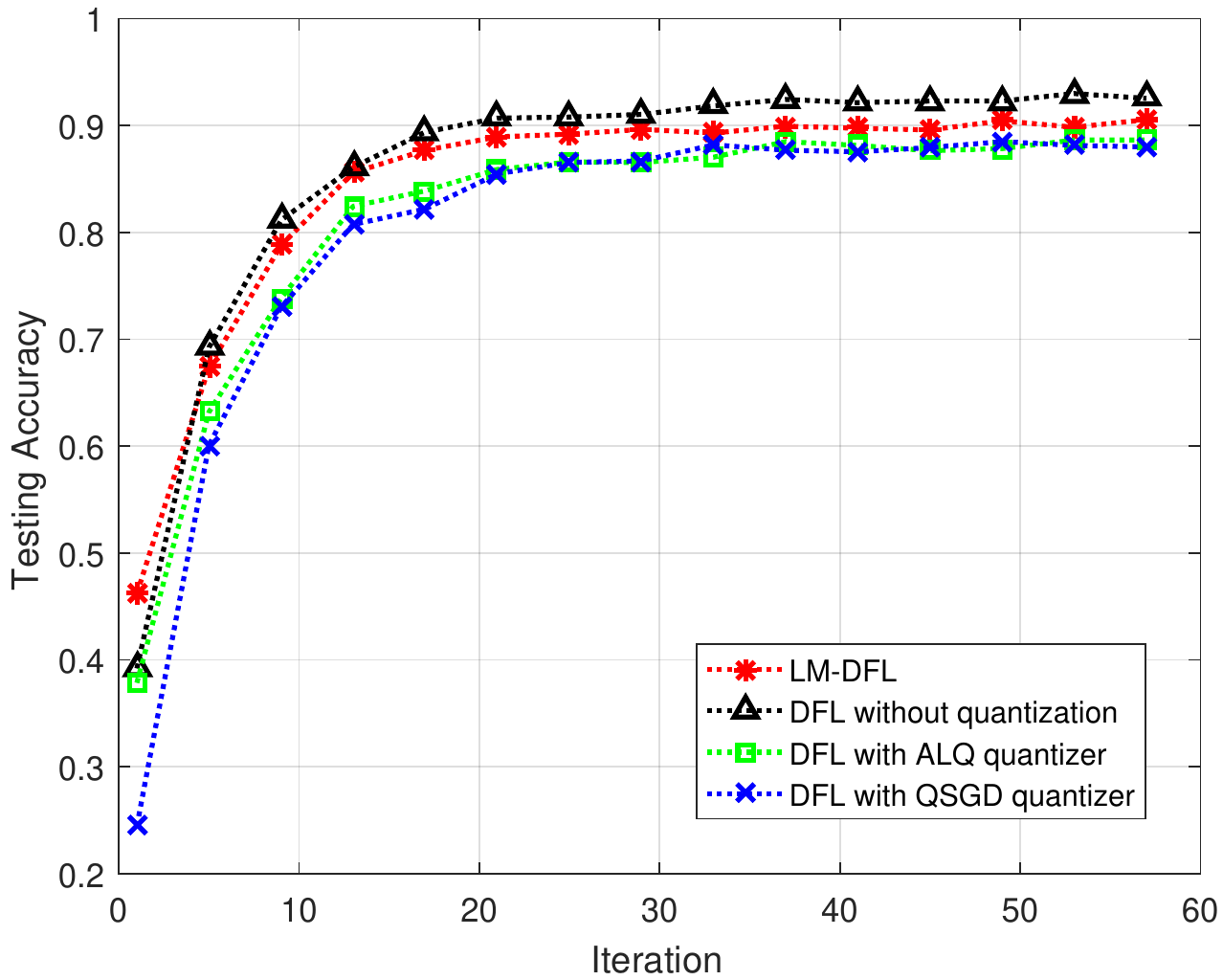}\label{acc_mnist}}
	\subfloat[MNIST: Quantization distortion]{\includegraphics[scale=0.35]{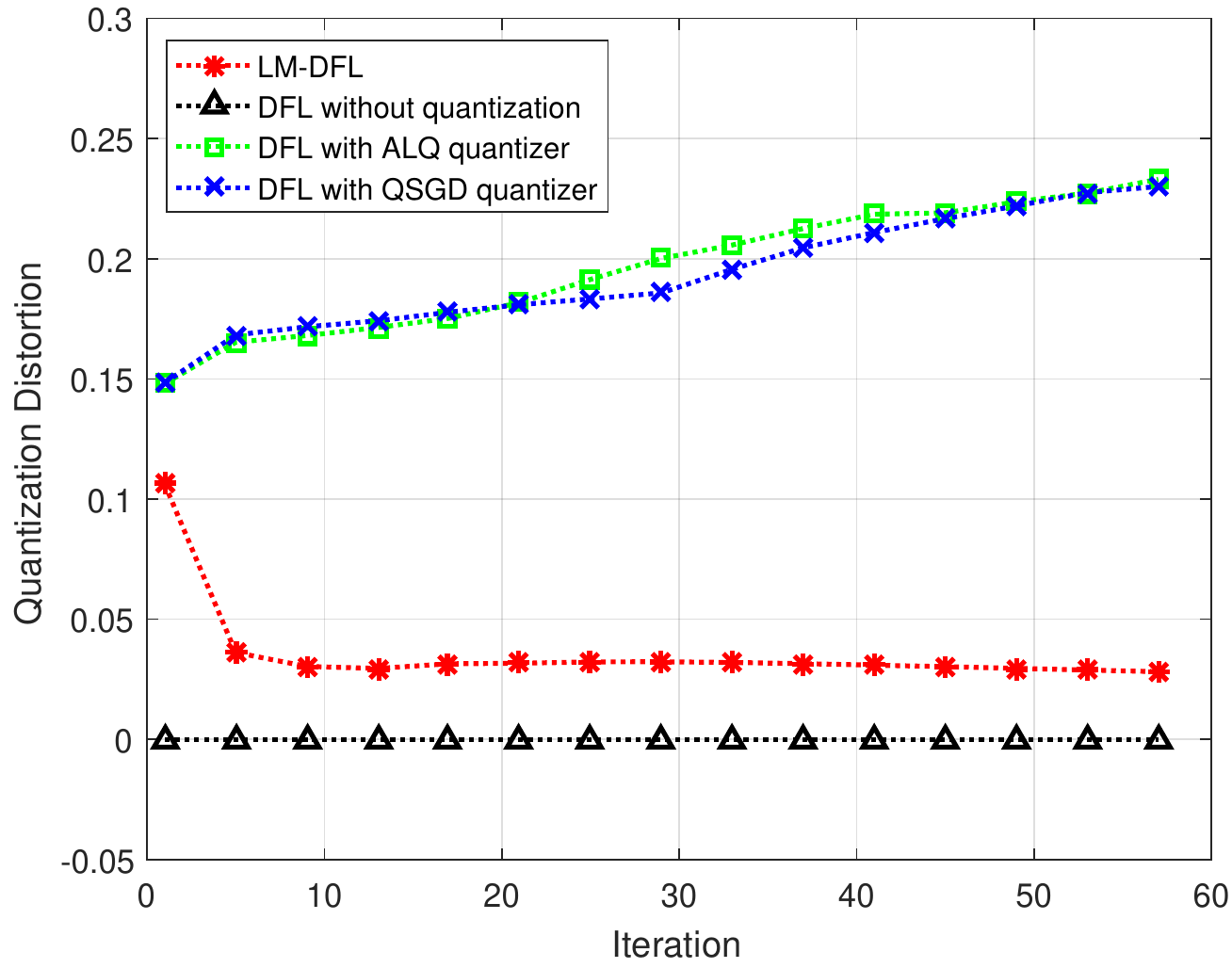}\label{distortion_mnist}}
	\\
	\subfloat[CIFAR-10: Training loss]{\includegraphics[scale=0.35]{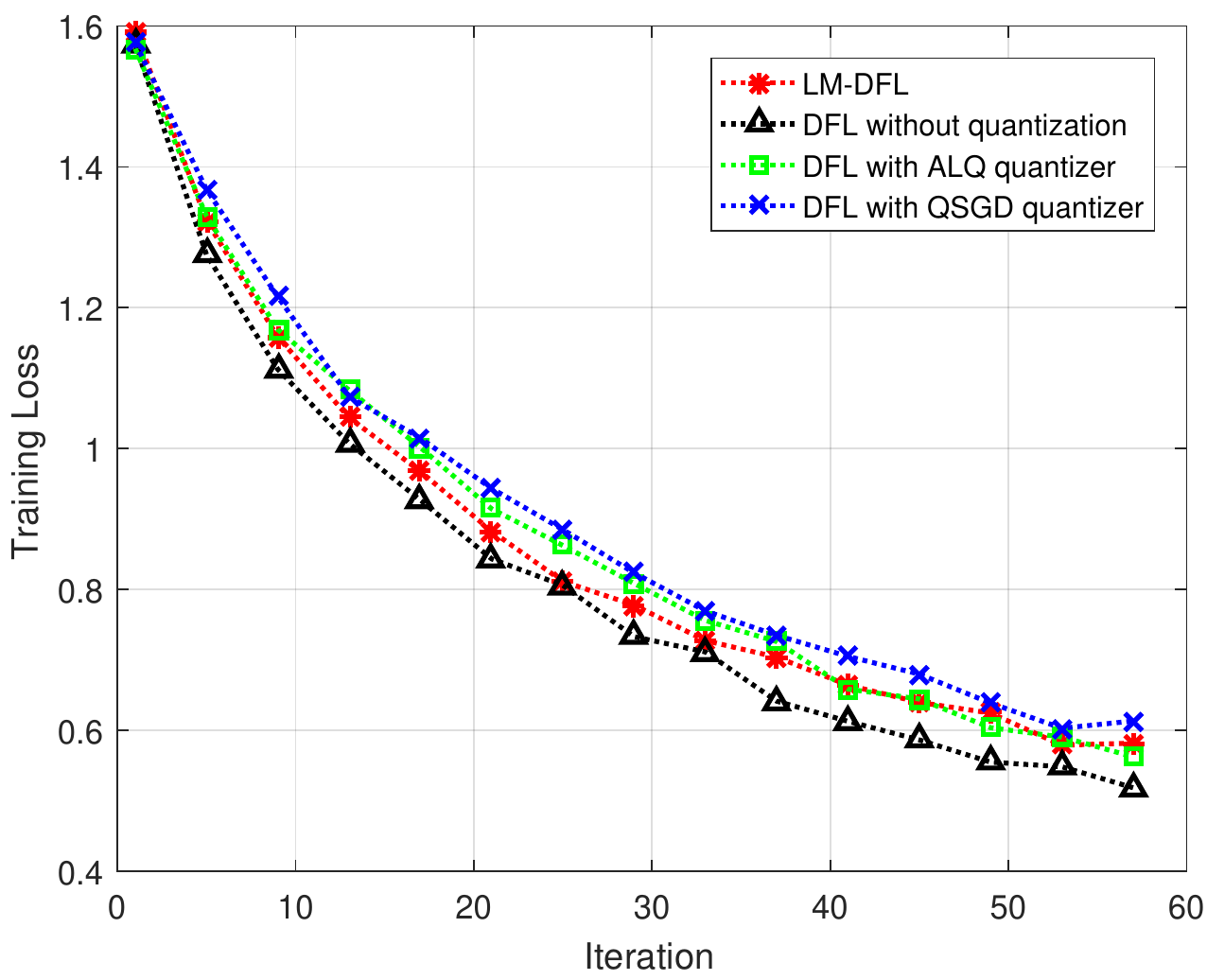}\label{loss_cifar}}
	\subfloat[CIFAR-10: Time Progression]{\includegraphics[scale=0.35]{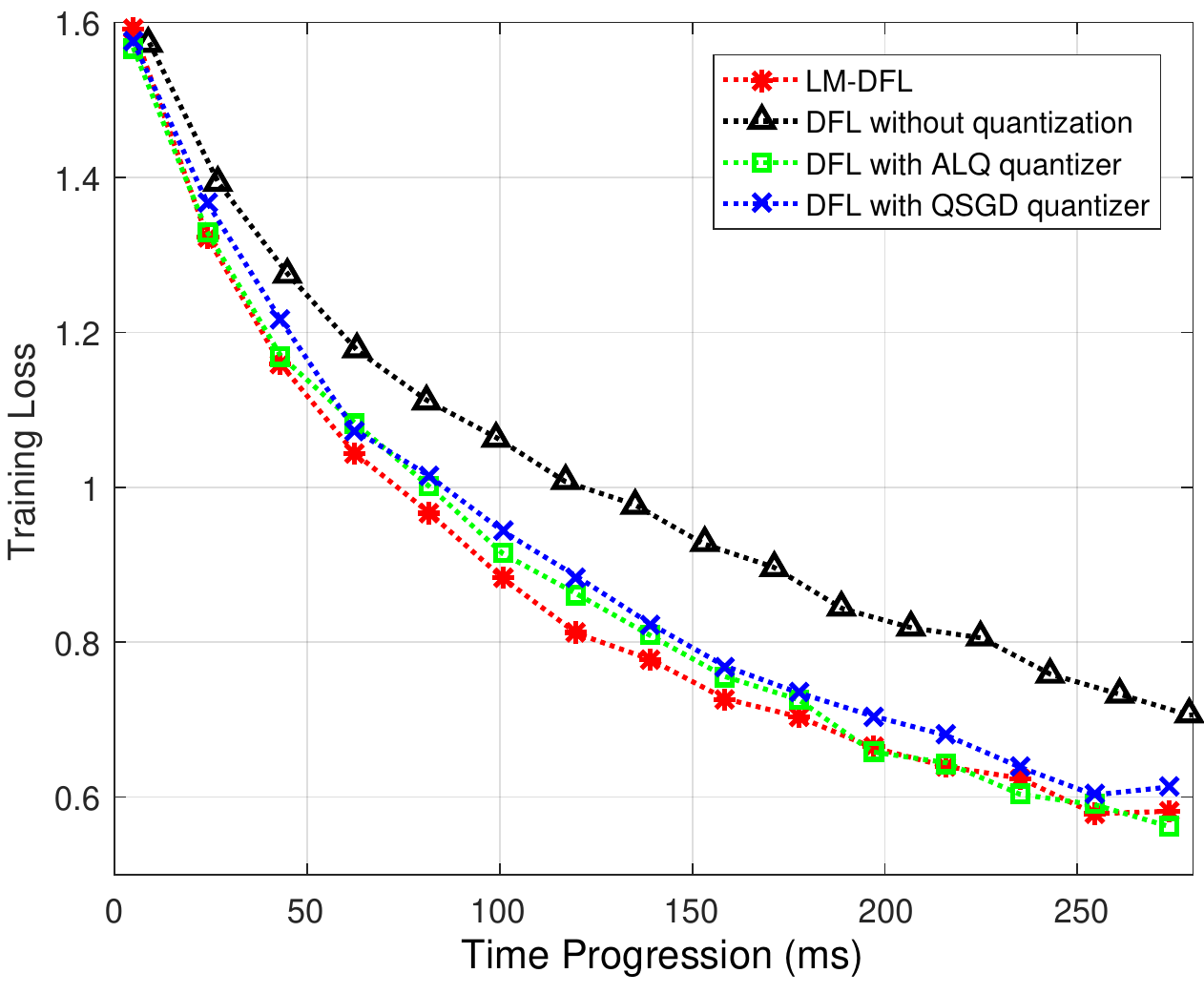}\label{bits_cifar}}
	\subfloat[CIFAR-10: Testing accuracy]{\includegraphics[scale=0.35]{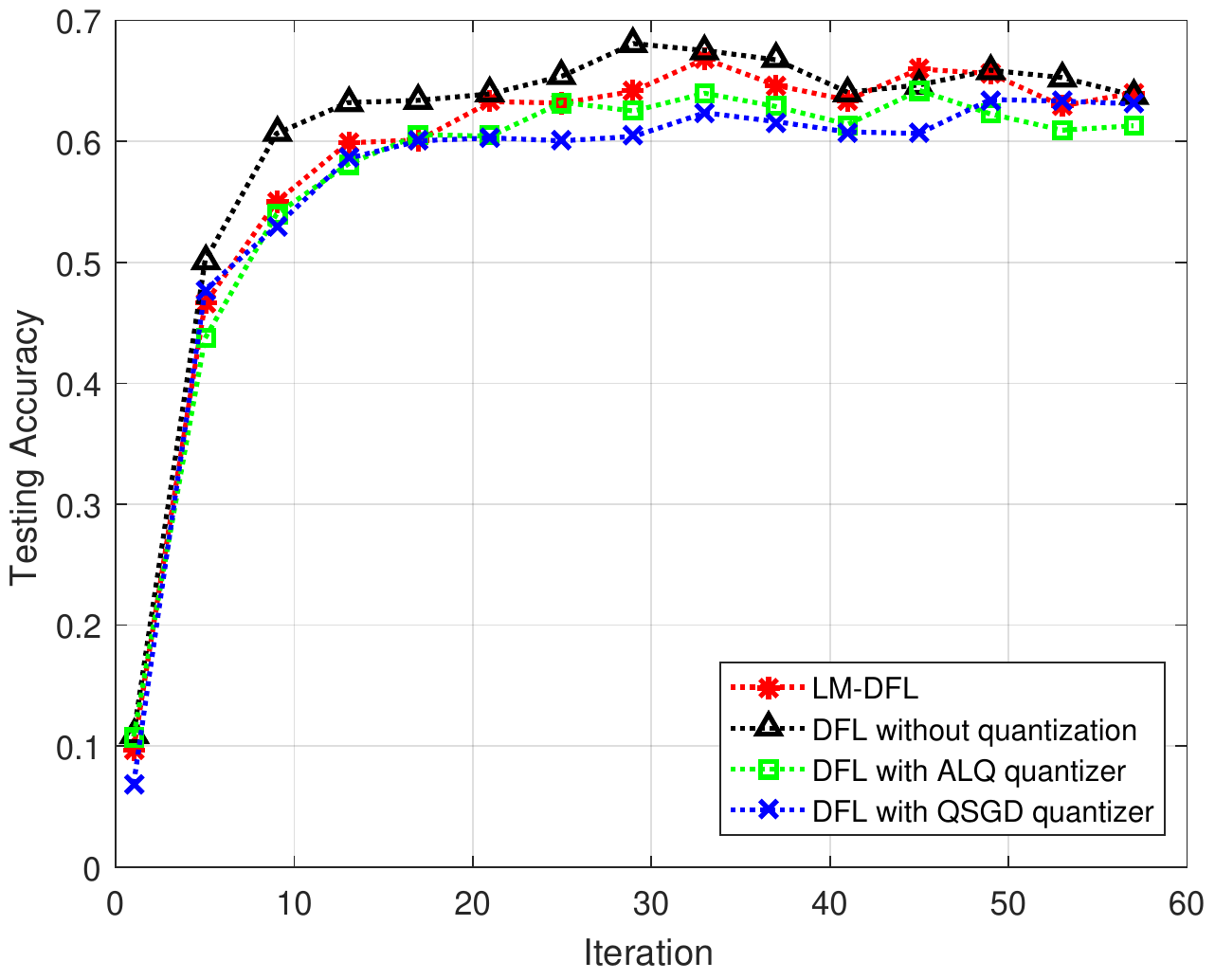}\label{acc_cifar}}
	\subfloat[CIFAR-10: Quantization distortion]{\includegraphics[scale=0.35]{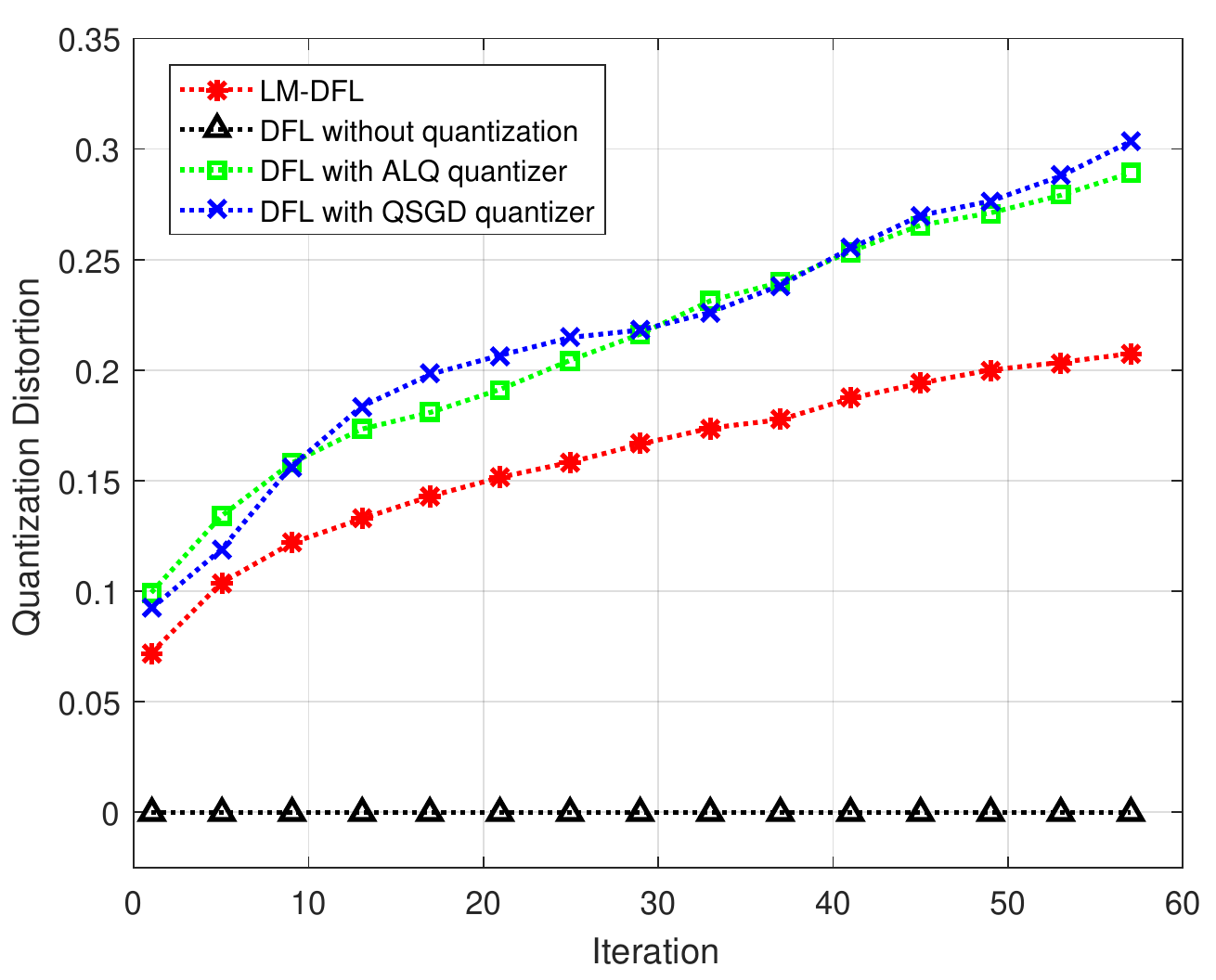}\label{distortion_cifar}}
	\caption{Experiments on MNIST and CIFAR-10 datasets with CNN under different approaches. There are 10 nodes for the DFL topology. We set the number of local updates $\tau=4$ in an iteration. For the training on MNIST, we set the learning rate $\eta=0.002$. For the training on CIFAR-10, we set the learning rate $\eta=0.001$.}
	\label{convergencecurve}
\end{figure*}
We partition the entire training process into many uniform communication intervals as presented in Fig. \ref{fig_5}. In each interval, node $i$ communicates $B_0$ bits with node $j$. Therefore, 
the optimal number of quantization levels of iteration $k$ is
\begin{align}\label{s_k}
s_k=\sqrt{\frac{A_4^0}{A_5[F(\textup{\textbf{u}}_k)-F_{\textup{inf}}]}}
\end{align}
where $A_4^0=L\eta^2\tau^2\sigma^2B_0$. Therefore, according to \eqref{s_k} and suppose the minimal value of $F(\textbf{x})$ is zero, i.e., $F_\textup{inf}=0$, we obtain 
\begin{align}\label{bests}
s_k\approx\sqrt{\frac{F(\textup{\textbf{u}}_1)}{F(\textup{\textbf{u}}_k)}}s_1.
\end{align}
From \eqref{bests}, doubly-adaptive DFL adopts ascending number of quantization levels for optimal convergence. Compared with the methods using fixed number of quantization levels,  doubly-adaptive DFL generates less communication bits to reach the same convergence performance. The intuition is the beginning of training presents fast descent of loss function, where coarse-grained quantizing operation is enough for model exchange. When the training process is almost convergent, fine-grained quantization is needed for any potential gradient descent to achieve a high generalizing performance of DFL network.

The specific operating steps of doubly-adaptive DFL are presented in Algorithm \ref{alg_3}. 
We summarize the two adaptive aspects of doubly-adaptive DFL as follows:
\begin{enumerate}
	\setlength{\itemsep}{0.5ex}
	\item[{1.}] \textbf{Adaptive number of quantization levels}  $s_k$: In order to achieve a given targeted convergence with much fewer communicated bits, doubly-adaptive DFL adopts the adaptive number of quantization levels $s_k\approx\sqrt{\frac{F(\textup{\textbf{u}}_1)}{F(\textup{\textbf{u}}_k)}}s_1$. Note that in Algorithm \ref{alg_3}, we evaluate the adaptive $s_k^{(i)}$ in each node by using local model $\textbf{x}_k^{(i)}$ and local loss function $F_i(\textbf{x}_k^{(i)})$ because the averaging model $\textbf{u}_k$ and the global loss function $F(\textbf{u}_k)$ can not be observed in a DFL framework.
	\item[{2.}] \textbf{Adaptive sequence of quantization levels} $\boldsymbol{\ell}_k^{(i)}$: In order to minimize quantization distortion, doubly-adaptive DFL uses LM vector quantizer for quantization of differential model parameters based on $s_k$.
\end{enumerate}
The convergence of the proposed double-adaptive DFL can be guaranteed, because the proposed double-adaptive DFL can be regarded as a special case of QDFL. The convergence bound is hard to derive due to variable learning rate. Thus, we first extend the convergence condition in Definition \ref{def1} from fixed learning rate to variable learning rate, i.e.,

	\begin{definition}[\textbf{Convergence Condition with Variable Learning Rate}]\label{def6}
 Consider the variable learning rate, the algorithm converges to a stationary point have changed from (\ref{defF}) into 

 \begin{align}
\mathbb{E}\left[ {\frac{{\sum\limits_{k = 1}^K {{\eta _k}{{\left\| {\nabla F\left( {{u_k}} \right)} \right\|}^2}} }}{{\sum\limits_{k = 1}^K {{\eta _k}} }}} \right]\leq\epsilon.
 \end{align}
	\end{definition} 

Then, we can provide the following convergence bound of doubly-adaptive DFL as follows.

	\begin{theorem}[\textbf{Convergence of Doubly-Adaptive DFL}]
Consider the average models over iteration $K$ according to the doubly-adaptive DFL method outlined in Algorithm \ref{alg_3} and the variable learning rate condition in Definition \ref{def6}. Suppose the conditions 1-5 in Assumption \ref{ass1} are satisfied with i.i.d data distribution ($\delta=0$). Adopt the adaptive number of quantization level  in (\ref{bests}), and if the variable learning rate is

\begin{align}
{\eta _k} \le \frac{{\sqrt {{{\left( {{\varpi _k} + N} \right)}^2} + 4{N^2}\left( {2\alpha  + 1} \right)}  - {\varpi _k} - N}}{{2NL\tau \left( {2\alpha  + 1} \right)}},
\end{align}
where $\alpha  = \frac{{{\zeta ^2}}}{{1 - {\zeta ^2}}} + \frac{\zeta }{{{{\left( {1 - \zeta } \right)}^2}}},{\varpi _k} = \frac{d}{{12s_k^2}}.$ Then the expectation of the gradient norm average after $K$ iterations in bounded as
\begin{equation}
  \begin{aligned}
  &\mathbb{E}\left[ {\frac{{\sum\limits_{k = 1}^K {{\eta _k}{{\left\| {\nabla F\left( {{u_k}} \right)} \right\|}^2}} }}{{\sum\limits_{k = 1}^K {{\eta _k}} }}} \right] \le \frac{{2\left[ {F\left( {{u_1}} \right) - {F_{\inf }}} \right]}}{{\tau \sum\limits_{k = 1}^K {{\eta _k}} }} + \frac{{L\tau {\sigma ^2}\sum\limits_{k = 1}^K {\eta _k^2} }}{{\sum\limits_{k = 1}^K {{\eta _k}} }} \\
  &+ \frac{{L\tau {\sigma ^2}\sum\limits_{k = 1}^K {\eta _k^2\left( {d/s_k^2} \right)} }}{{12N\sum\limits_{k = 1}^K {{\eta _k}} }} 
   + \left( {2\alpha  + \frac{2}{3}} \right){L^2}{\tau ^2}{\sigma ^2}\frac{{\sum\limits_{k = 1}^K {\eta _k^3} }}{{\sum\limits_{k = 1}^K {{\eta _k}} }}
\end{aligned}
\end{equation}
	\end{theorem}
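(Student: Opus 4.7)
The plan is to observe that doubly-adaptive DFL is a special case of the QDFL framework analyzed in Lemma \ref{the1}, with two modifications: the quantization distortion parameter $\omega$ is replaced by the iteration-dependent $\varpi_k = d/(12s_k^2)$ coming from Theorem 2 applied to $s=s_k$, and the learning rate $\eta$ is replaced by the iteration-dependent $\eta_k$. The strategy is to revisit the one-step descent inequality that drives the proof of Lemma \ref{the1} in Appendix B, keep everything indexed by $k$, and then perform a weighted summation (with weights $\eta_k$) rather than a uniform average. This will produce the weighted left-hand side of Definition \ref{def6} and match each summand on the right-hand side of the claim.

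First I would write the $L$-smooth descent inequality for $F$ at $\textbf{u}_{k+1}$ versus $\textbf{u}_k$, using \eqref{DFLQ} and the unbiasedness identity $\mathbb{E}\hat{\textbf{u}}_k=\textbf{u}_k$ from \eqref{est4}. Mirroring Appendix B, the expected one-step decrease should take the form
\begin{align*}
\mathbb{E} F(\textbf{u}_{k+1}) \le \mathbb{E} F(\textbf{u}_k) - \tfrac{\eta_k\tau}{2}\mathbb{E}\|\nabla F(\textbf{u}_k)\|^2 + \tfrac{L\eta_k^2\tau\sigma^2(\varpi_k+N)}{2N} + C_\alpha L^2\eta_k^3\sigma^2\tau^2,
\end{align*}
where $C_\alpha = \alpha + 1/3$ packages the consensus-error contribution and the condition placed on $\eta_k$ in the theorem statement is exactly the per-iteration analogue of \eqref{con-eta} (with $\omega$ replaced by $\varpi_k$) that keeps the coefficient of $\|\nabla F(\textbf{u}_k)\|^2$ at least $\eta_k\tau/2$ after moving the gradient-variance term from the consensus bound to the left-hand side. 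The distortion-dependent piece $L\eta_k^2\tau\sigma^2\varpi_k/(2N) = L\eta_k^2\tau\sigma^2 d/(24Ns_k^2)$ matches (up to a factor absorbed by summation) the third term of the stated bound.

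Next I would bound the consensus error $\sum_i\|\textbf{x}_k^{(i)}-\textbf{u}_k\|^2$. In Appendix B this is where the factor $\alpha = \zeta^2/(1-\zeta^2)+\zeta/(1-\zeta)^2$ arises, as a geometric sum of powers of $\textbf{C}-\textbf{J}$ applied to iteration-$j$ gradient matrices. With variable $\eta_k$ the unrolled dynamics carry factors $\eta_j$ rather than a common $\eta$, so the natural upper bound becomes a weighted sum $\sum_{j\le k}\zeta^{k-j}\eta_j^2\sigma^2\tau$. The cleanest way to collapse this back into a function of $\eta_k$ alone is to exploit the monotonicity of $\eta_k$ that is implicit in the growth of $s_k$ through \eqref{bests} together with the admissibility condition on $\eta_k$, majorize $\eta_j\le\eta_k$, and recover an $\alpha\eta_k^3$ contribution. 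I expect this to be the main obstacle of the proof: without such monotonicity the cross-iteration products do not factor cleanly, and one is forced into a Young's-inequality argument to keep the bookkeeping honest.

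Finally I would sum the per-iteration inequality from $k=1$ to $K$. Telescoping the $F(\textbf{u}_k)$ terms gives $F(\textbf{u}_1)-\mathbb{E} F(\textbf{u}_{K+1})\le F(\textbf{u}_1)-F_{\inf}$. Rearranging, one obtains
\begin{align*}
\tfrac{\tau}{2}\sum_{k=1}^{K}\eta_k\,\mathbb{E}\|\nabla F(\textbf{u}_k)\|^2 \le F(\textbf{u}_1)-F_{\inf} + \tfrac{L\tau\sigma^2}{2}\sum_{k=1}^{K}\eta_k^2 + \tfrac{L\tau\sigma^2}{24N}\sum_{k=1}^{K}\eta_k^2\tfrac{d}{s_k^2} + C_\alpha L^2\tau^2\sigma^2\sum_{k=1}^{K}\eta_k^3.
\end{align*}
Dividing through by $(\tau/2)\sum_k\eta_k$ yields exactly the weighted gradient-norm average appearing in Definition \ref{def6} on the left, and each of the four summands on the right matches one term of the claimed bound (the gradient-divergence term vanishes under the i.i.d.\ assumption $\delta=0$). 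This completes the proof modulo the monotonicity remark handled in the consensus step.
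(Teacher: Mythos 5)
Your proposal follows essentially the same route as the paper's Appendix E: re-run the one-step descent inequality of Appendix B with $\eta$ and $\omega$ replaced by the iteration-indexed $\eta_k$ and $\varpi_k=d/(12s_k^2)$, sum over $k$ so that the gradient terms appear with weights $\eta_k$, impose the per-iteration learning-rate condition so that the aggregate coefficient of $\sum_{i,t}\|\nabla F_i(\textbf{x}_{k,t}^{(i)})\|^2$ is nonpositive, telescope, and divide by $\sum_k\eta_k$. One caveat on the consensus step, which you rightly flag as the crux and which the paper itself glosses over: your majorization $\eta_j\le\eta_k$ for $j\le k$ is justified by the growth of $s_k$, but the condition on $\eta_k$ only gives an \emph{increasing upper bound}, not an increasing schedule (the experiments in fact use a decreasing $\eta_k$); for a decreasing schedule you should instead bound $\eta_k\eta_j^2\le\eta_j^3$ and swap the order of summation, so that the geometric factors $\zeta^{k-j}$ are summed over $k>j$ and still yield the constant $\alpha$ multiplying $\sum_j\eta_j^3$ --- either direction of monotonicity works, but each requires its own swap.
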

	\begin{proof}
The detailed proof is presented in Appendix E.
	\end{proof}

\section{Simulation and discussion}
In this section, we present the simulation results of the proposed LM-DFL and doubly-adaptive DFL frameworks. 

\subsection{Setup}
In order to evaluate the performance of LM-DFL and doubly-adaptive DFL framework, we first need to build a DFL network and conduct our experiments using the built decentralized network. Using python environment and pytorch platform, we establish a DFL framework with 10 nodes and the second largest absolute eigenvalue is $\zeta=0.87$.
All 10 nodes have local datasets. Based on the local datasets, local model training is conducted. After finishing $\tau$ local updates, each node communicates its differential model parameter with its connected nodes. Note that the local models distributed in each nodes have the same structure for model aggregating and the training parameters in each nodes have the same setup. For example, LM-DFL sets the same learning rate $\eta$ and doubly-adaptive DFL sets the same number of quantization levels $s_k$ in iteration $k$.

\subsubsection{Baselines}
We compare LM-DFL with the following baseline approaches:
\begin{enumerate}
	\setlength{\itemsep}{0.5ex}
	\item[{(a)}] \textbf{DFL without quantization}: DFL without quantization is studied in \cite{lian2017can} \cite{wang2021cooperative}, where the model parameters are exchanged between connected nodes with full precision. In our experiments, we use the number of quantization distortion $s=16,000$ to achieve the full precision. Therefore, the transmission of model parameters is lossless.
	\item[{(b)}] \textbf{DFL with ALQ quantizer}: We deploy ALQ from \cite{faghri2020adaptive}, which is a centralized FL framework with adaptive quantization, into DFL framework as a baseline of LM-DFL. Coordinate descent is performed in DFL with quantizer, where the sequence of quantization levels is changed with iterations based on coordinate descent.
	\item[{(c)}] \textbf{DFL with QSGD quantizer}: We deploy the quantizer of QSGD \cite{alistarh2017qsgd}, which is a centralized FL framework, into DFL framework as a baseline of LM-DFL. DFL with QSGD quantizer performs uniform quantization and the quantized values are chosen unbiasedly.
\end{enumerate}
In the comparison, the number of quantization levels $s$ is fixed. 
We use DFL with QSGD quantizer under $s=4, 16, 256$ (corresponding to $2,4,8$ bits quantization respectively) as the baselines for doubly-adaptive DFL. 

\begin{figure}[!t]
	\centering
	\includegraphics[scale=0.5]{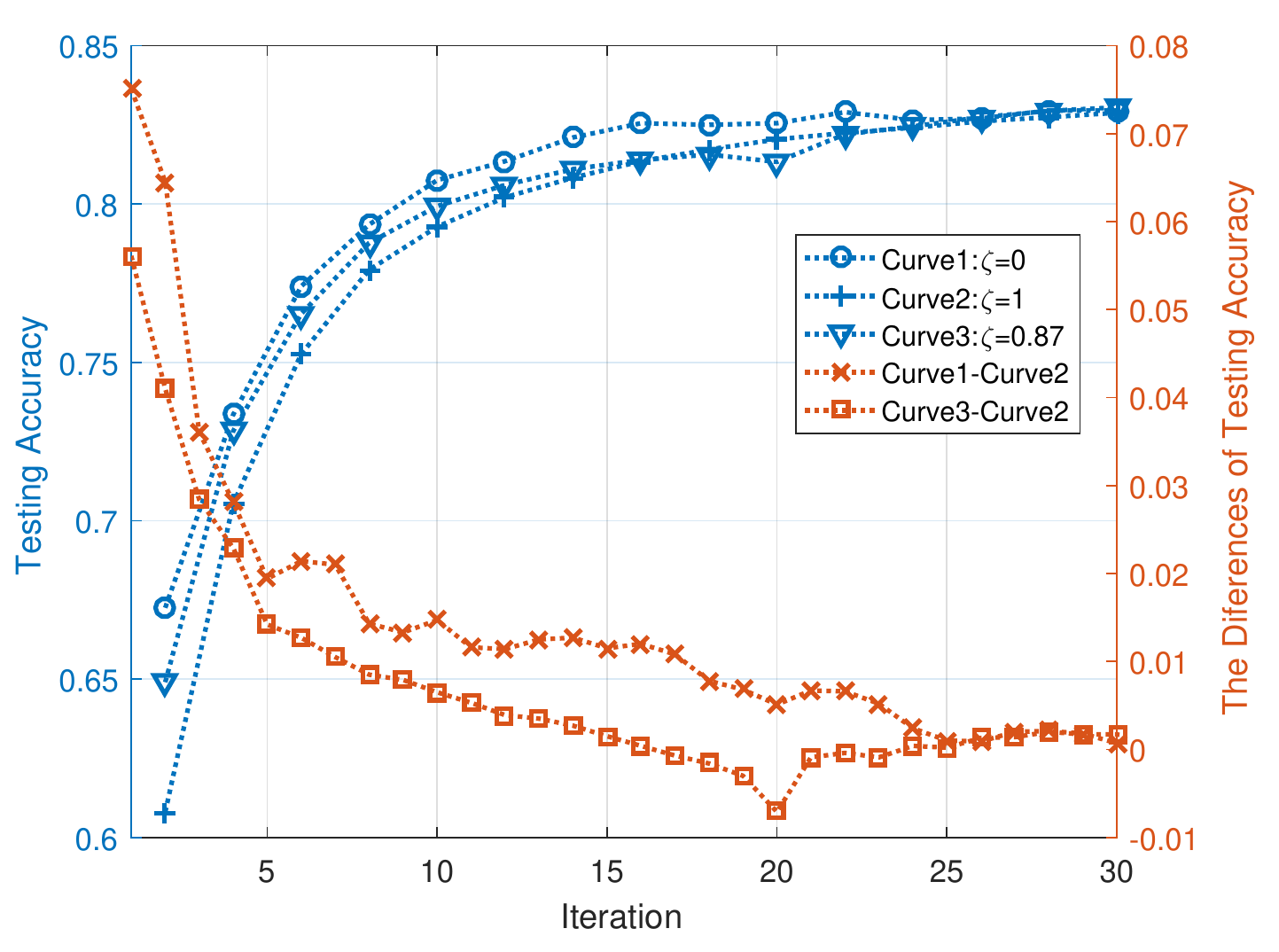}        
	\caption{Testing accuracies versus iteration under different network topologies. We evaluate three topology with different connection densities, whose $\zeta$ is 0, 0.87 and 1, respectively.}
	\label{zeta-fig}
\end{figure}

\subsubsection{Models and Datasets} 
We evaluate two different model based on two different datasets. We deploy two different Convolutional Neural Networks (CNN) for model training. And these two CNN models are trained and tested using
MNIST \cite{lecun1998gradient} and CIFAR-10 \cite{krizhevsky2009learning} dataset. The MNIST dataset contains $70,000$ handwritten digits with $60,000$ for training and $10,000$ for testing. For the $j$-th sample $(\textbf{x}_j, y_j)$ in MNIST, $\textbf{x}_j$ is a $1\times28\times28$-dimensional input matrix and $y_j$ is a scalar label from $0$ to $9$ corresponding to $\textbf{x}_j$.
The CIFAR-10 dataset has 10 different types of objects, including $50,000$ color images for training and $10,000$ color images for testing. For the $j$-th sample $(\textbf{x}_j, y_j)$ in CIFAR-10, $\textbf{x}_j$ is a $3\times32\times32$-dimensional input matrix and $y_j$ is a scalar label from 0 to 9 for 10 different objects. Considering actual application scenarios of DFL where different devices have different distribution of data, we adopt non-i.i.d data distribution for model training. For half of the data samples, we allocate the data samples with the same label into a individual node. For another half of the data samples, we distribute the data samples uniformly. In the experiments of evaluating both LM-DFL and doubly-adaptive DFL, we conduct model training based on MNIST and CIFAR-10 datasets.

\begin{figure*}[!t]
	\centering
	\subfloat[MNIST: Training loss with fixed $\eta$]{\includegraphics[scale=0.43]{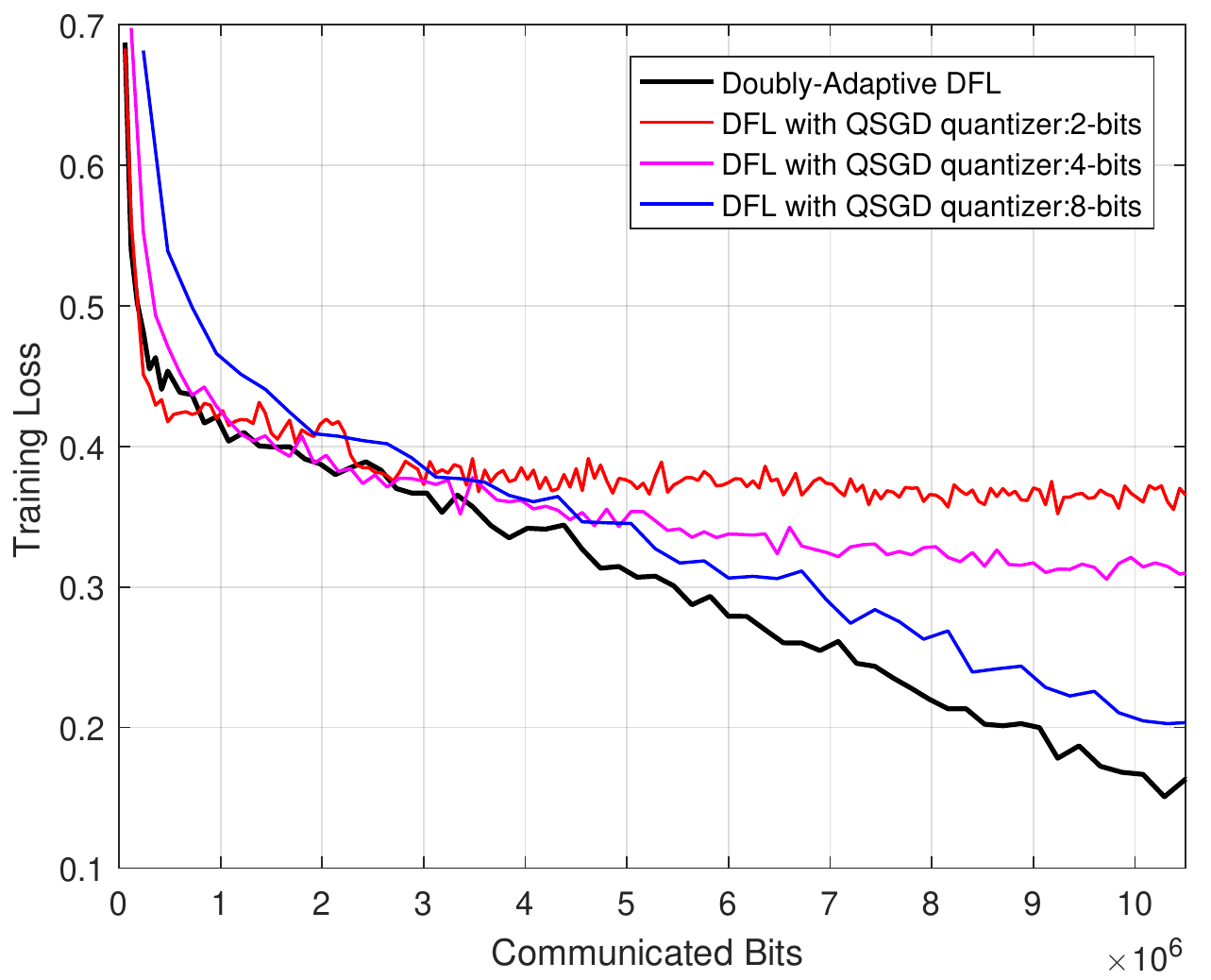}\label{eta1}}	
	\subfloat[MNIST: Training loss with variable $\eta$]{\includegraphics[scale=0.43]{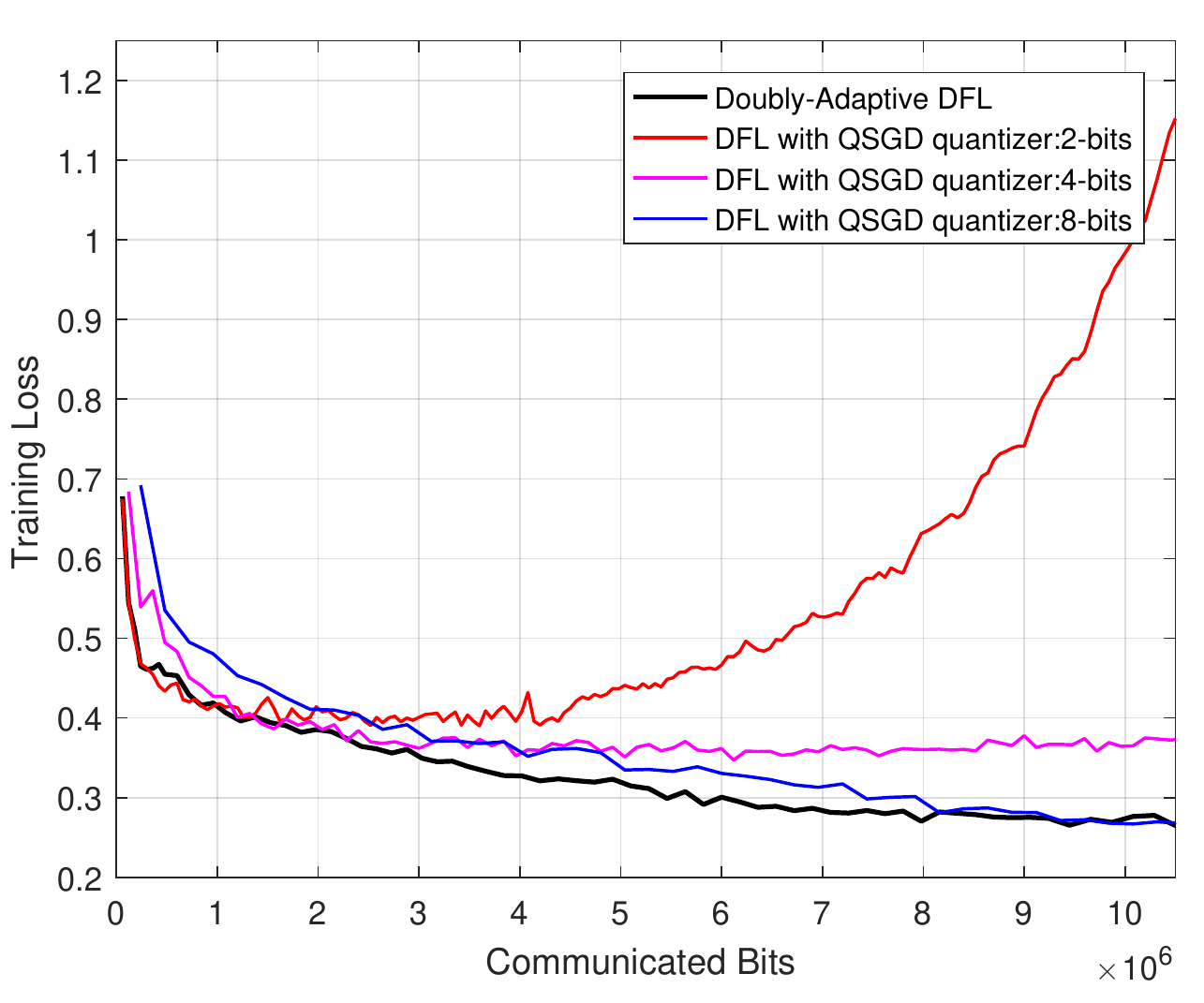}\label{eta2}}
	\subfloat[MNIST: Quantized bits for a model parameter element]{\includegraphics[scale=0.43]{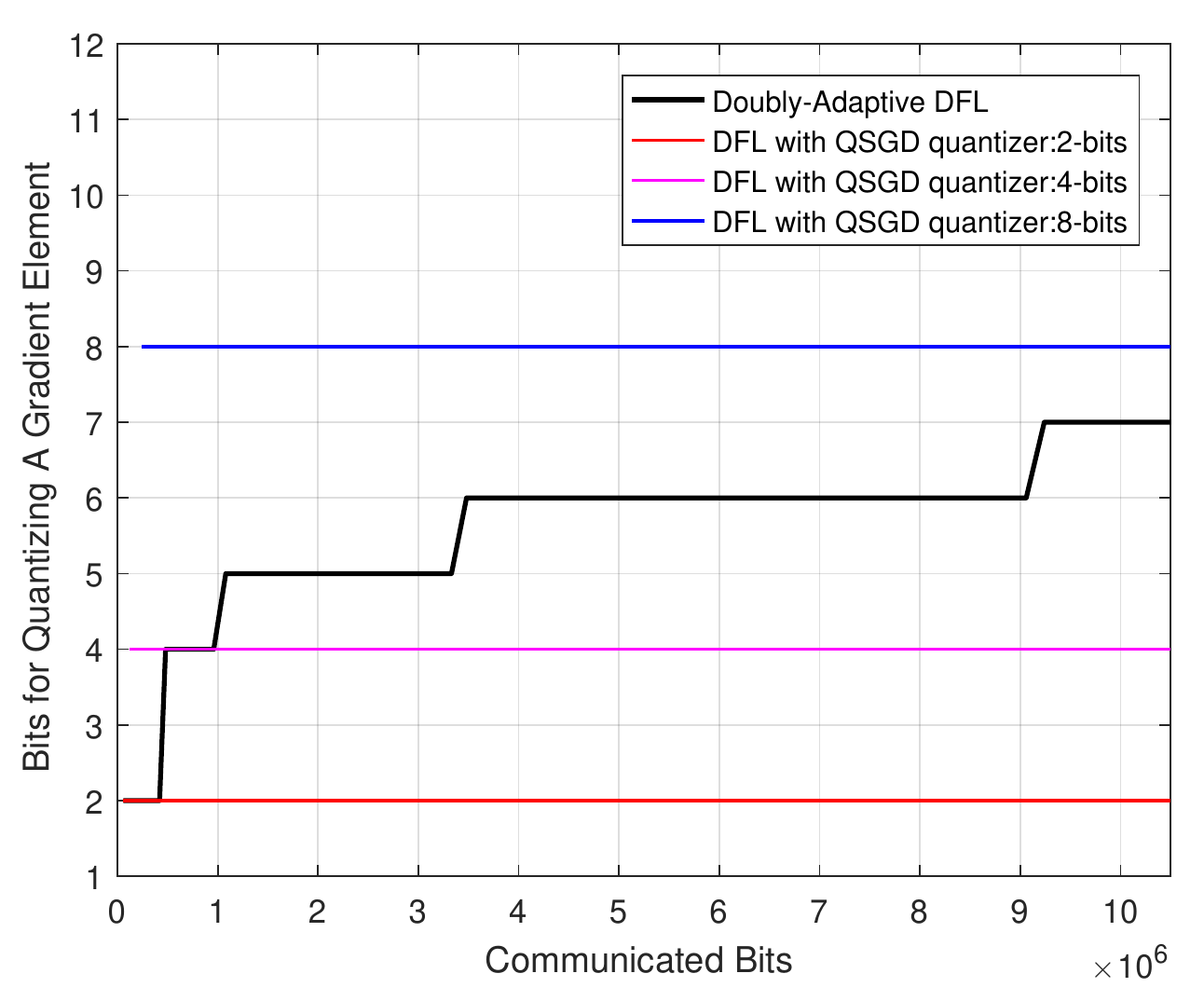}\label{eta3}}

	\subfloat[CIFAR-10: Training loss with fixed $\eta$]{\includegraphics[scale=0.43]{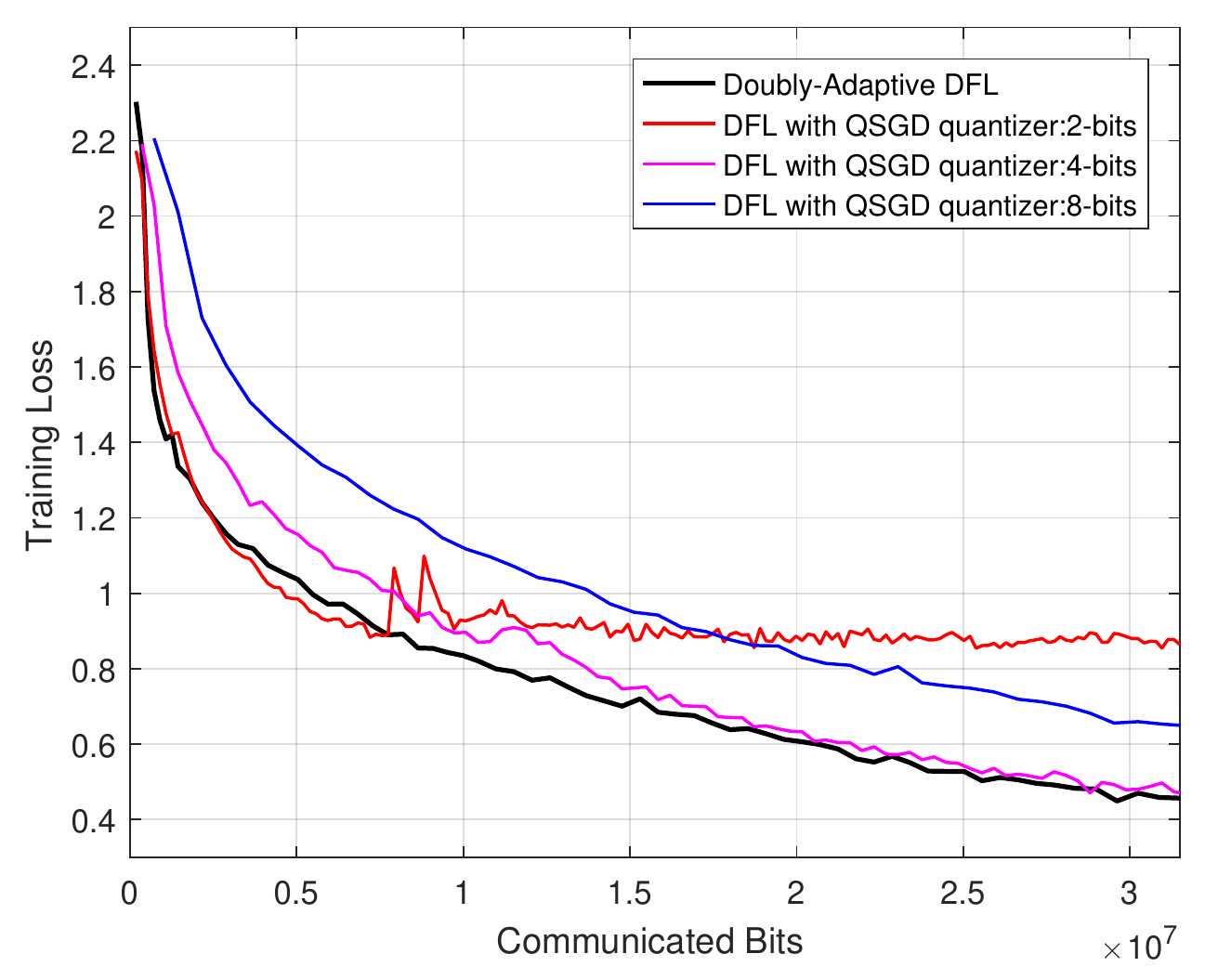}\label{eta11}}
	\subfloat[CIFAR-10: Training loss with variable $\eta$]{\includegraphics[scale=0.43]{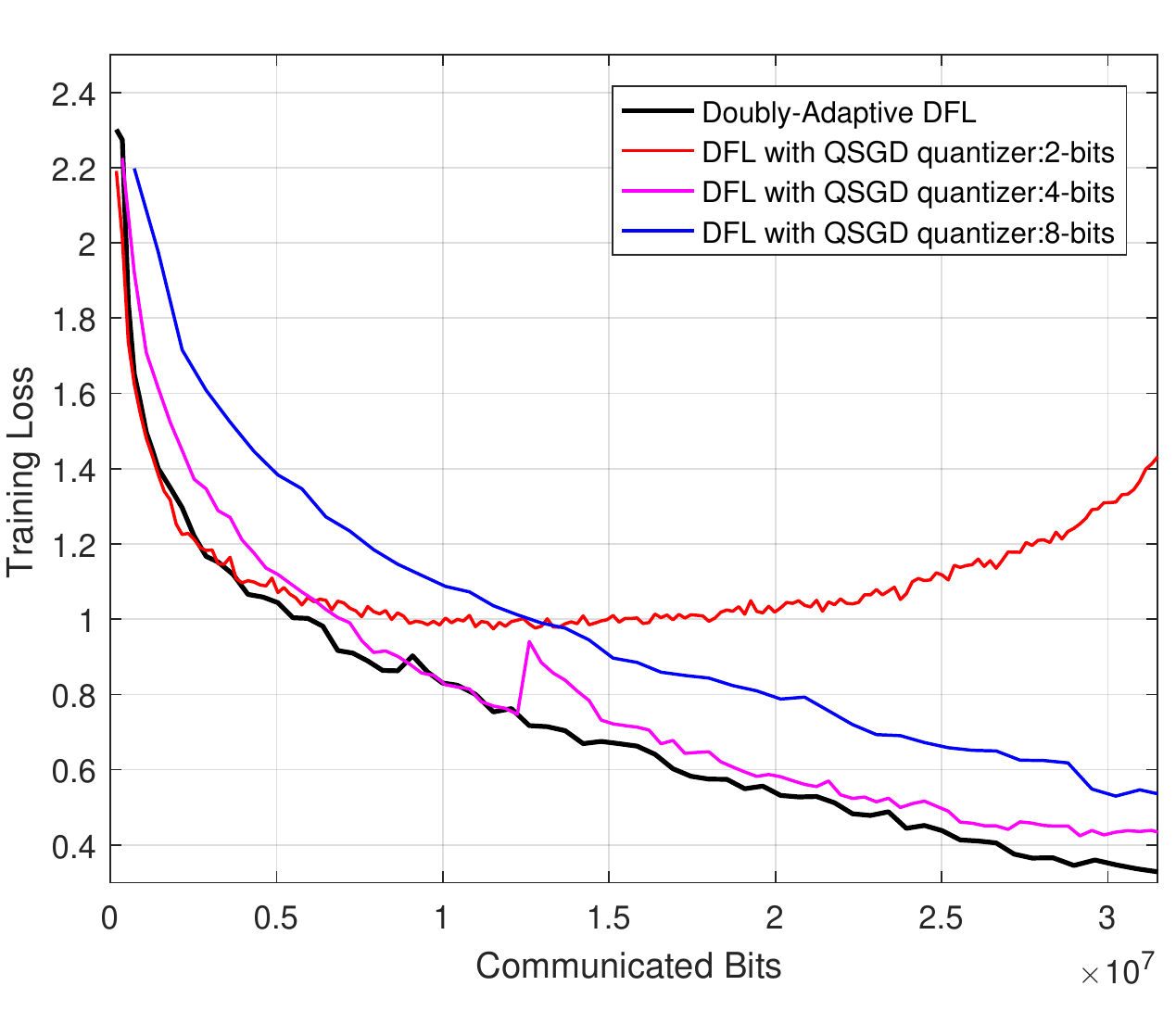}\label{eta22}}	
	\subfloat[CIFAR-10: Quantized bits for a model parameter element]{\includegraphics[scale=0.43]{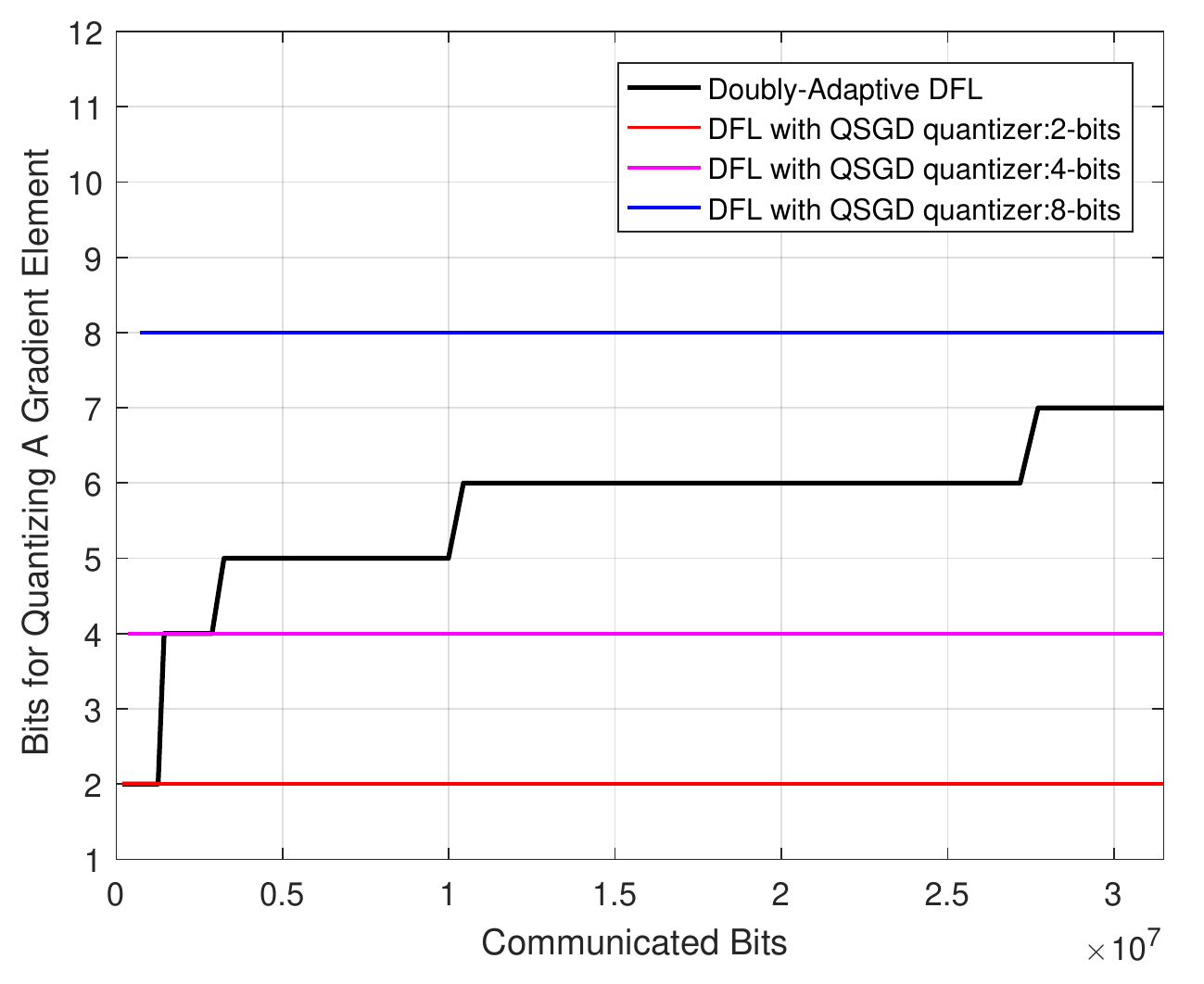}\label{eta33}}
	\caption{Experiments on doubly-adaptive DFL based on MNIST and CIFAR-10 dataset. Doubly-adaptive DFL is conducted with the ascending number of quantization levels, which is compared with QSGD quantizer under 2, 4 and 8 bits.}
	\label{etacurve}
\end{figure*}
 
\subsubsection{Training and Quantization Parameters} In the experiment using MNIST dataset, we set the learning rate $\eta=0.002$ and the number of quantization levels $s=50$ while in the experiment using CIFAR-10 dataset, we set the learning rate $\eta=0.001$ and the number of quantization levels $s=100$. We set the number of local updates $\tau=4$, and set the initial value of model parameter $\textbf{x}_{1,0}^{(1)}=\textbf{x}_{1,0}^{(2)}=\cdots=\textbf{x}_{1,0}^{(N)}$ with an initialization of Gaussian distribution. We use mini-batch SGD in local updates synchronously. 

\subsection{Evaluation}
In this subsection, we first simulate LM-DFL and its baselines and the results can verify the effective quantization distortion improvement of LM-DFL. Then we present simulation of doubly-adaptive DFL and compare it with DFL using QSGD quantizer. 

\subsubsection{Results of LM-DFL}
Based on MNIST and CIFAR-10 datasets, the training loss curves versus iteration are presented in Fig. \ref{convergencecurve}(a) and Fig. \ref{convergencecurve}(e). Training loss of LM-DFL converges gradually to a minimal value with iteration. Hence, LM-DFL algorithm shows its convergence property. Furthermore, from the two figures, we can see that DFL without quantization shows the best convergence performance due to the 
minimal training loss value under the same iteration. 
This is because the convergence bound is smallest when quantization distortion $\omega=0$ (which means DFL without quantization), as discussed in Remark \ref{re3}. Compared with ALQ and QSGD quantizer, LM-DFL has a smaller training loss value under the same iteration. This is because the quantization distortion of LM-DFL is $d/12s^2$ which is smaller than that of ALQ and QSGD quantizer as presented in Table \ref{table-comparison}.
From Fig. \ref{convergencecurve}(c) and Fig. \ref{convergencecurve}(g), the classification performance of LM-DFL is better than DFL using ALQ and QSGD quantizer. 

Fig. \ref{convergencecurve}(d) and Fig. \ref{convergencecurve}(h) show quantization distortions of the four approaches. We can find that under the 50-th iteration, quantization distortion of LM-DFL decreases by 88$\%$ and 28$\%$ compared with ALQ and QSGD quantizer based on MNIST and CIFAR-10, respectively. The results shows LM-DFL conducts quantization with a smaller distortion significantly. Less proportion of distortion reduction of 
LM-DFL based on CIFAR-10 may be caused by the more complicated CNN model.

Fig. \ref{convergencecurve}(b) and Fig. \ref{convergencecurve}(f) show training loss versus time progression. Note that the time progression is based on the communication rate of 100 Mbps, where the communicated bits are recorded over a single directed connection of any node $i$ to node $j$. And the time progression is proportional to the communicated bits with fixed communication rate. We can find that LM-DFL shows the best convergence performance under the same time progression. For example, under MNIST dataset, when the time progression reaches to $30$ ms, the training loss of LM-DFL reduces by 23$\%$ compared with DFL without quantization. Under CIFAR-10 dataset, when the time progression is $200$ ms, the training loss of LM-DFL reduces by 18$\%$ compared with DFL without quantization. The results means LM-DFL can use the minimum time progression  achieve the same convergence performance compared with DFL without quantization, DFL with ALQ and QSGD quantizer.

\subsubsection{Impact of Network Topology}

Fig. \ref{zeta-fig} shows the convergence of LM-DFL under three different network topologies for MNIST dataset. The second largest absolute value $\zeta$ is 0, 0.87 and 1, which correspond to fully-connected network ($\textbf{C}=\textbf{J}$), ring topology (each node can communicate with its two neighboring nodes) and connectionless network ($\textbf{C}=\textbf{I}$), respectively. 
In order to highlight the differences of testing performance between the three topologies, we plot curves of testing accuracy differences. We can see that the fully-connected network shows the best test performance, and performance of ring topology is better than connectionless network. This is because the convergence bound of LM-DFL increases with $\zeta$, which means that larger $\zeta$ causes worse convergence performance, as analyzed in Remark \ref{topology}.

\subsubsection{Doubly-Adaptive DFL}
In this experiment, based on MNIST and CIFAR-10 dataset, we evaluate the training loss of doubly-adaptive DFL to verify the analysis that ascending number of quantization levels can convergent with less bits. Trained on MNIST and CIFAR-10, Fig. \ref{etacurve}(c) and Fig. \ref{etacurve}(f) show the variable quantized bits for a single model parameter element under both a fixed learning rate and a variable learning rate, i.e., $\lceil \log_2 s_k\rceil$, respectively. Note that for the variable learning rate $\eta_k$ under MNIST and CIFAR-10 dataset, their values decrease by 20$\%$ per 10 iterations.

From Fig. \ref{etacurve}, we can see that compared with any case of QSGD quantizer, doubly-adaptive DFL with ascending number of quantization levels achieves the best convergence performance under any communicated bits with either a fixed or variable learning rate. For example, when the number of communicated bits is $6\times 10^6$, the training loss of doubly-adaptive DFL based on MNIST reduces by 9.7$\%$ compared with 8-bits QSGD quantizer under a fixed $\eta$, and reduces by 
9.1$\%$ compared with 8-bits QSGD quantizer under a variable $\eta_k$. When the number of communicated bits is $2\times 10^7$, the training loss of doubly-adaptive DFL based on CIFAR-10 reduces by 28.6$\%$ compared with 8-bits QSGD quantizer under a fixed $\eta$, and reduces by 
31.3$\%$ compared with 8-bits QSGD quantizer under a variable $\eta_k$.
Therefore, the results are consistent with equation \eqref{bests} that DFL with the ascending number of quantization levels has the optimal convergence to save communicated bits compared with other methods of fixed levels number.

\section{Conclusion}

In this paper, we have proposed LM-DFL to minimize quantization distortion. In the stage of inter-node communication, LM-DFL performs Llyod-Max quantizing method to adaptively adjust quantization levels in terms of the probability distribution of model parameters. We have derived the quantization distortion of LM-DFL to show its superiority compared with existing FL quantizers, and established the convergence upper bound without convex loss assumption. Furthermore, we have proposed doubly-adaptive DFL, which jointly considers adaptive number of quantization levels in the training course and variable quantization levels matching the distribution of model parameters. Doubly-adaptive DFL can use much fewer communicated bits to achieve a given targeted training loss. Finally, based on CNN with MNIST and CIFAR-10 datasets, experimental results have validated our theoretical analysis of LM-DFL and doubly-adaptive DFL.

\appendices
\section{Proof of Theorem 2}

According to equation \eqref{mse}, quantization distortion of LM-DFL can be obtained by considering the part  $\sum_{i=1}^{d}\mathbb{E}[(q(r_i)-r_i)^2]$. Considering a scalar $r$ with probability density function $\phi(r)$ in $[0,1]$,
the quantization levels are denoted by $\ell_1,\ell_2,...,\ell_s$. The value between $[\ell_k,\ell_{k+1}]$ is denoted by $\ell_{k+1/2}$.
Further, assume that the quantized scalar value $r$ satisfies
\begin{align*}
\ell_{k-1/2}< r< \ell_{k+1/2}.
\end{align*}
Thus, quantization distortion in the bin $[\ell_{k-1/2},\ell_{k+1/2}]$ is written as 
\begin{align*}
d_k=\int_{\ell_{k-1/2}}^{\ell_{k+1/2}}(r-\ell_k)^2\phi(r)dr.
\end{align*}

Assume that the bin $[\ell_{k-1/2},\ell_{k+1/2}]$ is so small even considered as nearly constant which equal to $\ell_m$ where 
\begin{align*}
\ell_m=\frac{\ell_{k-1/2}+\ell_{k+1/2}}{2}.
\end{align*}
Then quantization distortion in the bin $[\ell_{k-1/2},\ell_{k+1/2}]$ can be rewritten as 
\begin{align}\label{sigmak}
d_k=\frac{\phi(\ell_m)}{3}[(\ell_{k+1/2}-\ell_k)^3+(\ell_k-\ell_{k-1/2})^3].
\end{align}
Differentiating $\sigma_k$ with respect to $\ell_k$ gives
\begin{align*}
\frac{\textup{d}d_k}{\textup{d}\ell_k}=P(\ell_m)[-(\ell_{k+1/2}-\ell_k)^2+(\ell_k-\ell_{k-1/2})^2]=0.
\end{align*}
We have 
\begin{align}
\ell_k=\frac{\ell_{k-1/2}+\ell_{k+1/2}}{2}=\ell_m.
\end{align}
Therefore, the half-way value in $[\ell_{k-1/2},\ell_{k+1/2}]$ can make $d_k$ a minimum. We set
\begin{align}
\label{k1}\ell_{k+1/2}&=\ell_k+\Delta \ell_k,\\
\label{k2}\ell_{k-1/2}&=\ell_k-\Delta \ell_k.
\end{align}
By substituting \eqref{k1} and \eqref{k2} into \eqref{sigmak}, it follows 
\begin{align}
d_k=\frac{2}{3}\phi(\ell_k)\Delta\ell_k^3.
\end{align}
The total quantization distortion is a sum of all bins. It follows
\begin{align}\label{DDD}
D=\frac{2}{3}\sum_{k=1}^{s}\phi(\ell_k)\Delta \ell_{k}^3.
\end{align}

Now we will prove the quantization distortion $D$ is a minimum when $d_k$ is a constant, independent with $k$.

According to the definition of an integral, we have 
\begin{align}\label{CCC}
\sum_{k=1}^{s}\phi^{1/3}(\ell_k)\cdot 2\Delta \ell_k=\int_{0}^{1}\phi^{1/3}(r)dr=2C,
\end{align}
where $C$ is a constant. Let $\mu_k=\phi^{1/3}(\ell_k)\Delta \ell_k$. Then \eqref{DDD} and \eqref{CCC} become
\begin{align}
D=&\frac{2}{3}\sum_{k=1}^{s}\mu_k^3,\\
C=&\sum_{k=1}^{s}\mu_k,
\end{align}
respectively.

This problem is reduced to minimizing the sum of cubes subject to the condition the sum of $\mu_k$ is a constant. Based on Lagrange's method, when 
\begin{align}
\mu_1=\mu_2=\cdots=\mu_s=\frac{C}{s},
\end{align}
the quantization distortion $D$ is a minimum with 
\begin{align}
D=&\frac{2}{3}\frac{C^3}{s^2}\notag\\
=&\frac{1}{12s^2}\left(\int_{0}^{1}\phi^{1/3}(r)dr\right)^3\notag\\
\leq&\frac{1}{12s^2}.
\end{align}
The last inequality is from H\"{o}lder's inequality. Therefore, according to \eqref{mse}, we have finished the proof of the theorem.

\section{Proof of Lemma 2}
In order to prove Lemma 2, we first present some preliminaries of necessary definitions and lemmas for convenience, and then show the detailed proof based on these preliminaries.

\subsection{Preliminaries of Proof}
We define the Frobenius norm and the operator norm of matrix as follows to make analysis easier.
\begin{definition}[\textbf{The Frobenius norm}]\label{fronorm}
	\rm The Frobenius norm defined for matrix $\textbf{\textup{A}}\in M_n$ by
	\begin{align}
	\|\textbf{\textup{A}}\|_{\textup{F}}^2=|\textup{Tr}(\textbf{A}\textbf{A}^\top)|=\sum_{i,j=1}^{n}|a_{ij}|^2.
	\end{align}
\end{definition}
\begin{definition}[\textbf{The operator norm}]\label{op-norm}
	\rm The operator norm defined for matrix $\textbf{\textup{A}}\in M_n$ by
	\begin{align}
	\|\textbf{A}\|_{\textup{op}}=\max_{\|\textbf{x}\|=1}\|\textbf{A}\textbf{x}\|=\sqrt{\lambda_{\max}(\textbf{A}^\top \textbf{A})}.
	\end{align}
\end{definition}

Then we provide three lemmas for assisting proof as follows.
\begin{lemma}\label{lem5}
	For two real matrices $\textbf{\textup{A}}\in \mathbb{R}^{d\times m}$ and $\textbf{\textup{B}}^{m\times m}$, if $\textbf{\textup{B}}$ is  symmetric, then we have 
	\begin{align}
	\|\textbf{\textup{A}}\textbf{\textup{B}}\|_{\textup{F}}\leq \|\textbf{\textup{B}}\|_{\textup{op}}\|\textbf{\textup{A}}\|_{\textup{F}}.
	\end{align} 
\end{lemma}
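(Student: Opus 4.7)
The plan is to decompose the Frobenius norm of $\mathbf{A}\mathbf{B}$ row by row and reduce the problem to a vector-wise operator-norm estimate, which is immediate from Definition \ref{op-norm}. Let $\mathbf{a}_i^\top \in \mathbb{R}^{1\times m}$ denote the $i$-th row of $\mathbf{A}$ for $i=1,\ldots,d$, so by Definition \ref{fronorm} we have $\|\mathbf{A}\|_{\text{F}}^2 = \sum_{i=1}^d \|\mathbf{a}_i\|^2$. Since the $i$-th row of the product $\mathbf{A}\mathbf{B}$ equals $\mathbf{a}_i^\top \mathbf{B}$, whose Euclidean norm coincides with $\|\mathbf{B}^\top \mathbf{a}_i\|$, the same definition yields
\begin{align*}
\|\mathbf{A}\mathbf{B}\|_{\text{F}}^2 \;=\; \sum_{i=1}^d \|\mathbf{B}^\top \mathbf{a}_i\|^2 \;=\; \sum_{i=1}^d \|\mathbf{B}\mathbf{a}_i\|^2,
\end{align*}
where the last equality invokes the symmetry hypothesis $\mathbf{B}^\top = \mathbf{B}$.

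The second step is the vector-wise operator-norm bound. For each $i$, Definition \ref{op-norm} gives $\|\mathbf{B}\mathbf{a}_i\| \leq \|\mathbf{B}\|_{\text{op}}\,\|\mathbf{a}_i\|$, so summing yields $\|\mathbf{A}\mathbf{B}\|_{\text{F}}^2 \leq \|\mathbf{B}\|_{\text{op}}^2 \sum_{i=1}^d \|\mathbf{a}_i\|^2 = \|\mathbf{B}\|_{\text{op}}^2 \|\mathbf{A}\|_{\text{F}}^2$. Taking square roots delivers the claim.

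There is no substantive obstacle here; the only subtle point is that the conclusion is really a statement about how $\mathbf{B}$ acts from the right on the rows of $\mathbf{A}$, and the symmetry assumption is invoked solely to identify $\mathbf{B}^\top \mathbf{a}_i$ with $\mathbf{B} \mathbf{a}_i$ so that the operator norm $\|\mathbf{B}\|_{\text{op}} = \sqrt{\lambda_{\max}(\mathbf{B}^\top \mathbf{B})}$ can be applied directly to each row. An alternative, essentially equivalent, route would be to start from the trace identity $\|\mathbf{A}\mathbf{B}\|_{\text{F}}^2 = \operatorname{Tr}(\mathbf{A}\mathbf{B}\mathbf{B}^\top \mathbf{A}^\top) = \operatorname{Tr}(\mathbf{A}^\top \mathbf{A}\, \mathbf{B}^2)$ and then diagonalize the symmetric matrix $\mathbf{B}$ to bound each eigenvalue by $\|\mathbf{B}\|_{\text{op}}$; I would prefer the row-wise argument since it avoids spectral decomposition and is a one-line calculation once the Frobenius norm is expressed in terms of rows.
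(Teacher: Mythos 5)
Your proof is correct and follows essentially the same route as the paper's: both decompose $\|\mathbf{A}\mathbf{B}\|_{\text{F}}^2$ row by row, use the symmetry of $\mathbf{B}$ to rewrite $\mathbf{a}_i^\top\mathbf{B}$ as $\mathbf{B}\mathbf{a}_i$, and apply the vector-wise operator-norm bound before summing. The only cosmetic difference is that the paper restricts the sum to rows with $\|\mathbf{a}_i\|\neq 0$ so it can divide by $\|\mathbf{a}_i\|^2$, whereas you invoke $\|\mathbf{B}\mathbf{a}_i\|\leq\|\mathbf{B}\|_{\text{op}}\|\mathbf{a}_i\|$ directly, which handles zero rows trivially.
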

\begin{proof}
	Assuming that $\textbf{a}_1^\top,\dots,\textbf{a}_d^\top$ denote the rows of matrix $\textbf{A}$ and $\mathcal{I}=\{i\in[1,d]:\|\textbf{a}_i\|\neq 0\}$. Then we obtain
	\begin{align*}
	\|\textbf{A}\textbf{B}\|^2_{\textup{F}}&=\sum_{i=1}^{d}\|\textbf{a}_i^\top \textbf{B}\|^2=\sum_{i\in\mathcal{I}}^{d}\|\textbf{B}\textbf{a}_i\|^2\\
	&=\sum_{i\in\mathcal{I}}^{d}\frac{\|\textbf{B}\textbf{a}_i\|^2}{\|\textbf{a}_i\|^2}\|\textbf{a}_i\|^2\\
	&\leq\sum_{i\in\mathcal{I}}^{d}\|\textbf{B}\|^2_{\textup{op}}\|\textbf{a}_i\|^2=\|\textbf{B}\|^2_{\textup{op}}\sum_{i\in\mathcal{I}}^{d}\|\textbf{a}_i\|^2=\|\textbf{B}\|^2_{\textup{op}}\|\textbf{A}\|^2_{\textup{F}},
	\end{align*}
	where the last inequality is due to Definition \ref{op-norm} about matrix operator norm. 
\end{proof}
\begin{lemma}\label{lem6}
	Consider two matrices $\textbf{\textup{A}}\in\mathbb{R}^{m\times n}$ and $\textbf{\textup{B}}\in \mathbb{R}^{n\times m}$. We have 
	\begin{align}
	|\textup{Tr}(\textbf{\textup{A}}\textbf{\textup{B}})|\leq\|\textbf{\textup{A}}\|_{\textup{F}}\|\textbf{\textup{B}}\|_{\textup{F}}.
	\end{align}
\end{lemma}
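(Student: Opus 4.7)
The plan is to recognize $\textup{Tr}(\textbf{A}\textbf{B})$ as a Frobenius-type inner product of two matrices and then invoke the classical Cauchy--Schwarz inequality. First I would write out the trace in coordinates using the definition of matrix product: for $\textbf{A}\in\mathbb{R}^{m\times n}$ and $\textbf{B}\in\mathbb{R}^{n\times m}$,
\begin{align*}
\textup{Tr}(\textbf{A}\textbf{B})=\sum_{i=1}^{m}(\textbf{A}\textbf{B})_{ii}=\sum_{i=1}^{m}\sum_{j=1}^{n}a_{ij}\,b_{ji}.
\end{align*}
This exhibits $\textup{Tr}(\textbf{A}\textbf{B})$ as the ordinary Euclidean inner product of the two $mn$-dimensional vectors obtained by listing the entries of $\textbf{A}$ and of $\textbf{B}^{\top}$.

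Next I would apply Cauchy--Schwarz to this double sum, obtaining
\begin{align*}
\left|\sum_{i=1}^{m}\sum_{j=1}^{n}a_{ij}\,b_{ji}\right|\le\sqrt{\sum_{i=1}^{m}\sum_{j=1}^{n}a_{ij}^{2}}\,\sqrt{\sum_{i=1}^{m}\sum_{j=1}^{n}b_{ji}^{2}}.
\end{align*}
By Definition \ref{fronorm}, the first radical is exactly $\|\textbf{A}\|_{\textup{F}}$, and the second is $\|\textbf{B}^{\top}\|_{\textup{F}}$. Since transposition merely reorders the entries, one has $\|\textbf{B}^{\top}\|_{\textup{F}}=\|\textbf{B}\|_{\textup{F}}$, which yields the claimed inequality.

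There is essentially no hard step here: the lemma is a routine consequence of Cauchy--Schwarz once the trace pairing is identified with the Frobenius inner product. The only minor care required is the dimension bookkeeping (that $\textbf{A}\textbf{B}$ is square of size $m$ even though $\textbf{A}$ and $\textbf{B}$ are rectangular) and the remark that $\|\textbf{B}^{\top}\|_{\textup{F}}=\|\textbf{B}\|_{\textup{F}}$. An alternative route via singular values (Von Neumann's trace inequality $|\textup{Tr}(\textbf{A}\textbf{B})|\le\sum_k\sigma_k(\textbf{A})\sigma_k(\textbf{B})$ followed by Cauchy--Schwarz on the singular-value sequences) is available but unnecessarily heavy; the elementwise argument above is the most transparent proof and relies on nothing beyond the definitions already introduced in the paper.
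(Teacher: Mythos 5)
Your proof is correct and follows essentially the same route as the paper's: both expand $\textup{Tr}(\textbf{A}\textbf{B})$ as the coordinate double sum $\sum_{i,j}a_{ij}b_{ji}$, identify it with a Frobenius-type inner product, and conclude by Cauchy--Schwarz. The only cosmetic difference is that you apply Cauchy--Schwarz once to the full $mn$-term sum while the paper groups the terms by rows of $\textbf{A}$ and columns of $\textbf{B}$ first; the content is identical.
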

\begin{proof}
	Assume $\textbf{a}_{i}^\top\in \mathbb{R}^{n}$ is the $i$-th row of matrix $\textbf{A}$ and $\textbf{b}_{i}^\top\in \mathbb{R}^{n}$ is the $i$-th column of matrix $\textbf{B}$. According to the definition of matrix trace, we have
	\begin{align}
	\textup{Tr}(\textbf{A}\textbf{B})&=\sum_{i=1}^{m}\sum_{j=1}^{n}\textbf{A}_{ij}\textbf{B}_{ji}\notag\\
	\label{544}&=\sum_{i=1}^{m}\textbf{a}_{i}^\top\textbf{b}_{i}.
	\end{align}
	Then according to Cauchy-Schwartz inequality, we further obtain
	\begin{align}
	|\sum_{i=1}^{m}\textbf{a}_{i}^\top\textbf{b}_{i}|^2&\leq\left(\sum_{i=1}^{m}\|\textbf{a}_{i}\|^2\right)\left(\sum_{i=1}^{m}\|\textbf{b}_{i}\|^2\right)\notag\\
	\label{555}&=\|\textbf{\textup{A}}\|_{\textup{F}}^2\|\textbf{\textup{B}}\|_{\textup{F}}^2.
	\end{align}
	Then combining \eqref{544} and \eqref{555}, we finish the proof.
\end{proof}
\begin{lemma}\label{lem7}
	Consider a matrix $\textbf{\textup{C}}\in \mathbb{R}^{m\times m}$ which satisfies condition 6 of Assumption 1. Then we have 
	\begin{align}
	\|\textbf{\textup{C}}^j-\textbf{\textup{J}}\|_{\textup{op}}=\zeta^j,
	\end{align}
	where $\zeta=\max\{|\lambda_2(\textbf{\textup{C}})|,|\lambda_m|(\textbf{\textup{C}})\}$.
\end{lemma}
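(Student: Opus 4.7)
The plan is to exploit the spectral theorem, which applies because $\textbf{C}$ is symmetric (condition 5 gives $\textbf{C}^\top = \textbf{C}$). Since $\textbf{C}$ is doubly stochastic, the vector $\textbf{1}/\sqrt{m}$ is a unit eigenvector associated with the eigenvalue $\lambda_1 = 1$. Let $\textbf{v}_1 = \textbf{1}/\sqrt{m}, \textbf{v}_2,\dots,\textbf{v}_m$ be an orthonormal eigenbasis with corresponding eigenvalues $1 = \lambda_1 > |\lambda_2|\geq\cdots\geq|\lambda_m|$. The first thing I would do is note the identity
\begin{equation*}
\textbf{J} = \frac{\textbf{1}\textbf{1}^\top}{\textbf{1}^\top\textbf{1}} = \frac{\textbf{1}\textbf{1}^\top}{m} = \textbf{v}_1\textbf{v}_1^\top,
\end{equation*}
so that $\textbf{J}$ is precisely the rank-one orthogonal projector onto the top eigenspace of $\textbf{C}$.

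Next, I would apply the spectral decomposition $\textbf{C} = \sum_{i=1}^{m}\lambda_i \textbf{v}_i\textbf{v}_i^\top$, which gives $\textbf{C}^j = \sum_{i=1}^{m}\lambda_i^j \textbf{v}_i\textbf{v}_i^\top$ by orthonormality of the eigenvectors. Subtracting $\textbf{J} = \textbf{v}_1\textbf{v}_1^\top$ cancels only the leading term:
\begin{equation*}
\textbf{C}^j - \textbf{J} = \sum_{i=2}^{m}\lambda_i^j \textbf{v}_i\textbf{v}_i^\top.
\end{equation*}
This matrix is again symmetric, with eigenvalues $\{0,\lambda_2^j,\dots,\lambda_m^j\}$ (the zero eigenvalue corresponding to the direction $\textbf{v}_1$).

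Finally, for any real symmetric matrix $\textbf{M}$, Definition \ref{op-norm} together with the spectral theorem gives $\|\textbf{M}\|_{\textup{op}} = \sqrt{\lambda_{\max}(\textbf{M}^\top\textbf{M})} = \max_i|\mu_i(\textbf{M})|$, where $\mu_i(\textbf{M})$ are the eigenvalues. Applying this to $\textbf{C}^j - \textbf{J}$ yields
\begin{equation*}
\|\textbf{C}^j-\textbf{J}\|_{\textup{op}} = \max_{2\le i\le m}|\lambda_i^j| = \bigl(\max_{2\le i\le m}|\lambda_i|\bigr)^j = \zeta^j,
\end{equation*}
which is the claim. There is essentially no serious obstacle here; the only thing that requires a moment of care is the normalization of $\textbf{v}_1$ to confirm $\textbf{J} = \textbf{v}_1\textbf{v}_1^\top$, and the fact that the spectral decomposition relies on symmetry of $\textbf{C}$ (not merely on $\textbf{C}$ being doubly stochastic), which is guaranteed by the assumption $\textbf{C}^\top = \textbf{C}$.
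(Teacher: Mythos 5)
Your proof is correct and follows essentially the same route as the paper's: both diagonalize $\textbf{C}$ and $\textbf{J}$ simultaneously in the orthonormal eigenbasis of $\textbf{C}$ (the paper writes this as $\textbf{C}=\textbf{Q}\bm{\Lambda}\textbf{Q}^\top$, $\textbf{J}=\textbf{Q}\bm{\Lambda}_0\textbf{Q}^\top$, which is your rank-one sum in matrix form) and then read off the operator norm as the largest absolute eigenvalue of the symmetric matrix $\textbf{C}^j-\textbf{J}$. Your version is, if anything, slightly more explicit about why $\textbf{J}=\textbf{v}_1\textbf{v}_1^\top$ shares the eigenbasis of $\textbf{C}$, a point the paper leaves implicit.
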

\newcommand{\bm}[1]{\mbox{\boldmath{$#1$}}}
\begin{proof}
	Because $\textbf{C}$ is a real symmetric matrix, it can be decomposed as $\textbf{C}=\textbf{Q}\bm{\Lambda}\textbf{Q}^\top$, where $\textbf{Q}$ is an orthogonal matrix and $\textbf{\bm{\Lambda}}=\textup{diag}\{\lambda_1(\textbf{C}),\lambda_2(\textbf{C}),\dots,\lambda_m(\textbf{C})\}$. Similarly, matrix $\textbf{J}$ can be decomposed as $\textbf{J}=\textbf{Q}\bm{\Lambda}_0\textbf{Q}^\top$ where $\bm{\Lambda}_0=\textup{diag}\{1,0,\dots,0\}$. Then we have 
	\begin{align}\label{577}
	\textbf{C}^j-\textbf{J}=(\textbf{Q}\bm{\Lambda}\textbf{Q}^\top)^j-\textbf{J}=\textbf{Q}(\bm{\Lambda}^j-\bm{\Lambda}_0)\textbf{Q}^\top.
	\end{align} 
	According to the definition of the matrix operator norm, we further obtain
	\begin{align*}
	\|\textbf{C}^j-\textbf{J}\|_{\textup{op}}&=\sqrt{\lambda_{\max}((\textbf{C}^j-\textbf{J})^\top(\textbf{C}^j-\textbf{J}))}\\
	&=\sqrt{\lambda_{\max}(\textbf{C}^{2j}-\textbf{J})},
	\end{align*}
	where the last equality is due to
	\begin{align*}
	(\textbf{C}^j-\textbf{J})^\top(\textbf{C}^j-\textbf{J})&=(\textbf{C}^j-\textbf{J})(\textbf{C}^j-\textbf{J})\\
	&=\textbf{C}^{2j}+\textbf{J}^2-\textbf{C}^j\textbf{J}-\textbf{J}\textbf{C}^j\\
	&=\textbf{C}^{2j}+\textbf{J}-2\textbf{J}=\textbf{C}^{2j}-\textbf{J}.
	\end{align*}
	Because $\textbf{C}^{2j}-\textbf{J}=\textbf{Q}(\bm{\Lambda}^{2j}-\bm{\Lambda}_0)\textbf{Q}^\top$ from \eqref{577}, the maximum eigenvalue is $\max\{0,\lambda_2(\textbf{C})^{2j},\dots,\lambda_m(\textbf{C})^{2j}\}=\zeta^{2j}$ where $\zeta$ is the second largest eigenvalue of $\textbf{C}$. Therefore, we have 
	$$\|\textbf{C}^j-\textbf{J}\|_{\textup{op}}=\sqrt{\lambda_{\max}(\textbf{C}^{2j}-\textbf{J})}=\zeta^j.$$
\end{proof}

\subsection{Proof}
According to L-smooth gradient assumption, we have 
\begin{align}
F(\textbf{u}_{k+1})-F(\textbf{u}_k)\leq
\langle\nabla F(\textbf{u}_k),\textbf{u}_{k+1}-\textbf{u}_{k}\rangle+
\frac{L}{2}\|\textbf{u}_{k+1}-\textbf{u}_{k}\|^2.
\end{align}
According to \eqref{est4}, $\textbf{u}_k$ equals to the expectation of average estimated model $\mathbb{E}\hat{\textbf{u}}_k$. Therefore, we use the estimated value $\hat{\textbf{u}}_k$ to approximate $\textbf{u}_k$ and according to \eqref{DFLQ}, we can obtain
\begin{align}\label{lemma2-1}
&F(\textbf{u}_{k+1})-F(\textbf{u}_k)\leq\notag\\
&\langle\nabla F(\textbf{u}_k),\frac{1}{N}\sum_{i=1}^{N}Q(\textbf{x}_{k,\tau}^{(i)}-\textbf{x}_k^{(i)})\rangle+
\frac{L}{2}\|\frac{1}{N}\sum_{i=1}^{N}Q(\textbf{x}_{k,\tau}^{(i)}-\textbf{x}_k^{(i)})\|^2.
\end{align}

We first consider the term $\langle\nabla F(\textbf{u}_k),\frac{1}{N}\sum_{i=1}^{N}Q(\textbf{x}_{k,\tau}^{(i)}-\textbf{x}_k^{(i)})\rangle$ in \eqref{lemma2-1}. It follows
\begin{align*}
&\mathbb{E}\langle\nabla F(\textbf{u}_k),\frac{1}{N}\sum_{i=1}^{N}Q(\textbf{x}_{k,\tau}^{(i)}-\textbf{x}_k^{(i)})\rangle\\
=&\frac{1}{N}\sum_{i=1}^{N}\langle\nabla F(\textbf{u}_k), \mathbb{E}[Q(\textbf{x}_{k,\tau}^{(i)}-\textbf{x}_k^{(i)})]\rangle\\
=&\frac{1}{N}\sum_{i=1}^{N}\langle\nabla F(\textbf{u}_k), \textbf{x}_{k,\tau}^{(i)}-\textbf{x}_k^{(i)}\rangle\\
=&\frac{1}{N}\sum_{i=1}^{N}\langle\nabla F(\textbf{u}_k),\!\! -\eta(\widetilde{\nabla}f_i(\textbf{x}_k^{(i)})\!+\!\widetilde{\nabla}f_i(\textbf{x}_{k,1}^{(i)})\!+\!\cdots+\widetilde{\nabla}f_i(\textbf{x}_{k,\tau-1}^{(i)}))\rangle\\
=&-\frac{\eta}{N}\sum_{i=1}^{N}\sum_{t=0}^{\tau-1}\langle\nabla F(\textbf{u}_k),\widetilde{\nabla}f_i(\textbf{x}_{k,t}^{(i)})\rangle\\
=&\!\!-\!\!\frac{\eta}{2N}\!\!\sum_{i=1}^{N}\!\!\sum_{t=0}^{\tau-1}[\|\nabla F(\textbf{u}_k)\|^2\!\!+\!\!\|\widetilde{\nabla}f_i(\textbf{x}_{k,t}^{(i)})\|^2\!\!-\!\!\|\nabla F(\textbf{u}_k)\!\!-\!\!\widetilde{\nabla}f_i(\textbf{x}_{k,t}^{(i)})\|^2],
\end{align*}
where the last equation comes from $2\textbf{a}^\top \textbf{b}=\|\textbf{a}\|^2+\|\textbf{b}\|^2-\|\textbf{a}-\textbf{b}\|^2$.
Then we focus on $\|\nabla F(\textbf{u}_k)-\widetilde{\nabla}f_i(\textbf{x}_{k,t}^{(i)})\|^2$, which is 
\begin{align}
&\mathbb{E}\|\nabla F(\textbf{u}_k)-\widetilde{\nabla}f_i(\textbf{x}_{k,t}^{(i)})\|^2\notag\\
=&\mathbb{E}\|\widetilde{\nabla}f_i(\textbf{x}_{k,t}^{(i)})-\nabla F(\textbf{u}_k)-\nabla F_i(\textbf{x}_{k,t}^{(i)})+\nabla F(\textbf{u}_k)\|^2+\notag\\
&\label{44}\|\nabla F_i(\textbf{x}_{k,t}^{(i)})-\nabla F(\textbf{u}_k)\|^2\\
=&\sigma_i^2+\|\nabla F_i(\textbf{x}_{k,t}^{(i)})-\nabla F(\textbf{u}_k)\|^2\notag\\
=&\sigma_i^2+\|\nabla F_i(\textbf{x}_{k,t}^{(i)})-\nabla F(\textbf{u}_k)-\nabla F(\textbf{x}_{k,t}^{(i)})+\nabla F(\textbf{u}_k)\|^2+\notag\\
&\label{45}\|\nabla F(\textbf{x}_{k,t}^{(i)})-\nabla F(\textbf{u}_k)\|^2\\
\label{46}\leq&\sigma_i^2+\delta_i^2+L^2\|\textbf{x}_{k,t}^{(i)}-\textbf{u}_k\|^2,
\end{align}
where \eqref{44} and \eqref{45} come from variance $\textup{Var}(\textbf{X})=\mathbb{E}(\textbf{X}^2)-(\mathbb{E}\textbf{X})^2$.
Substituting \eqref{46} in the term $\langle\nabla F(\textbf{u}_k),\frac{1}{N}\sum_{i=1}^{N}Q(\textbf{x}_{k,\tau}^{(i)}-\textbf{x}_k^{(i)})\rangle$, we have
\begin{align}\label{47}
&\langle\nabla F(\textbf{u}_k),\frac{1}{N}\sum_{i=1}^{N}Q(\textbf{x}_{k,\tau}^{(i)}-\textbf{x}_k^{(i)})\rangle\notag\\
\leq&-\frac{\eta\tau}{2}\|\nabla F(\textbf{u}_k)\|^2-\frac{\eta}{2N}\sum_{i=1}^{N}\sum_{t=0}^{\tau-1}\|\widetilde{\nabla}f_i(\textbf{x}_{k,t}^{(i)})\|^2+\notag\\
&\frac{\eta}{2N}\sum_{i=1}^{N}\sum_{t=0}^{\tau-1}(\sigma_i^2+\delta_i^2+L^2\|\textbf{x}_{k,t}^{(i)}-\textbf{u}_k\|^2)
\end{align}

Then we consider the term $\frac{L}{2}\|\frac{1}{N}\sum_{i=1}^{N}Q(\textbf{x}_{k,\tau}^{(i)}-\textbf{x}_k^{(i)})\|^2$ as follows,
\begin{align}\label{48}
&\mathbb{E}\|\frac{1}{N}\sum_{i=1}^{N}Q(\textbf{x}_{k,\tau}^{(i)}-\textbf{x}_k^{(i)})\|^2\notag\\
=&\frac{1}{N^2}\mathbb{E}\|\sum_{i=1}^{N}Q(\textbf{x}_{k,\tau}^{(i)}\!\!-\!\!\textbf{x}_k^{(i)})\!\!-\!\!\sum_{i=1}^{N}(\textbf{x}_{k,\tau}^{(i)}\!\!-\!\!\textbf{x}_k^{(i)})\!\!+\!\!\sum_{i=1}^{N}(\textbf{x}_{k,\tau}^{(i)}\!\!-\!\!\textbf{x}_k^{(i)})\|^2\notag\\
=&\frac{1}{N^2}\mathbb{E}\|\sum_{i=1}^{N}[Q(\textbf{p}_i)-\textbf{p}_i]\|^2+\frac{1}{N^2}\mathbb{E}\|\sum_{i=1}^{N}\textbf{p}_i\|^2\\
\label{49}=&\frac{1}{N^2}\sum_{i=1}^{N}\mathbb{E}\|Q(\textbf{p}_i)-\textbf{p}_i\|^2+\frac{1}{N^2}\mathbb{E}\|\sum_{i=1}^{N}\textbf{p}_i\|^2\\
\label{50}\leq&\frac{\omega}{N^2}\sum_{i=1}^{N}\|\textbf{x}_{k,\tau}^{(i)}-\textbf{x}_k^{(i)}\|^2+\frac{1}{N}\sum_{i=1}^{N}
\|\textbf{x}_{k,\tau}^{(i)}-\textbf{x}_k^{(i)}\|^2\\
=&\frac{\omega+N}{N^2}\sum_{i=1}^{N}\|\textbf{x}_{k,\tau}^{(i)}-\textbf{x}_k^{(i)}\|^2\notag\\
\label{51}\leq&\frac{(\omega+N)\eta^2\tau}{N^2}\sum_{i=1}^{N}\sum_{t=0}^{\tau-1}\|\widetilde{\nabla}f_i(\textbf{x}_{k,t}^{(i)})\|^2,
\end{align}
where we define $\textbf{p}_i=\textbf{x}_{k,\tau}^{(i)}-\textbf{x}_k^{(i)}$ for convenience, and \eqref{48} and \eqref{49} are because $Q(\textbf{x})$ is unbiased, while \eqref{50} is based on the definition of quantization distortion in Definition 1.

By substituting \eqref{47} and \eqref{51} into \eqref{lemma2-1}, we obtain
\begin{align}\label{eq70}
&\mathbb{E}F(\textbf{u}_{k+1})-\mathbb{E}F(\textbf{u}_k)\notag\\
\leq&-\frac{\eta\tau}{2}\|\nabla F(\textbf{u}_k)\|^2-\frac{\eta}{2N}\sum_{i=1}^{N}\sum_{t=0}^{\tau-1}\|\widetilde{\nabla}f_i(\textbf{x}_{k,t}^{(i)})\|^2\notag\\
&+\frac{\eta}{2N}\sum_{i=1}^{N}\sum_{t=0}^{\tau-1}(\sigma_i^2+\delta_i^2+L^2\|\textbf{x}_{k,t}^{(i)}-\textbf{u}_k\|^2)\notag\\
&+\frac{(\omega+N)L\eta^2\tau}{2N^2}\sum_{i=1}^{N}\sum_{t=0}^{\tau-1}\|\widetilde{\nabla}f_i(\textbf{x}_{k,t}^{(i)})\|^2\\
=&-\frac{\eta\tau}{2}\|\nabla F(\textbf{u}_k)\|^2\!\!-\!\!\frac{\eta}{2N}\left[1\!\!-\!\!\frac{(\omega+N)L\eta\tau}{N}\right]\!\!\sum_{i=1}^{N}\!\sum_{t=0}^{\tau-1}\|\widetilde{\nabla}f_i(\textbf{x}_{k,t}^{(i)})\|^2\notag\\
&+\frac{\eta}{2N}\sum_{i=1}^{N}\sum_{t=0}^{\tau-1}(\sigma_i^2+\delta_i^2+L^2\|\textbf{x}_{k,t}^{(i)}-\textbf{u}_k\|^2)\\
=&\!\!-\frac{\eta\tau}{2}\|\nabla F(\textbf{u}_k)\|^2\!-\!\frac{\eta}{2N}\left[1\!\!-\!\!\frac{(\omega+N)L\eta\tau}{N}\right]\sum_{i=1}^{N}\sum_{t=0}^{\tau-1}\|\widetilde{\nabla}f_i(\textbf{x}_{k,t}^{(i)})\|^2\notag\\
&+\frac{\eta\sigma^2\tau}{2}+\frac{\eta\tau\delta^2}{2}+\frac{\eta L^2}{2N}\sum_{i=1}^{N}\sum_{t=0}^{\tau-1}\|\textbf{x}_{k,t}^{(i)}-\textbf{u}_k\|^2\notag\\
\leq&\!\!-\!\!\frac{\eta\tau}{2}\|\nabla F(\textbf{u}_k)\|^2\!\!
+\!\frac{\eta\sigma^2\tau}{2}+\frac{\eta\tau\delta^2}{2}\!+\!\frac{\eta L^2}{2N}\sum_{i=1}^{N}\sum_{t=0}^{\tau-1}\|\textbf{x}_{k,t}^{(i)}\!-\!\textbf{u}_k\|^2\notag\\
&\label{54}-\!\!\frac{\eta}{2N}\left[1\!\!-\!\!\frac{(\omega+N)L\eta\tau}{N}\right]\sum_{i=1}^{N}\sum_{t=0}^{\tau-1}(\sigma_i^2\!\!+\!\!\|\nabla F_i(\textbf{x}_{k,t}^{(i)})\|^2)\\
=&-\!\!\frac{\eta\tau}{2}\|\nabla F(\textbf{u}_k)\|^2\!\!+\frac{(\omega+N)L\eta^2\tau^2\sigma^2}{2N}+\frac{\eta\tau\delta^2}{2}\notag\\
&-\frac{\eta}{2N}\left[1\!\!-\!\!\frac{(\omega+N)L\eta\tau}{N}\right]\sum_{i=1}^{N}\sum_{t=0}^{\tau-1}\|\nabla F_i(\textbf{x}_{k,t}^{(i)})\|^2\notag\\
&\label{55}+\frac{\eta L^2}{2N}\sum_{i=1}^{N}\sum_{t=0}^{\tau-1}\|\textbf{x}_{k,t}^{(i)}-\textbf{u}_k\|^2,
\end{align}
where \eqref{54} comes from condition 3 of Assumption 1 that the variance of gradient estimation is $\sigma$.
By minor rearranging \eqref{55}, it follows
\begin{align}
&\frac{\eta\tau}{2}\|\nabla F(\textbf{u}_k)\|^2\notag\\
\leq&\mathbb{E}F(\textbf{u}_k)-\mathbb{E}F(\textbf{u}_{k+1})+\frac{(\omega+N)L\eta^2\tau^2\sigma^2}{2N}+\frac{\eta\tau\delta^2}{2}\notag\\
&-\frac{\eta}{2N}\left[1\!\!-\!\!\frac{(\omega+N)L\eta\tau}{N}\right]\sum_{i=1}^{N}\sum_{t=0}^{\tau-1}\|\nabla F_i(\textbf{x}_{k,t}^{(i)})\|^2\notag\\
&\label{56}+\frac{\eta L^2}{2N}\sum_{i=1}^{N}\sum_{t=0}^{\tau-1}\|\textbf{x}_{k,t}^{(i)}-\textbf{u}_k\|^2.
\end{align} 
Then we have
\begin{align}
&\|\nabla F(\textbf{u}_k)\|^2\notag\\
\leq&\frac{2[\mathbb{E}F(\textbf{u}_k)-\mathbb{E}F(\textbf{u}_{k+1})]}{\eta\tau}+\frac{(\omega+N)L\eta\tau\sigma^2}{N}+\delta^2\notag\\
&-\frac{1}{N\tau}\left[1\!\!-\!\!\frac{(\omega+N)L\eta\tau}{N}\right]\sum_{i=1}^{N}\sum_{t=0}^{\tau-1}\|\nabla F_i(\textbf{x}_{k,t}^{(i)})\|^2\notag\\
\label{57}&+\frac{L^2}{N\tau}\sum_{i=1}^{N}\sum_{t=0}^{\tau-1}\|\textbf{x}_{k,t}^{(i)}-\textbf{u}_k\|^2.
\end{align}
Based on \eqref{57}, taking the total expectation and averaging over all iterations, we have
\begin{align}
	&\mathbb{E}\left[\frac{1}{K}\sum_{k=1}^{K}\|\nabla F(\textbf{u}_k)\|^2\right]\notag\\
	\leq& \frac{2[F(\textbf{u}_1)-F_{\textup{inf}}]}{\eta K\tau}+\frac{(\omega+N)L\eta\tau\sigma^2}{N}+\delta^2\notag\\
	&-\frac{1}{N\tau K}\left[1\!\!-\!\!\frac{(\omega+N)L\eta\tau}{N}\right]\sum_{k=1}^{K}\sum_{i=1}^{N}\sum_{t=0}^{\tau-1}\|\nabla F_i(\textbf{x}_{k,t}^{(i)})\|^2\notag\\
	&\label{58}+\frac{L^2}{NK\tau}\sum_{k=1}^{K}\sum_{i=1}^{N}\sum_{t=0}^{\tau-1}\|\textbf{x}_{k,t}^{(i)}-\textbf{u}_k\|^2.
\end{align}

Focusing on \eqref{58}, we try to further find an upper bound of the term $\frac{L^2}{NK\tau}\sum_{k=1}^{K}\sum_{i=1}^{N}\sum_{t=0}^{\tau-1}\|\textbf{x}_{k,t}^{(i)}-\textbf{u}_k\|^2$. 

Because 
\begin{align}
&\sum_{i=1}^{N}\|\textbf{x}_{k,t}^{(i)}-\textbf{u}_k\|^2\notag\\
\leq&2\sum_{i=1}^{N}\|\textbf{x}_{k}^{(i)}-\textbf{u}_k\|^2+2\eta^2\sum_{i=1}^{N}t\sum_{p=0}^{t-1}\|\widetilde{\nabla}f_i(\textbf{x}_{k,p}^{(i)})\|^2\notag\\
\label{61}=&2\sum_{i=1}^{N}\|\textbf{x}_{k}^{(i)}-\textbf{u}_k\|^2+2\eta^2t\sum_{i=1}^{N}\sum_{p=0}^{t-1}\|\widetilde{\nabla}f_i(\textbf{x}_{k,p}^{(i)})\|^2,
\end{align}
we consider $\sum_{i=1}^{N}\|\textbf{x}_{k}^{(i)}-\textbf{u}_k\|^2$ in \eqref{61}. According to Definition \ref{fronorm}, we have
\begin{align}
&\sum_{i=1}^{N}\|\textbf{x}_{k}^{(i)}-\textbf{u}_k\|^2\notag\\
=&\|\textbf{X}_k-\textbf{u}_k\cdot\textbf{1}^\top \|_{\textup{F}}^2\notag\\
=&\|\textbf{X}_k-\frac{\textbf{X}_k\textbf{1}\textbf{1}^\top}{N}\|_{\textup{F}}^2\notag\\
=&\|\textbf{X}_k(\textbf{I}-\textbf{J})\|_{\textup{F}}^2.
\end{align}

From \eqref{QDFLlearningrule}, we have
\begin{align}
\textbf{X}_{k}=[\hat{\textbf{X}}_{k-1}+Q(\textbf{X}_{k-1,\tau}-\textbf{X}_{k-1})]\textbf{C}.
\end{align}
By taking an expectation for the above equation, we have 
\begin{align}
\mathbb{E}\textbf{X}_{k}&=[\mathbb{E}\hat{\textbf{X}}_{k-1}+\mathbb{E}Q(\textbf{X}_{k-1,\tau}-\textbf{X}_{k-1})]\textbf{C}\notag\\
&=[\textbf{X}_{k-1}+\textbf{X}_{k-1,\tau}-\textbf{X}_{k-1}]\textbf{C}.
\end{align}

Noting that $\textbf{C}\textbf{J}=\textbf{J}\textbf{C}=\textbf{J}$, we have
\begin{align}
&\textbf{X}_{k}(\textbf{I}-\textbf{J})\notag\\
=&[\textbf{X}_{k-1}+(\textbf{X}_{k-1,\tau}-\textbf{X}_{k-1})]\textbf{C}(\textbf{I}-\textbf{J})\notag\\
=&\textbf{X}_{k-1}(\textbf{I}-\textbf{J})\textbf{C}+(\textbf{X}_{k-1,\tau}-\textbf{X}_{k-1})(\textbf{C}-\textbf{J})\notag\\
=&\textbf{X}_{k-2}(\textbf{I}-\textbf{J})\textbf{C}^2+(\textbf{X}_{k-2,\tau}-\textbf{X}_{k-2})(\textbf{C}^2-\textbf{J})\notag\\
&+(\textbf{X}_{k-1,\tau}-\textbf{X}_{k-1})(\textbf{C}-\textbf{J}).
\end{align}
By repeating the same procedure from $k-2,k-3,...,1$, we finally get
\begin{align}
&\textbf{X}_{k}(\textbf{I}-\textbf{J})\notag\\
=&\textbf{X}_1(\textbf{I}-\textbf{J})\textbf{C}^{k-1}+\sum_{s=1}^{k-1}(\textbf{X}_{s,\tau}-\textbf{X}_{s})(\textbf{C}^{k-s}-\textbf{J})
\end{align}
Because the variable $\textbf{X}$ has the same initialized point with $\textbf{X}_1(\textbf{I}-\textbf{J})=\textbf{0}$, the squared norm of $\textbf{X}_{k}(\textbf{I}-\textbf{J})$ can written as 
\begin{align}\label{72}
&\mathbb{E}\|\textbf{X}_k(\textbf{I}-\textbf{J})\|_{\textup{F}}^2\notag\\
=&\mathbb{E}\|\sum_{s=1}^{k-1}(\textbf{X}_{s,\tau}-\textbf{X}_{s})(\textbf{C}^{k-s}-\textbf{J})\|_{\textup{F}}^2\notag\\
=&\mathbb{E}\|\sum_{s=1}^{k-1}\textbf{q}_s(\textbf{C}^{k-s}-\textbf{J})\|_{\textup{F}}^2,
\end{align}
where we use $\textbf{q}_s$ to denote $\textbf{X}_{s,\tau}-\textbf{X}_{s}$ for convenience. Following \eqref{72}, we have
\begin{align}
	&\|\sum_{s=1}^{k-1}\textbf{q}_s(\textbf{C}^{k-s}-\textbf{J})\|_{\textup{F}}^2\notag\\
	=&\sum_{s=1}^{k-1}\|\textbf{q}_s(\textbf{C}^{k-s}-\textbf{J})\|_{\textup{F}}^2\notag\\
	&+\sum_{n=1}^{k-1}\sum_{l=1,l\neq n}^{k-1}\textup{Tr}((\textbf{C}^{k-n}-\textbf{J})\textbf{q}_n^\top\textbf{q}_l(\textbf{C}^{k-l}-\textbf{J}))\\
	\leq &\sum_{s=1}^{k-1}\zeta^{2(k-s)}\|\textbf{q}_s\|_{\textup{F}}^2\notag\\
	\label{79}&+\frac{1}{2}\sum_{n=1}^{k-1}\sum_{l=1,l\neq n}^{k-1}\zeta^{2k-n-l}[\|\textbf{q}_n\|_{\textup{F}}^2+\|\textbf{q}_l\|_{\textup{F}}^2],
\end{align}
where \eqref{79} is from Lemma \ref{lem5}, \ref{lem6} and \ref{lem7}.

Then we further have
\begin{align}
	&\mathbb{E}\|\sum_{s=1}^{k-1}\textbf{q}_s(\textbf{C}^{k-s}-\textbf{J})\|_{\textup{F}}^2\notag\\
	\leq&\sum_{s=1}^{k-1}\zeta^{2(k-s)}\|\textbf{q}_s\|_{\textup{F}}^2\!\!+\!\!\frac{1}{2}\sum_{n=1}^{k-1}\sum_{l=1,l\neq n}^{k-1}\!\!\!\zeta^{2k-n-l}[\|\textbf{q}_n\|_{\textup{F}}^2+\|\textbf{q}_l\|_{\textup{F}}^2]\notag\\
	=&\sum_{s=1}^{k-1}\zeta^{2(k-s)}\|\textbf{q}_s\|_{\textup{F}}^2+\sum_{n=1}^{k-1}\sum_{l=1,l\neq n}^{k-1}\zeta^{2k-n-l}\|\textbf{q}_n\|_{\textup{F}}^2\notag\\
	=&\sum_{s=1}^{k-1}\zeta^{2(k-s)}\|\textbf{q}_s\|_{\textup{F}}^2+\sum_{n=1}^{k-1}\zeta^{k-n}\|\textbf{q}_n\|_{\textup{F}}^2\sum_{l=1,l\neq n}^{k-1}\zeta^{k-l}\notag\\
	\leq&\sum_{s=1}^{k-1}\zeta^{2(k-s)}\|\textbf{q}_s\|_{\textup{F}}^2+\sum_{n=1}^{k-1}\frac{\zeta^{k-n}}{1-\zeta}\|\textbf{q}_n\|_{\textup{F}}^2\notag\\
	\label{80}=&\sum_{s=1}^{k-1}\left[\zeta^{2(k-s)}+\frac{\zeta^{k-s}}{1-\zeta}\right]\|\textbf{q}_s\|_{\textup{F}}^2.
\end{align}

Considering $\frac{L^2}{NK\tau}\sum_{k=1}^{K}\sum_{i=1}^{N}\sum_{t=0}^{\tau-1}\|\textbf{x}_{k,t}^{(i)}-\textbf{u}_k\|^2$, because
\begin{align}\label{81}
	&\sum_{i=1}^{N}\|\textbf{x}_{k}^{(i)}-\textbf{u}_k\|^2\notag\\
	\leq&2\sum_{i=1}^{N}\|\textbf{x}_{k}^{(i)}-\textbf{u}_k\|^2+2\eta^2t\sum_{i=1}^{N}\sum_{p=0}^{t-1}\|\widetilde{\nabla}f_i(\textbf{x}_{k,p}^{(i)})\|^2,
\end{align}
this inequality consists of two parts, which are $\frac{L^2}{NK\tau}\sum_{k=1}^{K}\sum_{t=0}^{\tau-1}2\sum_{i=1}^{N}\|\textbf{x}_{k}^{(i)}-\textbf{u}_k\|^2$ and $\frac{L^2}{NK\tau}\sum_{k=1}^{K}\sum_{t=0}^{\tau-1}2\eta^2t\sum_{i=1}^{N}\sum_{p=0}^{t-1}\|\widetilde{\nabla}f_i(\textbf{x}_{k,p}^{(i)})\|^2$, respectively. According to \eqref{80}, the first part  $\frac{L^2}{NK\tau}\sum_{k=1}^{K}\sum_{t=0}^{\tau-1}2\sum_{i=1}^{N}\|\textbf{x}_{k}^{(i)}-\textbf{u}_k\|^2$ can be written as 
\begin{align}
	&\frac{L^2}{NK\tau}\sum_{k=1}^{K}\sum_{t=0}^{\tau-1}2\sum_{i=1}^{N}\|\textbf{x}_{k}^{(i)}-\textbf{u}_k\|^2\notag\\
	\leq&\frac{L^2}{NK\tau}\sum_{k=1}^{K}\sum_{t=0}^{\tau-1}2\sum_{s=1}^{k-1}\left[\zeta^{2(k-s)}+\frac{\zeta^{k-s}}{1-\zeta}\right]\|\textbf{q}_s\|_{\textup{F}}^2\notag\\
	\label{82}=&\frac{2L^2}{NK}\sum_{k=1}^{K}\sum_{s=1}^{k-1}\left[\zeta^{2(k-s)}+\frac{\zeta^{k-s}}{1-\zeta}\right]\|\textbf{q}_s\|_{\textup{F}}^2.
\end{align}
In \eqref{82}, the coefficient of $\|\textbf{q}_s\|_{\textup{F}}^2$ is $\sum_{k=1}^{K-s}\left[\zeta^{2k}+\frac{\zeta^k}{1-\zeta}\right]$, and $$ \sum_{k=1}^{K-s}\left[\zeta^{2k}+\frac{\zeta^k}{1-\zeta}\right] \leq \frac{\zeta^2}{1-\zeta^2}+\frac{\zeta}{(1-\zeta)^2}.$$
Then we have 
\begin{align}
	&\frac{L^2}{NK\tau}\sum_{k=1}^{K}\sum_{t=0}^{\tau-1}2\sum_{i=1}^{N}\|\textbf{x}_{k}^{(i)}-\textbf{u}_k\|^2\notag\\
	\leq&\frac{2L^2}{NK}\left[\frac{\zeta^2}{1-\zeta^2}+\frac{\zeta}{(1-\zeta)^2}\right]\sum_{k=1}^{K}\|\textbf{q}_k\|_{\textup{F}}^2\notag\\
	\leq&\frac{2L^2}{NK}\left[\frac{\zeta^2}{1-\zeta^2}+\frac{\zeta}{(1-\zeta)^2}\right]\sum_{k=1}^{K}\sum_{i=1}^{N}\eta^2\tau\sum_{t=0}^{\tau-1}\|\widetilde{\nabla}f_i(\textbf{x}_{k,t}^{(i)})\|\notag\\
	=&\frac{2L^2\eta^2\tau}{NK}\left[\frac{\zeta^2}{1-\zeta^2}\!\!+\!\!\frac{\zeta}{(1-\zeta)^2}\right]\sum_{k=1}^{K}\sum_{i=1}^{N}\sum_{t=0}^{\tau-1}(\sigma_i^2+\|\nabla F_i(\textbf{x}_{k,t}^{(i)})\|^2)\notag\\
	=&2L^2\eta^2\tau^2\sigma^2\left[\frac{\zeta^2}{1-\zeta^2}+\frac{\zeta}{(1-\zeta)^2}\right]\notag\\
	&+\frac{2L^2\eta^2\tau}{NK}\left[\frac{\zeta^2}{1-\zeta^2}+\frac{\zeta}{(1-\zeta)^2}\right]\sum_{k=1}^{K}\sum_{i=1}^{N}\sum_{t=0}^{\tau-1}\|\nabla F_i(\textbf{x}_{k,t}^{(i)})\|^2\notag\\
	\label{83}=&2L^2\eta^2\tau^2\sigma^2\alpha+\frac{2L^2\eta^2\tau\alpha}{NK}\sum_{k=1}^{K}\sum_{i=1}^{N}\sum_{t=0}^{\tau-1}\|\nabla F_i(\textbf{x}_{k,t}^{(i)})\|^2,
\end{align}
where $\alpha=\left[\frac{\zeta^2}{1-\zeta^2}+\frac{\zeta}{(1-\zeta)^2}\right]$.

Then we focus on the second term in \eqref{81}. We have
\begin{align}
	&\frac{L^2}{NK\tau}\sum_{k=1}^{K}\sum_{t=0}^{\tau-1}2\eta^2t\sum_{i=1}^{N}\sum_{p=0}^{t-1}\|\widetilde{\nabla}f_i(\textbf{x}_{k,p}^{(i)})\|^2\notag\\
	=&\frac{L^2}{NK\tau}\sum_{k=1}^{K}\sum_{t=0}^{\tau-1}2\eta^2t\sum_{i=1}^{N}\sum_{p=0}^{t-1}(\sigma_i^2+\|\nabla F_i(\textbf{x}_{k,t}^{(i)})\|^2)\notag\\
	\leq&\frac{2}{3}L^2\eta^2\sigma^2\tau^2\notag\\
	&+\frac{2L^2\eta^2}{NK\tau}\sum_{k=1}^{K}\sum_{i=1}^{N}\sum_{t=0}^{\tau-2}\frac{1}{2}(\tau-t-1)(\tau+t)\|\nabla F_i(\textbf{x}_{k,t}^{(i)})\|^2\notag\\
	\label{84}\leq&\frac{2}{3}L^2\eta^2\sigma^2\tau^2+\frac{L^2\eta^2\tau}{NK}\sum_{k=1}^{K}\sum_{i=1}^{N}\sum_{t=0}^{\tau-1}\|\nabla F_i(\textbf{x}_{k,t}^{(i)})\|^2
\end{align}
Therefore, we have derived an upper bound of the term $\frac{L^2}{NK\tau}\sum_{k=1}^{K}\sum_{i=1}^{N}\sum_{t=0}^{\tau-1}\|\textbf{x}_{k,t}^{(i)}-\textbf{u}_k\|^2$.
Then based on \eqref{83} and \eqref{84}, \eqref{58} can be rewritten as 
\begin{align}
&\mathbb{E}\left[\frac{1}{K}\sum_{k=1}^{K}\|\nabla F(\textbf{u}_k)\|^2\right]\notag\\
\leq& \frac{2[F(\textbf{u}_1)-F_{\textup{inf}}]}{\eta K\tau}+\frac{(\omega+N)L\eta\tau\sigma^2}{N}+\delta^2\notag\\
&-\frac{1}{N\tau K}\left[1\!\!-\!\!\frac{(\omega+N)L\eta\tau}{N}\right]\sum_{k=1}^{K}\sum_{i=1}^{N}\sum_{t=0}^{\tau-1}\|\nabla F_i(\textbf{x}_{k,t}^{(i)})\|^2\notag\\
&+2L^2\eta^2\tau^2\sigma^2\alpha+\frac{2L^2\eta^2\tau\alpha}{NK}\sum_{k=1}^{K}\sum_{i=1}^{N}\sum_{t=0}^{\tau-1}\|\nabla F_i(\textbf{x}_{k,t}^{(i)})\|^2\notag\\
&+\frac{2}{3}L^2\eta^2\sigma^2\tau^2+\frac{L^2\eta^2\tau}{NK}\sum_{k=1}^{K}\sum_{i=1}^{N}\sum_{t=0}^{\tau-1}\|\nabla F_i(\textbf{x}_{k,t}^{(i)})\|^2\\
=&\frac{2[F(\textbf{u}_1)-F_{\textup{inf}}]}{\eta K\tau}+\frac{(\omega+N)L\eta\tau\sigma^2}{N}+\delta^2+2L^2\eta^2\tau^2\sigma^2\alpha\notag\\
&+\frac{2}{3}L^2\eta^2\sigma^2\tau^2+\frac{2L^2\eta^2\tau\alpha}{NK}\sum_{k=1}^{K}\sum_{i=1}^{N}\sum_{t=0}^{\tau-1}\|\nabla F_i(\textbf{x}_{k,t}^{(i)})\|^2\notag\\
&+\frac{L^2\eta^2\tau}{NK}\sum_{k=1}^{K}\sum_{i=1}^{N}\sum_{t=0}^{\tau-1}\|\nabla F_i(\textbf{x}_{k,t}^{(i)})\|^2\notag\\
&-\frac{1}{N\tau K}\left[1\!\!-\!\!\frac{(\omega+N)L\eta\tau}{N}\right]\sum_{k=1}^{K}\sum_{i=1}^{N}\sum_{t=0}^{\tau-1}\|\nabla F_i(\textbf{x}_{k,t}^{(i)})\|^2.
\end{align}
If the coefficient of $\sum_{k=1}^{K}\sum_{i=1}^{N}\sum_{t=0}^{\tau-1}\|\nabla F_i(\textbf{x}_{k,t}^{(i)})\|^2$ satisfies
\begin{align}
	\frac{2L^2\eta^2\tau\alpha}{NK}+\frac{L^2\eta^2\tau}{NK}-\frac{1}{N\tau K}\left[1\!\!-\!\!\frac{(\omega+N)L\eta\tau}{N}\right]\leq 0,
\end{align}
then it follows
\begin{align}
	&\mathbb{E}\left[\frac{1}{K}\sum_{k=1}^{K}\|\nabla F(\textbf{u}_k)\|^2\right]\notag\\
	\leq&\frac{2[F(\textbf{u}_1)-F_{\textup{inf}}]}{\eta K\tau}+\frac{(\omega+N)L\eta\tau\sigma^2}{N}+\delta^2+2L^2\eta^2\tau^2\sigma^2\alpha\notag\\
	&+\frac{2}{3}L^2\eta^2\sigma^2\tau^2,
\end{align}
where $\alpha=\left[\frac{\zeta^2}{1-\zeta^2}+\frac{\zeta}{(1-\zeta)^2}\right]$.
Considering the condition that coefficient satisfies, we have 
\begin{align*}
	L^2\tau(2\alpha+1)\eta^2+\frac{L(\omega+N)\eta}{N}-\frac{1}{\tau}\leq 0.
\end{align*}
Then we get
\begin{align}
	\eta\leq \frac{\sqrt{(\omega+N)^2+4N^2(2\alpha+1)}-\omega-N}{2NL\tau(2\alpha+1)}. 
\end{align}
Here, we complete the proof of Lemma 2.

\section{Proof of Theorem 4}
From Lemma 2, after $K$ iterations, we have 
\begin{align}
&\mathbb{E}\left[\frac{1}{K}\sum_{k=1}^{K}\|\nabla F(\textbf{u}_k)\|^2\right]\notag\\
\leq&\frac{2[F(\textbf{u}_1)-F_{\textup{inf}}]}{\eta K\tau}+\frac{(\omega+N)L\eta\tau\sigma^2}{N}+\delta^2+2L^2\eta^2\tau^2\sigma^2\alpha\notag\\
&+\frac{2}{3}L^2\eta^2\sigma^2\tau^2,
\end{align}
where 
\begin{align}
\eta\leq \frac{\sqrt{(\omega+N)^2+4N^2(2\alpha+1)}-\omega-N}{2NL\tau(2\alpha+1)}
\end{align}
and $\alpha=\left[\frac{\zeta^2}{1-\zeta^2}+\frac{\zeta}{(1-\zeta)^2}\right]$.

According to Definition 3, we have $K=B/2C_s$. So we get
\begin{align}
&\mathbb{E}\left[\frac{2C_s}{B}\sum_{k=1}^{B/2C_s}\|\nabla F(\textbf{u}_k)\|^2\right]\notag\\
\leq&\frac{4C_s[F(\textbf{u}_1)-F_{\textup{inf}}]}{B\eta \tau}+\frac{(\omega+N)L\eta\tau\sigma^2}{N}+\delta^2+2L^2\eta^2\tau^2\sigma^2\alpha\notag\\
&+\frac{2}{3}L^2\eta^2\sigma^2\tau^2.
\end{align}
Due to $C_s=d\lceil \log_2 s \rceil+d+32\leq d \log_2(2s)+d+32$ according to \eqref{numberofbits} and $\omega=\frac{d}{12s^2}$ from Theorem 2, we have
\begin{align}
&\mathbb{E}\left[\frac{2C_s}{B}\sum_{k=1}^{B/2C_s}\|\nabla F(\textbf{u}_k)\|^2\right]\notag\\
\leq&\frac{4(d \log_2(2s)+d+32)[F(\textbf{u}_1)-F_{\textup{inf}}]}{B\eta \tau}+\frac{(\frac{d}{12s^2}+N)L\eta\tau\sigma^2}{N}\!\\
&+\!\delta^2
\!+\!\left(2\alpha+\frac{2}{3}\right)L^2\eta^2\sigma^2\tau^2\notag\\
=&\frac{4[F(\textbf{u}_1)-F_{\textup{inf}}]d}{\eta\tau B}\log_2(2s)+\frac{dL\eta\tau\sigma^2}{12Ns^2}\\
&+\frac{4[F(\textbf{u}_1)-F_{\textup{inf}}](d+32)}{\eta\tau B}\notag\\
&+\left(2\alpha+\frac{2}{3}\right)L^2\eta^2\sigma^2\tau^2+\delta^2+L\eta\tau\sigma^2\notag\\
=&A_1\log_2(2s)+\frac{A_2}{s^2}+A_3,
\end{align}
where 
\begin{align*}
A_1&=\frac{4[F(\textup{\textbf{u}}_1)-F_{\textup{inf}}]d}{\eta\tau B},\ A_2=\frac{L\eta\tau\sigma^2d}{12N},\\
A_3&=\frac{A_1}{d}(d+32)+(2\alpha+\frac{2}{3})L^2\eta^2\sigma^2\tau^2+\delta^2+L\eta\tau\sigma^2.
\end{align*}
Therefore, we finish the proof of Theorem 4.

\section{Proof of quantization distortion of LM-DFL: another expression}
In this section, we give the proof of another expression of LM-DFL's quantization distortion, which is presented in the following theorem:
\begin{theorem}[\textbf{Another Expression of LM-DFL Quantization Distortion}]
	Let $\textbf{\textup{v}}\in\mathbb{R}^d$. The quantization distortion of LM-DFL can be expressed as
	\begin{align}
	\mathbb{E}[\|Q_L(\textbf{\textup{v}})-\textbf{\textup{v}}\|^2]\leq\|\textbf{\textup{v}}\|^2\left(\frac{\ell_{j^*+1}/\ell_{j^*}-1}{\ell_{j^*+1}/\ell_{j^*}+1}\right)^2,
	\end{align}
	where $j^*=\arg\max_{1\leq j\leq s-1}\ell_{j+1}/\ell_j$.
\end{theorem}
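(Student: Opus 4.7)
The plan is to reduce the vector distortion to a uniform bound on the per-coordinate relative error. Since the LM quantizer is deterministic (once $\{\ell_j\}$ and $\{b_j\}$ are fixed), the identity from \eqref{mse} gives
\begin{equation*}
\mathbb{E}[\|Q_L(\textbf{v})-\textbf{v}\|^2]=\|\textbf{v}\|^2\sum_{i=1}^{d}\bigl(q_L(r_i)-r_i\bigr)^2.
\end{equation*}
Because $r_i=|v_i|/\|\textbf{v}\|$, one has $\sum_i r_i^2=1$, so it suffices to show that
\begin{equation*}
\bigl(q_L(r)-r\bigr)^2\;\le\;\left(\frac{\ell_{j^*+1}/\ell_{j^*}-1}{\ell_{j^*+1}/\ell_{j^*}+1}\right)^{\!2}r^2
\end{equation*}
for every $r\in[0,1]$; summing over $i$ and pulling out $\|\textbf{v}\|^2$ then yields the theorem.

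Next I would fix a bin index $j$ and examine the function $g(r)=|\ell_j-r|/r$ on the bin $(b_{j-1},b_j]$, where by Lemma \ref{lemmasolution} the midpoints satisfy $b_{j-1}=(\ell_{j-1}+\ell_j)/2$ and $b_j=(\ell_j+\ell_{j+1})/2$. A one-line monotonicity check (decreasing on $(b_{j-1},\ell_j]$, increasing on $[\ell_j,b_j]$) shows that $g$ attains its supremum at one of the two bin endpoints. Plugging in $r=b_j$ gives
\begin{equation*}
\frac{|\ell_j-b_j|}{b_j}=\frac{\ell_{j+1}-\ell_j}{\ell_j+\ell_{j+1}}=\frac{\ell_{j+1}/\ell_j-1}{\ell_{j+1}/\ell_j+1},
\end{equation*}
and plugging in $r\to b_{j-1}^{+}$ gives the analogous expression with ratio $\ell_j/\ell_{j-1}$. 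Hence the worst-case relative error over bin $j$ equals $\max\bigl(\psi(\ell_{j+1}/\ell_j),\,\psi(\ell_j/\ell_{j-1})\bigr)$ with $\psi(x)=(x-1)/(x+1)$.

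Finally, because $\psi$ is monotonically increasing on $(0,\infty)$, taking the maximum over all bins $j=1,\ldots,s$ simply picks out the largest adjacent ratio, which by definition of $j^*=\arg\max_{1\le j\le s-1}\ell_{j+1}/\ell_j$ equals $\psi(\ell_{j^*+1}/\ell_{j^*})$. Squaring and summing across coordinates gives the claimed bound. The main subtlety to check carefully is the monotonicity of $g$ on each bin (so that the endpoint values really dominate) and the treatment of the first bin, where $b_0=0$ forces $q_L(r)=\ell_1$ to play the role of both endpoints; in that bin the relative error is controlled purely by the upper endpoint $b_1$, which is already handled by the $j=1$ term of the maximum.
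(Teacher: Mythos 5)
Your proof is essentially identical to the paper's own (Appendix D): the paper likewise writes the per-coordinate error as $(\ell_j-r)^2=k_j r^2$, differentiates $k_j=((\ell_j-r)/r)^2$ in $r$ to show the worst relative error on each bin sits at a boundary $b_{j-1}$ or $b_j$, evaluates it there using $b_j=(\ell_j+\ell_{j+1})/2$, takes the maximum over bins via the monotonicity of $\left(\frac{x-1}{x+1}\right)^2$, and closes with $\sum_i r_i^2=1$. The only substantive difference is that you explicitly flag the first-bin edge case ($b_0=0$, where $|\ell_1-r|/r$ is unbounded as $r\to 0^+$ unless $\ell_1=0$) which the paper silently glosses over; your proposed fix there is not fully airtight, but the paper's proof carries exactly the same gap.
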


\begin{proof}
	We define a boundary sequence $\boldsymbol{b}=[b_0,b_1,...,b_{s-1},b_s]$, where $b_0=0,b_s=1,b_j\in(0,1)$, $j=1,2,...,s-1$.
	And quantization level $\ell_j$ falls in the bin $[b_{j-1},b_{j}]$ for $j=1,...,s$. The distortion of $Q_L(\textbf{v})$ can be expressed as 
	\begin{align}\label{95}
		\mathbb{E}[\|Q_L(\textbf{v})-\textbf{v}\|^2]=\|\textbf{v}\|^2\left(\sum_{j=1}^{s}\sum_{r_i\in\mathcal{I}_j}(\ell_j-r_i)^2\right),
	\end{align} 
	where $\mathcal{I}_j=[b_{j-1},b_{j}]$ for $j=1,...,s$. We let $(\ell_j-r)^2=k_jr^2$, and try to find the maximum of $k_j$.
	We have 
	\begin{align}
		k_j=\left(\frac{\ell_j-r}{r}\right)^2,
	\end{align}
	where $r\in[b_{j-1},b_j]$. By differentiating above equation, we get
	$$\frac{dk_j}{dr}=\frac{2(r-\ell_j)\ell_j}{r^3}.$$
	Then when $b_{j-1}<r<\ell_j$, $\frac{2(r-\ell_j)\ell_j}{r^3}<0$; when $\ell_j<r<b_j$, $\frac{2(r-\ell_j)\ell_j}{r^3}>0$. So the maximum of $k_j$ is obtained in $b_{j-1}$ or $b_j$. Because $b_{j-1}=\frac{\ell_{j-1}+\ell_j}{2}$ and $b_j=\frac{\ell_j+\ell_{j+1}}{2}$, we have
	\begin{align*}
		k_j\leq\left(\frac{\ell_j-\frac{\ell_{j-1}+\ell_j}{2}}{\frac{\ell_{j-1}+\ell_j}{2}}\right)^2=\left(\frac{\ell_j-\ell_{j-1}}{\ell_j+\ell_{j-1}}\right)^2=\left(\frac{\ell_j/\ell_{j-1}-1}{\ell_j/\ell_{j-1}+1}\right)^2
	\end{align*}
	or
	\begin{align*}
	k_j\leq\left(\frac{\ell_j-\frac{\ell_j+\ell_{j+1}}{2}}{\frac{\ell_j+\ell_{j+1}}{2}}\right)^2=\left(\frac{\ell_{j+1}-\ell_{j}}{\ell_{j+1}+\ell_j}\right)^2=\left(\frac{\ell_{j+1}/\ell_{j}-1}{\ell_{j+1}/\ell_{j}+1}\right)^2.
	\end{align*}
	Because $\left(\frac{x-1}{x+1}\right)^2$ is increasing with $x$ when $x>1$,
	then we get
	\begin{align}
		k_j\leq \left(\frac{\ell_{j^*+1}/\ell_{j^*}-1}{\ell_{j^*+1}/\ell_{j^*}+1}\right)^2,
	\end{align}
	where $j^*=\arg\max_{1\leq j\leq s-1}\ell_{j+1}/\ell_j$ and $s=1,2,...,s-1$.
	Thus \eqref{95} can be rewritten as 
	\begin{align}
	\mathbb{E}[\|Q_L(\textbf{v})-\textbf{v}\|^2]&\leq\|\textbf{v}\|^2\left(\frac{\ell_{j^*+1}/\ell_{j^*}-1}{\ell_{j^*+1}/\ell_{j^*}+1}\right)^2\sum_{j=1}^{s}\sum_{r_i\in\mathcal{I}_j}r_i^2\notag\\
	&=\|\textbf{v}\|^2\left(\frac{\ell_{j^*+1}/\ell_{j^*}-1}{\ell_{j^*+1}/\ell_{j^*}+1}\right)^2\sum_{i=1}^{d}r_i^2\notag\\
	&=\|\textbf{v}\|^2\left(\frac{\ell_{j^*+1}/\ell_{j^*}-1}{\ell_{j^*+1}/\ell_{j^*}+1}\right)^2\sum_{i=1}^{d}\frac{|v_i|^2}{\|\textbf{v}\|^2}\notag\\
	&=\|\textbf{v}\|^2\left(\frac{\ell_{j^*+1}/\ell_{j^*}-1}{\ell_{j^*+1}/\ell_{j^*}+1}\right)^2
	\end{align}
	Here we finish the proof.
\end{proof}

\section{Proof of Theorem 5}
In eq.(\ref{eq70}) of Appendix B, replacing the learning rate $\eta $ into  ${\eta _k}$, it becomes

\begin{equation}
 \begin{aligned}
&{\eta _k}{\left\| {\nabla F({u_k})} \right\|^2} \le \frac{{2({\bf{E}}\left[ {F({u_k})} \right] - {\bf{E}}\left[ {F({u_{k + 1}})} \right])}}{\tau } + \frac{{L\eta _k^2\tau {\sigma ^2}d}}{{12Ns_k^2}} \\
& + L\eta _k^2\tau {\sigma ^2} + {\eta _k}{\delta ^2}  + \frac{{{\eta _k}{L^2}}}{{N\tau }}\sum\limits_{i = 1}^N {\sum\limits_{t = 0}^{\tau  - 1} {{{\left\| {x_{k,t}^{(i)} - {u_k}} \right\|}^2}} } \\
& - \frac{{{\eta _k}}}{{N\tau }}\left[ {1 - \frac{{L{\eta _k}\tau }}{{12Ns_k^2}} - L{\eta _k}\tau } \right]\sum\limits_{i = 1}^N {\sum\limits_{t = 0}^{\tau  - 1} {{{\left\| {\nabla {F_i}(x_{k,t}^{(i)})} \right\|}^2}} },
\end{aligned}   
\end{equation}
where ${\varpi _k} = \frac{d}{{12s_k^2}}$. We now summing over all rounds  $k \in \left\{ {1, \cdots ,K} \right\}$. Then, one has

\begin{equation}
 \begin{aligned}
&{\bf{E}}\left[ {\sum\limits_{k = 1}^K {{\eta _k}{{\left\| {\nabla F({u_k})} \right\|}^2}} } \right] \le \frac{{2(F({u_k}) - {F_{\inf }})}}{\tau } \\
&+ \frac{{L\tau {\sigma ^2}d\sum\limits_{k = 1}^K {\left( {\eta _k^2/s_k^2} \right)} }}{{12N}} 
 + L\tau {\sigma ^2}\sum\limits_{k = 1}^K {\eta _k^2}  + {\delta ^2}\sum\limits_{k = 1}^K {{\eta _k}} \\
&+ \left[ {-\frac{1}{{N\tau }}\sum\limits_{k = 1}^K {{\eta _k}}  + \frac{L}{{12{N^2}}}\sum\limits_{k = 1}^K {\left( {\eta _k^2/s_k^2} \right)}  + \frac{L}{N}\sum\limits_{k = 1}^K {\eta _k^2} } \right]\cdot\\
&\sum\limits_{i = 1}^N {\sum\limits_{t = 0}^{\tau  - 1} {{{\left\| {\nabla {F_i}(x_{k,t}^{(i)})} \right\|}^2}} }  + \frac{{{L^2}}}{{N\tau }}\sum\limits_{k = 1}^K {{\eta _k}} \sum\limits_{i = 1}^N {\sum\limits_{t = 0}^{\tau  - 1} {{{\left\| {x_{k,t}^{(i)} - {u_k}} \right\|}^2}} }  \\
&\le \frac{{2(F({u_k}) - {F_{\inf }})}}{\tau } + \frac{{L\tau {\sigma ^2}d\sum\limits_{k = 1}^K {\left( {\eta _k^2/s_k^2} \right)} }}{{12N}} + L\tau {\sigma ^2}\sum\limits_{k = 1}^K {\eta _k^2} \\
&+ {\delta ^2}\sum\limits_{k = 1}^K {{\eta _k}}  + \left( {2\alpha  + \frac{2}{3}} \right){L^2}{\tau ^2}{\sigma ^2}\sum\limits_{k = 1}^K {\eta _k^3}  \\
& + D\sum\limits_{i = 1}^N {\sum\limits_{t = 0}^{\tau  - 1} {{{\left\| {\nabla {F_i}(x_{k,t}^{(i)})} \right\|}^2}} } ,
\end{aligned}   
\end{equation}
where $\alpha  = \frac{{{\zeta ^2}}}{{1 - {\zeta ^2}}} + \frac{\zeta }{{{{\left( {1 - \zeta } \right)}^2}}}$, and $D =  - \frac{1}{{N\tau }}\sum\limits_{k = 1}^K {{\eta _k}}  + \frac{L}{N}\sum\limits_{k = 1}^K {\eta _k^2}  + \left( {\frac{{2{L^2}\tau \alpha }}{N} + \frac{{{L^2}\tau }}{N}} \right)\sum\limits_{k = 1}^K {\eta _k^3 + \frac{L}{{{N^2}}}} \sum\limits_{k = 1}^K {\eta _k^2{\varpi _k}} $.
Define ${D_k} =  - \frac{1}{{N\tau }}{\eta _k} + \frac{L}{N}\eta _k^2 + \left( {\frac{{2{L^2}\tau \alpha }}{N} + \frac{{{L^2}\tau }}{N}} \right)\eta _k^3 + \frac{L}{{{N^2}}}\eta _k^2{\varpi _k}$. If $D<0$ for all $k$, $D_k<0$ and one has

\begin{align}
 {\eta _k} \le \frac{{\sqrt {{{\left( {{\varpi _k} + N} \right)}^2} + 4{N^2}\left( {2\alpha  + 1} \right)}  - {\varpi _k} - N}}{{2NL\tau \left( {2\alpha  + 1} \right)}},   
\end{align}
and

\begin{equation}
 \begin{aligned}
&{\bf{E}}\left[ {\sum\limits_{k = 1}^K {{\eta _k}{{\left\| {\nabla F({u_k})} \right\|}^2}} } \right] \le \frac{{2(F({u_k}) - {F_{\inf }})}}{\tau } \\
&+ \frac{{L\tau {\sigma ^2}d\sum\limits_{k = 1}^K {\left( {\eta _k^2/s_k^2} \right)} }}{{12N}} + L\tau {\sigma ^2}\sum\limits_{k = 1}^K {\eta _k^2}  + {\delta ^2}\sum\limits_{k = 1}^K {{\eta _k}} \\
&+ \left( {2\alpha  + \frac{2}{3}} \right){L^2}{\tau ^2}{\sigma ^2}\sum\limits_{k = 1}^K {\eta _k^3} 
\end{aligned}
\end{equation}
Dividing both sides by $\sum\limits_{k = 1}^K {{\eta _k}} $ and make $\delta  = 0$  we have,

\begin{equation}
\begin{aligned}
&{\bf{E}}\left[ {\frac{{\sum\limits_{k = 1}^K {{\eta _k}{{\left\| {\nabla F\left( {{u_k}} \right)} \right\|}^2}} }}{{\sum\limits_{k = 1}^K {{\eta _k}} }}} \right] \le \frac{{2\left[ {F\left( {{u_1}} \right) - {F_{\inf }}} \right]}}{{\tau \sum\limits_{k = 1}^K {{\eta _k}} }} \\
&+ \frac{{L\tau {\sigma ^2}\sum\limits_{k = 1}^K {\eta _k^2\left( {d/s_k^2} \right)} }}{{12N\sum\limits_{k = 1}^K {{\eta _k}} }} + \frac{{L\tau {\sigma ^2}\sum\limits_{k = 1}^K {\eta _k^2} }}{{\sum\limits_{k = 1}^K {{\eta _k}} }} \\
&+ \left( {2\alpha  + \frac{2}{3}} \right){L^2}{\tau ^2}{\sigma ^2}\frac{{\sum\limits_{k = 1}^K {\eta _k^3} }}{{\sum\limits_{k = 1}^K {{\eta _k}} }},
\end{aligned}
\end{equation}
which completes the proof.


 
%

\small
\bibliographystyle{unsrt} 
\bibliography{bare_jrnl_new_sample4}

\begin{thebibliography}{10}

\bibitem{chiang2016fog}
Mung Chiang and Tao Zhang.
\newblock {Fog and IoT}: An overview of research opportunities.
\newblock {\em IEEE Internet Things J.}, 3(6):854--864, 2016.

\bibitem{hard2018federated}
Andrew Hard, Kanishka Rao, Rajiv Mathews, Swaroop Ramaswamy, Fran{\c{c}}oise
  Beaufays, Sean Augenstein, Hubert Eichner, Chlo{\'e} Kiddon, and Daniel
  Ramage.
\newblock Federated learning for mobile keyboard prediction.
\newblock {\em arXiv preprint arXiv:1811.03604}, 2018.

\bibitem{samarakoon2019distributed}
Sumudu Samarakoon, Mehdi Bennis, Walid Saad, and M{\'e}rouane Debbah.
\newblock Distributed federated learning for ultra-reliable low-latency
  vehicular communications.
\newblock {\em IEEE Trans. Commun.}, 68(2):1146--1159, 2019.

\bibitem{liu2020federated}
Fenglin Liu, Xian Wu, Shen Ge, Wei Fan, and Yuexian Zou.
\newblock Federated learning for vision-and-language grounding problems.
\newblock In {\em Proceedings of the AAAI Conference on Artificial
  Intelligence}, volume~34, pages 11572--11579, 2020.

\bibitem{mcmahan2017communication}
Brendan McMahan, Eider Moore, Daniel Ramage, Seth Hampson, and Blaise~Aguera
  y~Arcas.
\newblock Communication-efficient learning of deep networks from decentralized
  data.
\newblock In {\em Artificial intelligence and statistics}, pages 1273--1282.
  PMLR, 2017.

\bibitem{liu2020accelerating}
Wei Liu, Li~Chen, Yunfei Chen, and Wenyi Zhang.
\newblock Accelerating federated learning via momentum gradient descent.
\newblock {\em IEEE Trans. Parallel Distrib. Syst.}, 31(8):1754--1766, 2020.

\bibitem{9154332}
Hong Xing, Osvaldo Simeone, and Suzhi Bi.
\newblock Decentralized federated learning via {SGD} over wireless {D2D}
  networks.
\newblock In {\em 2020 IEEE 21st International Workshop on Signal Processing
  Advances in Wireless Communications (SPAWC)}, pages 1--5, 2020.

\bibitem{ang2020robust}
Fan Ang, Li~Chen, Nan Zhao, Yunfei Chen, Weidong Wang, and F~Richard Yu.
\newblock Robust federated learning with noisy communication.
\newblock {\em IEEE Trans. Commun.}, 68(6):3452--3464, 2020.

\bibitem{wang2020tackling}
Jianyu Wang, Qinghua Liu, Hao Liang, Gauri Joshi, and H~Vincent Poor.
\newblock Tackling the objective inconsistency problem in heterogeneous
  federated optimization.
\newblock {\em Advances in neural information processing systems},
  33:7611--7623, 2020.

\bibitem{seide20141}
Frank Seide, Hao Fu, Jasha Droppo, Gang Li, and Dong Yu.
\newblock 1-bit stochastic gradient descent and its application to
  data-parallel distributed training of speech dnns.
\newblock In {\em Fifteenth annual conference of the international speech
  communication association}, 2014.

\bibitem{wen2017terngrad}
Wei Wen, Cong Xu, Feng Yan, Chunpeng Wu, Yandan Wang, Yiran Chen, and Hai Li.
\newblock Terngrad: Ternary gradients to reduce communication in distributed
  deep learning.
\newblock {\em Advances in neural information processing systems}, 30, 2017.

\bibitem{stich2018sparsified}
Sebastian~U Stich, Jean-Baptiste Cordonnier, and Martin Jaggi.
\newblock Sparsified {SGD} with memory.
\newblock {\em Advances in Neural Information Processing Systems}, 31, 2018.

\bibitem{vargaftik2022eden}
Shay Vargaftik, Ran~Ben Basat, Amit Portnoy, Gal Mendelson, Yaniv~Ben Itzhak,
  and Michael Mitzenmacher.
\newblock Eden: Communication-efficient and robust distributed mean estimation
  for federated learning.
\newblock In {\em International Conference on Machine Learning}, pages
  21984--22014. PMLR, 2022.

\bibitem{alistarh2017qsgd}
Dan Alistarh, Demjan Grubic, Jerry Li, Ryota Tomioka, and Milan Vojnovic.
\newblock {QSGD}: Communication-efficient {SGD} via gradient quantization and
  encoding.
\newblock {\em Advances in neural information processing systems}, 30, 2017.

\bibitem{mishchenko2021intsgd}
Konstantin Mishchenko, Bokun Wang, Dmitry Kovalev, and Peter Richt{\'a}rik.
\newblock {IntSGD}: Adaptive floatless compression of stochastic gradients.
\newblock In {\em International Conference on Learning Representations}, 2021.

\bibitem{horvoth2022natural}
Samuel Horv{\'o}th, Chen-Yu Ho, Ludovit Horvath, Atal~Narayan Sahu, Marco
  Canini, and Peter Richt{\'a}rik.
\newblock Natural compression for distributed deep learning.
\newblock In {\em Mathematical and Scientific Machine Learning}, pages
  129--141. PMLR, 2022.

\bibitem{jhunjhunwala2021adaptive}
Divyansh Jhunjhunwala, Advait Gadhikar, Gauri Joshi, and Yonina~C Eldar.
\newblock Adaptive quantization of model updates for communication-efficient
  federated learning.
\newblock In {\em ICASSP 2021-2021 IEEE International Conference on Acoustics,
  Speech and Signal Processing (ICASSP)}, pages 3110--3114. IEEE, 2021.

\bibitem{faghri2020adaptive}
Fartash Faghri, Iman Tabrizian, Ilia Markov, Dan Alistarh, Daniel~M Roy, and
  Ali Ramezani-Kebrya.
\newblock Adaptive gradient quantization for data-parallel {SGD}.
\newblock {\em Advances in neural information processing systems},
  33:3174--3185, 2020.

\bibitem{oh2022communication}
Yongjeong Oh, Namyoon Lee, Yo-Seb Jeon, and H~Vincent Poor.
\newblock Communication-efficient federated learning via quantized compressed
  sensing.
\newblock {\em IEEE Trans. Wireless Commun.}, 22(2):1087--1100, 2022.

\bibitem{tang2018communication}
Hanlin Tang, Shaoduo Gan, Ce~Zhang, Tong Zhang, and Ji~Liu.
\newblock Communication compression for decentralized training.
\newblock {\em Advances in Neural Information Processing Systems}, 31, 2018.

\bibitem{koloskova2019decentralized}
Anastasia Koloskova, Sebastian Stich, and Martin Jaggi.
\newblock Decentralized stochastic optimization and gossip algorithms with
  compressed communication.
\newblock In {\em International Conference on Machine Learning}, pages
  3478--3487. PMLR, 2019.

\bibitem{liu2022decentralized}
Wei Liu, Li~Chen, and Wenyi Zhang.
\newblock Decentralized federated learning: Balancing communication and
  computing costs.
\newblock {\em IEEE Trans. Signal Inf. Process. Networks}, 8:131--143, 2022.

\bibitem{tang2019deepsqueeze}
Hanlin Tang, Xiangru Lian, Shuang Qiu, Lei Yuan, Ce~Zhang, Tong Zhang, and
  Ji~Liu.
\newblock Deepsqueeze: Decentralization meets error-compensated compression.
\newblock {\em arXiv preprint arXiv:1907.07346}, 2019.

\bibitem{lalitha2018fully}
Anusha Lalitha, Shubhanshu Shekhar, Tara Javidi, and Farinaz Koushanfar.
\newblock Fully decentralized federated learning.
\newblock In {\em Third workshop on bayesian deep learning (NeurIPS)},
  volume~2, 2018.

\bibitem{reisizadeh2019exact}
Amirhossein Reisizadeh, Aryan Mokhtari, Hamed Hassani, and Ramtin Pedarsani.
\newblock An exact quantized decentralized gradient descent algorithm.
\newblock {\em IEEE Trans. Signal Process.}, 67(19):4934--4947, 2019.

\bibitem{kovalev2021linearly}
Dmitry Kovalev, Anastasia Koloskova, Martin Jaggi, Peter Richtarik, and
  Sebastian Stich.
\newblock A linearly convergent algorithm for decentralized optimization:
  Sending less bits for free!
\newblock In {\em International Conference on Artificial Intelligence and
  Statistics}, pages 4087--4095. PMLR, 2021.

\bibitem{bekkerman2011scaling}
Ron Bekkerman, Mikhail Bilenko, and John Langford.
\newblock {\em Scaling up machine learning: Parallel and distributed
  approaches}.
\newblock Cambridge University Press, 2011.

\bibitem{chilimbi2014project}
Trishul Chilimbi, Yutaka Suzue, Johnson Apacible, and Karthik Kalyanaraman.
\newblock Project adam: Building an efficient and scalable deep learning
  training system.
\newblock In {\em 11th USENIX symposium on operating systems design and
  implementation (OSDI 14)}, pages 571--582, 2014.

\bibitem{recht2011hogwild}
Benjamin Recht, Christopher Re, Stephen Wright, and Feng Niu.
\newblock Hogwild!: A lock-free approach to parallelizing stochastic gradient
  descent.
\newblock {\em Advances in neural information processing systems}, 24, 2011.

\bibitem{chaturapruek2015asynchronous}
Sorathan Chaturapruek, John~C Duchi, and Christopher R{\'e}.
\newblock Asynchronous stochastic convex optimization: the noise is in the
  noise and {SGD} don't care.
\newblock {\em Advances in Neural Information Processing Systems}, 28, 2015.

\bibitem{max1960quantizing}
Joel Max.
\newblock Quantizing for minimum distortion.
\newblock {\em IRE Transactions on Information Theory}, 6(1):7--12, 1960.

\bibitem{lloyd1982least}
Stuart Lloyd.
\newblock Least squares quantization in {PCM}.
\newblock {\em IEEE Trans. Inf. Theory}, 28(2):129--137, 1982.

\bibitem{jiang2018linear}
Peng Jiang and Gagan Agrawal.
\newblock A linear speedup analysis of distributed deep learning with sparse
  and quantized communication.
\newblock {\em Advances in Neural Information Processing Systems}, 31, 2018.

\bibitem{yu2019linear}
Hao Yu, Rong Jin, and Sen Yang.
\newblock On the linear speedup analysis of communication efficient momentum
  {SGD} for distributed non-convex optimization.
\newblock In {\em International Conference on Machine Learning}, pages
  7184--7193. PMLR, 2019.

\bibitem{yu2019parallel}
Hao Yu, Sen Yang, and Shenghuo Zhu.
\newblock Parallel restarted {SGD} with faster convergence and less
  communication: Demystifying why model averaging works for deep learning.
\newblock In {\em Proceedings of the AAAI Conference on Artificial
  Intelligence}, volume~33, pages 5693--5700, 2019.

\bibitem{lian2017can}
Xiangru Lian, Ce~Zhang, Huan Zhang, Cho-Jui Hsieh, Wei Zhang, and Ji~Liu.
\newblock Can decentralized algorithms outperform centralized algorithms? a
  case study for decentralized parallel stochastic gradient descent.
\newblock {\em Advances in neural information processing systems}, 30, 2017.

\bibitem{wang2021cooperative}
Jianyu Wang and Gauri Joshi.
\newblock Cooperative {SGD}: A unified framework for the design and analysis of
  local-update {SGD} algorithms.
\newblock {\em Journal of Machine Learning Research}, 22, 2021.

\bibitem{stich2018local}
Sebastian~U Stich.
\newblock Local {SGD} converges fast and communicates little.
\newblock In {\em International Conference on Learning Representations}, 2018.

\bibitem{wang2019adaptive}
Shiqiang Wang, Tiffany Tuor, Theodoros Salonidis, Kin~K Leung, Christian
  Makaya, Ting He, and Kevin Chan.
\newblock Adaptive federated learning in resource constrained edge computing
  systems.
\newblock {\em IEEE J. Sel. Areas Commun.}, 37(6):1205--1221, 2019.

\bibitem{reisizadeh2020fedpaq}
Amirhossein Reisizadeh, Aryan Mokhtari, Hamed Hassani, Ali Jadbabaie, and
  Ramtin Pedarsani.
\newblock Fedpaq: A communication-efficient federated learning method with
  periodic averaging and quantization.
\newblock In {\em International Conference on Artificial Intelligence and
  Statistics}, pages 2021--2031. PMLR, 2020.

\bibitem{lecun1998gradient}
Yann LeCun, L{\'e}on Bottou, Yoshua Bengio, and Patrick Haffner.
\newblock Gradient-based learning applied to document recognition.
\newblock {\em Proc. IEEE}, 86(11):2278--2324, 1998.

\bibitem{krizhevsky2009learning}
Alex Krizhevsky, Geoffrey Hinton, et~al.
\newblock Learning multiple layers of features from tiny images.
\newblock 2009.

\end{thebibliography}

\vfill

\end{document}